\newtheorem{theorem}{Theorem}[section]
\newtheorem{proposition}[theorem]{Proposition} 
\newtheorem{lemma}[theorem]{Lemma}
\newtheorem{corollary}[theorem]{Corollary} 
\newtheorem{definition}[theorem]{Definition} 
\newtheorem{remark}[theorem]{Remark} 
\newcommand \la \langle
\newcommand \ra \rangle
\newcommand \del	\partial
\newcommand \eps 	\epsilon 
\newcommand \be 		{\begin{equation}}
\newcommand \ee 		{\end{equation}}
\newcommand{\ras}{r_{ASch}}
\newcommand{\frf}{\mathfrak{F}}
\let\oldmarginpar\marginpar
\renewcommand\marginpar[1]{\-\oldmarginpar[\raggedleft\footnotesize #1]%
{\raggedright\footnotesize #1}}
\author{Gustav Holzegel\footnote{Princeton University, Department of Mathematics, Fine Hall, Washington Road, Princeton, NJ 08544, United States.}
\, and Jacques Smulevici\footnote{
Max-Planck-Institut f\"ur Gravitationsphysik, 
Albert-Einstein Institut, Am M\"uhlenberg 1, 14476 Golm, Germany. }
}
\title{Stability of Schwarzschild-AdS for the spherically symmetric Einstein-Klein-Gordon system}
\begin{document}
\maketitle
\begin{abstract}
In this paper, we study the global behavior of solutions to the spherically symmetric coupled Einstein-Klein-Gordon (EKG) system in the presence of a negative cosmological constant. We prove that the Schwarzschild-AdS spacetimes (the trivial black hole solutions of the EKG system for which $\phi=0$ identically) are asymptotically stable: Small perturbations of Schwarzschild-AdS initial data again lead to regular black holes, with the metric on the black hole exterior approaching a Schwarz\-schild-AdS spacetime.
The main difficulties in the proof arise from the lack of monotonicity for the Hawking mass and the asymptotically AdS boundary conditions, which render even (part of) the orbital stability intricate. These issues are resolved in a bootstrap argument on the black hole exterior, with the redshift effect and weighted Hardy inequalities playing the fundamental role in the analysis. Both integrated decay and pointwise decay estimates are obtained.
\end{abstract}

\tableofcontents

\section{Introduction}
The subject of black hole stability has undergone rapid development in the past ten years. There is now an extensive literature addressing the behavior of linear wave equations on black hole backgrounds \cite{DafRod2, Sterbenz, Toha1, Mihalisnotes,DafRodKerr, Toha2, AndBlue, DafRodsmalla}, the current state of the art being a decay result for $\psi$ satisfying $\Box_g \psi =0$ with $g$ being the metric of a subextremal ($|a|<M$) Kerr spacetime \cite{DafRodlargea}. In addition, there has been progress in developing techniques to address non-linear problems on fixed backgrounds \cite{Luk1,Toha3} and, most recently, preliminary attempts to bridge the gap between these linear and the prospective full non-linear stability problem \cite{Holzegelults}.  

\subsection{Scalar waves and black hole stability problems}
While the problem of Kerr stability will, presumably, require the study of the Bianchi equations as in \cite{ChristKlei}, rather than the scalar wave equation, there is nonetheless a coupled non-linear gravitational system, whose metric evolution is governed entirely\footnote{By this we mean that if $\psi=0$, the spacetime is stationary. The metric $g$ depends nonetheless non-linearly on $\psi$ via the Einstein equations.} by a scalar field $\psi$ satisfying 
\begin{eqnarray} \label{eq:fwe}
\Box_g \psi=0.
\end{eqnarray}
 This is the well-known spherically-symmetric coupled Einstein-scalar field system
\begin{align} \label{ES}
R_{\mu \nu}-\frac{1}{2}g_{\mu \nu}R + \Lambda g_{\mu \nu} =8 \pi \mathbb{T}_{\mu \nu} = 8\pi \left[ \partial_\mu \psi \partial_\nu \psi-\frac{1}{2}g_{\mu \nu} (\partial \psi )^2 \right] 
\end{align}
with $\Lambda$ being the cosmological constant and $g_{\mu \nu}$ a spherically symmetric metric.
The study of this system was initiated in the 1960ies. Over the years, a complete and satisfactory picture of the dynamics has emerged for the asymptotically flat case ($\Lambda=0$). In a sequence of papers \cite{Christodoulou, Christodoulou2, Christodoulou3, Christodoulou4}, Christodoulou proved that generic initial data either evolve into regular black holes or the solution is geodesically complete, with the Bondi mass approaching zero along null-infinity. Moreover, it has been shown in \cite{DafRod} that the metric of black hole solutions must eventually asymptote to a Schwarzschild metric on the domain of outer communication. 

The paper\footnote{Note that in \cite{DafRod}, the system studied is actually the spherically symmetric Einstein-Maxwell-scalar field equations, which reduce to \eqref{eq:fwe}-(\ref{ES}) if the Maxwell field vanishes.}\cite{DafRod} also provides polynomial decay rates for the radiation, commonly known as ``Price's law" in the literature. While \cite{DafRod} is a large data result, it exploits extensively that under spherical symmetry \eqref{eq:fwe}-\eqref{ES} reduce to a system of 1+1 dimensional PDEs by invoking the special analytical  tools available in this setting. On the other hand, it was shown in \cite{DafHol, HolBi9} that the vectorfield techniques developed for the linear wave equation on a \emph{fixed} Schwarzschild background are sufficiently robust to understand the dynamics of the coupled system \eqref{eq:fwe}-(\ref{ES}) in a neighborhood of Schwarzschild, that is to say to prove asymptotic stability of Schwarzschild within this model.\footnote{We remark that the paper \cite{HolBi9} studies a five-dimensional version of the spherically-symmetric Einstein scalar field system, more precisely,  a class of vacuum, $SU\left(2\right)$-symmetric spacetimes, also known as biaxial Bianchi IX. The method, however, easily specializes to the spherically-symmetric coupled scalar field system in four dimensions.} 
This avoids the use of techniques which are special to 1+1 dimensional PDEs and
connects, in a satisfactory manner, the numerous works on the linear wave equation with a non-linear model of gravitational collapse, illustrating at the same time how appropriate the vectorfield method is for these types of non-linear applications.

\subsection[Linear wave equations in asymptotically dS and AdS spacetimes]{Linear wave equations in asymptotically de-Sitter and Anti-de-Sitter spacetimes}
It is natural to ask how these results change when $\Lambda \neq 0$ in (\ref{ES}). For a positive cosmological constant, $\Lambda>0$, the system (\ref{ES}) has been studied (without symmetry assumptions and for more general matter models) to prove stability of the trivial solution, de Sitter space \cite{Friedrich, Ringstrom}. In the black hole context, there has also been work on the linear wave equation $\Box_g \psi = 0$ for $g$ a fixed Schwarzschild-de Sitter metric \cite{DafRoddS, Haefner, Melrose} and more recently, Kerr-de Sitter metric \cite{Dyatlov1, vd:maahkds, Dyatlov2}.

For a negative cosmological constant, $\Lambda = -\frac{3}{l^2}<0$, there are only few results available, the linear problem having recently been addressed in \cite{HolzegelAdS}. In the latter paper, the massive wave equation,
\begin{align} \label{wap}
\Box_g \psi - \frac{2a}{l^2} \psi = 0 \, ,
\end{align}
with the mass of the Klein-Gordon field $a$ satisfying the Breitenlohner-Freedman (BF) bound $-a < \frac{9}{8}$, is studied for a class of stationary spacetimes $\left(\mathcal{M},g\right)$, which are sufficiently close to a slowly rotating Kerr-AdS spacetime. A boundedness result is then proven for a certain class of solutions to (\ref{wap}). The existence and uniqueness (after imposing suitable boundary conditions) of this class of solutions to (\ref{wap}) on any asymptotically AdS spacetime was only assumed in \cite{HolzegelAdS}, with a proof now available in \cite{Holzegelwp}. (See also \cite{Vasy2}, which likewise proves well-posedness of  \eqref{wap} with Dirichlet conditions for asymptotically AdS spacetimes admitting a conformal compactification.). Note that even this local well-posedness statement is non-trivial in view of the non globally-hyperbolic nature of these spacetimes. We refer to the introduction of \cite{HolzegelAdS}, as well as the original work \cite{Breitenlohner} for more motivation and an explanation of the BF-bound.

\subsection{The spherically symmetric Einstein-Klein-Gordon system}
In the present paper, we study the corresponding non-linear coupled gravitational system, the so-called Einstein-Klein-Gordon (EKG) system. As in the asymptotically flat case, the metric evolution is governed by (\ref{wap}) in the sense that if $\psi=0$, then the only solutions are the Schwarzschild-Anti-de-Sitter spacetimes or Anti-de-Sitter, by a simple generalization of Birkhoff's theorem. 

Hence, we are interested in triples $(\mathcal{M},g,\phi)$ such that $(\mathcal{M},g)$ is a spherically-symmetric spacetime satisfying 
\begin{align} \label{EKG}
R_{\mu \nu}-\frac{1}{2}g_{\mu \nu}R+\Lambda g_{\mu \nu}=8 \pi \mathbb{T}_{\mu \nu}, 
\end{align}
\begin{align} \label{emEKG}
\mathbb{T}_{\mu \nu}=\partial_\mu \psi \partial_\nu \psi-\frac{1}{2}g_{\mu \nu} (\partial \psi )^2-\frac{a}{l^2} \psi^2 g_{\mu \nu},
\end{align}
and such that $\psi$ satisfies the Klein-Gordon equation \eqref{wap} associated to $(\mathcal{M},g)$ with mass $a \in \mathbb{R}$ and where we recall that $\Lambda=-\frac{3}{l^2}$, with $l \in \mathbb{R}$. As for the linear case, we shall need a bound on $a$. Here, we will assume that $a$ is below the conformal case (the latter being included):
\begin{equation} \label{confbound}
a \ge -1 \, .
\end{equation}
In fact, several results of this paper will hold under the BF-bound $a > -\frac{9}{8}$ since the only argument which really exploits $a\ge-1$ is contained in the proof of the integrated decay estimate (Proposition \ref{intdec3}).

%As for the linear case, we shall assume the BF-bound on $a$,
%\begin{equation}
%a > -\frac{9}{8} \, .
%\end{equation}
The study of the EKG system with $\Lambda < 0$ was initiated in \cite{gs:lwp}, where we prove that this system is well-posed under appropriate regularity and boundary conditions, with the time of existence of solutions depending only on an invariant $H^2$-type norm for the Klein-Gordon-field. As applications, we formulated two extension principles and obtained the existence and uniqueness of a maximal development. These results form the basis for the analysis conducted in this paper. 
%\footnote{See the remarks at the end of the introduction.}

\subsection{The Stability of Schwarzschild-Anti-de-Sitter spacetimes}
The main result of this paper may be summarized as follows:
\begin{theorem}
The Schwarzschild-Anti de Sitter spacetime is both orbitally and asymptotically stable within EKG in spherical symmetry provided the mass satisfies (\ref{confbound}). 
\end{theorem}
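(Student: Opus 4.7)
The plan is to work in the domain of outer communication of the perturbed black hole, parametrized by double null coordinates $(u,v)$ adapted to the future event horizon, and to set up a bootstrap argument for the coupled system. The dynamical variables are the area-radius $r$, the conformal factor $\Omega$ (with $g = -\Omega^2\, du\, dv + r^2 d\sigma^2$), the renormalized Hawking mass $\varpi = m + \frac{r^3}{2l^2}$ (whose Schwarzschild-AdS value is constant), and the Klein-Gordon field $\phi$, which is required to satisfy the Dirichlet-type asymptotics at conformal infinity supplied by \cite{gs:lwp}. The bootstrap assumptions should encode (i) closeness of $r$, $\Omega$ and $\varpi$ to their Schwarzschild-AdS values in weighted norms, (ii) an integrated local energy decay statement for $\phi$ with appropriate $r$-weights up to the conformal boundary, and (iii) boundedness of a higher-order invariant energy controlling the same quantities. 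The goal is to improve (ii) and (iii) using the wave equation and to propagate the closeness (i) through the Raychaudhuri and constraint equations.

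The analytic heart is a Morawetz-type integrated local energy decay estimate for $\phi$ on the perturbed geometry, together with a non-degenerate energy boundedness statement. Boundedness would follow from the stationary vector field $T$ (timelike in the exterior) combined with a redshift vector field $N$ in a neighborhood of the horizon, in the spirit of Dafermos--Rodnianski, with error terms controlled by the bootstrap smallness. The Morawetz multiplier, of the form $f(r)\partial_r$, must be adapted both to the degeneracy at the photon sphere and to the AdS falloff; the hypothesis $a\ge-1$ enters precisely in the boundary analysis at conformal infinity, where a weighted Hardy inequality is used to absorb the Klein-Gordon mass term into the radial derivative part of the energy with a coercive remainder. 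This is the point at which the stronger assumption beyond the bare BF bound is needed, consistent with the comment made after \eqref{confbound} in the excerpt.

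The main obstacle I expect is the lack of monotonicity of the Hawking mass: unlike the asymptotically flat case, one cannot extract $r$-weighted bounds on the matter energies directly by integrating Raychaudhuri along outgoing null rays, because the sign structure is destroyed by the $\Lambda$-terms. Instead one must run the matter and geometric estimates in tandem, controlling $\varpi$-oscillations by the integrated energy of $\phi$, while the energy estimate itself requires the geometry to be close to Schwarzschild-AdS. Closing this circular dependence is where the bootstrap actually gains a small constant, exploiting the smallness of the initial perturbation together with the strict sign produced by the Hardy inequality under $a\ge -1$.

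Finally, with integrated decay established, I would commute the wave equation with $T$ and with a redshift-adapted derivative near the horizon to obtain higher-order energy decay, from which pointwise decay of $\phi$ follows by Sobolev embedding on the $(u,v)$ quotient. Feeding this pointwise decay into the constraints and evolution equations for $r$, $\Omega$, $\varpi$ yields pointwise convergence of the geometry to a member of the Schwarzschild-AdS family in the exterior, delivering both orbital and asymptotic stability. Global existence of the regular black hole exterior then follows by combining these uniform-in-time estimates with the extension principles from \cite{gs:lwp}.
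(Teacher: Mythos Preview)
Your outline is broadly aligned with the paper's strategy (bootstrap, Hardy inequalities, redshift, integrated decay, commutation with the Kodama field $T$, then pointwise/exponential decay and completeness of $\mathcal{I}$ via the extension principles of \cite{gs:lwp}), but several of the specifics are off in ways that would cause trouble if you tried to execute the argument.

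First, there is \emph{no trapping} here: in spherical symmetry the angular part of the wave equation is absent, so there is no photon sphere obstruction and the Morawetz multiplier does not need to degenerate anywhere in the bulk. The paper makes this point explicitly. If you design your multiplier around a photon sphere you will be fighting a nonexistent enemy. Second, the loss of monotonicity of the Hawking mass is \emph{not} due to the $\Lambda$-terms but to the zeroth-order Klein-Gordon mass term $\frac{a}{l^2}\phi^2$ in $\mathbb{T}_{\mu\nu}$, which spoils the dominant energy condition. This matters because the remedy is correspondingly different: the paper restores positivity of the $\varpi$-fluxes \emph{only in the region $r\ge r_Y$} via a doubly-weighted Hardy inequality (Lemma~\ref{Lemma4} and Lemma~\ref{lem:vhardy}), while near the horizon it uses the redshift \emph{pointwise along characteristics} (Lemma~\ref{zetahozbound}), not just as an $L^2$ multiplier. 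The bootstrap assumptions are accordingly localized (separate smallness in $r\le r_Y$ and $r\ge r_Y$), rather than being global closeness statements for $r,\Omega,\varpi$ as you propose.

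Third, your placement of the hypothesis $a\ge -1$ is not quite right. It does \emph{not} enter in the boundary analysis at $\mathcal{I}$; the Hardy inequalities that control the mass flux near infinity only need the BF bound $a>-\tfrac{9}{8}$. The restriction $a\ge -1$ is used precisely once, in the spacetime integrated decay estimate (Proposition~\ref{intdecest}): with the multiplier $\mathfrak{F}=-\tfrac{1}{r^2}$ one must absorb the zeroth-order term by the radial derivative term, and the inequality $1 - |a|\bigl(1+\tfrac{\varpi l^2}{r^3}-4\pi a\phi^2\bigr)^{-1}\ge \tfrac{Ml^2}{2r^3}$ holds exactly for $-1\le a\le 0$. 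Finally, the paper obtains pointwise control not via Sobolev on the quotient but by direct integration from $\mathcal{I}$ (where $\phi$ vanishes) and redshift integration in $v$; and the integrated decay estimate, because of the AdS $r$-weights, controls the energy itself without loss, which is what yields \emph{exponential} (not polynomial) decay in $v$.
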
 \label{th:mtheorem}
In other words, small perturbations of Schwarzschild-AdS spacetimes again evolve into black hole spacetimes with a regular event horizon and a complete null-infinity (orbital stability).  Moreover, the scalar field decays towards the future (asymptotic stability). A more precise version of Theorem \ref{th:mtheorem}, which includes in particular the specific decay rates is given in Section \ref{se:tmt} (Theorem \ref{maintheorem}).

We spend the remainder of this introduction to discuss the techniques entering the proof of this theorem. % and also the precise asymptotic decay statement that can be obtained.
%\begin{itemize}
%\item Orbital stability and the loss of 
As in the asymptotically-flat case, the system \eqref{wap}-(\ref{EKG})-\eqref{emEKG} comes with a conservation law, which is manifest in the properties of a generalized Hawking mass (renormalized by a cosmological term). Contrary to the former case case, however, this Hawking mass does not enjoy any monotonicity properties, in view of the possibly negative zeroth order mass term in (\ref{emEKG}). This can also be understood from the vectorfield point of view: There is a certain geometric vectorfield, $T$, which is not Killing but nevertheless gives rise to a conservation law via the energy momentum tensor of $\psi$. This is the well-known Kodama vectorfield in spherical symmetry \cite{Kodama}. From this perspective, the failure of monotonicity is simply the energy momentum tensor (\ref{emEKG}) not satisfying the dominant energy condition. In any case, this behavior implies that the extension principle developed for the asymptotically-flat case (see \cite{DafRen,Mihali1, Mihali2}) is not available, since it relied on the monotonicity properties of the Hawking mass. While the general structure of the Penrose diagram can in fact still be inferred from a generalized extension principle (originally developed in \cite{CamJon} and which was adapted to our problem in \cite{gs:lwp}),  the lack of monotonicity turns proving the completeness of null-infinity (which is part of the \emph{orbital} stability statement of Schwarzschild-AdS within \eqref{wap}-(\ref{EKG})-\eqref{emEKG} into a difficult problem, while for (\ref{ES}) it followed (for any data containing a trapped surface) from the extension principle and the monotonicity alone.

Studying the massive wave equation on Schwarzschild in \cite{HolzegelAdS}, it was observed that while the energy density is not pointwise positive definite, one can still prove a global doubly-weighted Hardy inequality on spacelike slices, allowing one to absorb the zeroth order term by a derivative term and hence to establish positivity in an integrated sense. It turns out that for the non-linear system \eqref{wap}-(\ref{EKG})-\eqref{emEKG}, this integrated monotonicity survives in a region away from the future event horizon (whose location is bootstrapped in the non-linear problem under consideration). Close to the horizon, on the other hand, we have the redshift available, which will allow us to control the wrong-signed zeroth order term. We will see two manifestations of the redshift here: One in the framework of vectorfields (and hence $L^2$-type estimates), the other in the context of pointwise estimates along characteristics (originally developed in \cite{DafRod}), the latter being a typical feature of 1+1 dimensional systems.

It is instructive to compare this situation with the case of the linear wave equation on slowly rotating asymptotically-flat Kerr spacetimes. For these spacetimes, there is a conserved energy  associated with the Killing field $\partial_t$, which is negative in a region close to the horizon, as the latter vectorfield becomes spacelike there. This is the well-known ergo-sphere and the phenomenon of superradiance that it triggers. One of the main insights of \cite{DafRodKerr} was that this problem can be resolved by exploiting the redshift effect as a stability mechanism near the horizon. It is very much in this fashion that we are able to remedy the problems near the horizon for \eqref{wap}-(\ref{EKG})-\eqref{emEKG}. Our setting is easier in that we are dealing with a highly symmetric problem (in particular, there is no trapping!), but at the same time it contains new difficulties since we are addressing a fully coupled problem and since here the pointwise non-positivity is actually a global feature.

Following this strategy, which is unfolded in a bootstrap argument on the location of the event horizon, we will prove that the scalar field $\psi$ remains small outside the future event horizon, provided the initial data is chosen sufficiently small. In a second step, we prove an integrated decay estimate, which implies that $\psi$ has to decay towards the future. %In particular, there is none of the aforementioned ``scalar hair" near Schwarzschild for \eqref{wap}-(\ref{EKG})-\eqref{emEKG} for the class of solutions (boundary conditions on $\psi$) considered, cf.~\cite{gs:lwp}. 
We remark that the existence of such an integrated decay estimate is already suggested from the global Hardy inequalities proven in \cite{HolzegelAdS} in the linear setting, see Appendix \ref{noha}. 
 
 This almost completes the proof of the theorem except for an issue which has to do with the radial decay towards null-infinity. We recall from \cite{gs:lwp} that the well-posedness statement is formulated in terms of a weighted $H^2$-norm for $\psi$. To establish global existence in this functional space, we need to also commute with the vectorfield $T$ mentioned above. This introduces a fair amount of error-terms (as $T$ is not Killing) which are, however, easily controlled from previous bounds and by adding an $\epsilon$ of the integrated decay estimate that we can prove simultaneously for $T\psi$. It then follows that $T\psi$ satisfies the same boundedness and decay estimates as $\psi$ does, which in particular establishes improved decay for $T\psi$. From this, the radial decay for \emph{all} first derivatives can be improved, depending on how close $a$ is to the Breitenlohner-Freedman bound (the closer, the smaller the improvement). This last step uses a version of the redshift effect which is present for asymptotically AdS spacetimes near null-infinity.\footnote{This improvement arising after commutation has been exploited in \cite{Holzegelwp} in the context of weighted elliptic estimates on spatial slices and in \cite{gs:lwp} in the context of pointwise estimates.}  Collecting the improvements one recovers global uniform boundedness in the $H^2$-spaces of \cite{gs:lwp}. 
 
Finally, one proves that the integrated decay estimate implies exponential decay of the energy. This relies on the characteristic $r$-weights in the energy of asymptotically AdS spacetimes: Unlike in the asymptotically-flat context, the integrated decay estimate here controls the energy integrated in time without any loss of $r$-weights. From the statement that the energy integrated in time is controlled by the energy itself, exponential decay follows.\footnote{Note that the method of proof used in this paper to derive the decay statements of Section \ref{se:tmt} may naturally be applied to the linear case, cf.~Corollary \ref{lincol}.}
  
In summary, the paper settles (excluding questions about black hole interiors) the issue of global dynamics for (EKG) for the mass range \eqref{confbound} and with Dirichlet boundary conditions for $\psi$ near the Schwarzschild-AdS solution.

\subsection{Hairy black holes and motivations from high-energy physics}
We conclude this introduction by providing some background information about the system \eqref{wap}-(\ref{EKG})-\eqref{emEKG}. Indeed, another important motivation for the study of this system derives from high energy physics, more precisely, the AdS-CFT correspondence and its potential applications to condensed matter physics  \cite{Winstanley,Hartnoll, Gauntlett}. In this field, systems such as \eqref{wap}-(\ref{EKG})-\eqref{emEKG} (coupled often also to electromagnetism or complex scalar fields) are considered as models describing phase transitions in superconductors. From the gravitational point of view, such phase transitions correspond to non-trivial (i.e.~with non-identically vanishing scalar field) stationary black hole spacetimes, also known as black holes with ``scalar hair". Such solutions are possible in principle because the ``no-hair" theorems valid in the asymptotically flat case do not (typically) generalize to the case of a negative cosmological constant. The main result of this paper excludes the existence of such hairy black holes, in a neighboorhood of the the Schwarzschild-AdS solution, within the class of boundary conditions on $\psi$ considered. 

As suggested from asymptotic expansions of (\ref{wap}), there is an important alternative class of boundary conditions, Neumann-conditions for $\psi$.\footnote{at least in the range $\frac{5}{8} \leq -a < \frac{9}{8}$. For $-a < \frac{5}{8}$, there is only one solution and no freedom to impose boundary conditions.} For this class, one could attempt to carry out a similar program as in \cite{HolzegelAdS, gs:lwp}: Prove a well-posedness statement for (\ref{wap}) and a thereafter for the non-linear \eqref{wap}-(\ref{EKG})-\eqref{emEKG} now imposing Neumann conditions at the boundary. The global dynamics under these circumstances may be more complicated, even in a neighborhood of Schwarzschild. In particular, the aforementioned hairy black hole solutions may enter the picture. We postpone the analysis of this system to future work.

\subsection{Outline}
The outline of this paper is as follows. Section \ref{se:pre} contains all the necessary background material needed to carry out the analysis of this paper. In particular, some properties of the Schwarzschild-AdS solution are presented (Section \ref{se:sads}), a functional framework adapted to our problem is introduced (Section \ref{se:norm}) and the existence of a maximal development is recalled (Section \ref{se:tmd}). After these preliminaries, we present a detailed version of our main results in Section \ref{se:tmt}. In Section \ref{mp1}, we derive integrated decay estimates via vectorfield methods. In Section \ref{se:vfid}, we obtain higher-order estimates as well as improved decay estimates. Finally, in the last section of the paper, we conclude the proof of our main result by deriving exponential decay estimates. 
In Appendix \ref{noha}, we present an independent result, establishing the non-existence of stationary solutions for the linear wave equation on Schwarzschild-AdS backgrounds satisfying the boundary conditions of \cite{HolzegelAdS, Holzegelwp}.

\section{Preliminaries} \label{se:pre}
\subsection[The Einstein-Klein-Gordon equations in null coordinates]{The spherically-symmetric Einstein-Klein-Gordon\\ system in double null coordinates}
We start by recalling a standard result concerning the warped product structure of the metric for spherically symmetric solutions and the form of the equations in double null coordinates:

\begin{lemma} \label{lem:pre}
Let $(\mathcal{M},g,\phi)$, with $(\mathcal{M},g)$ a $C^2$ Lorentzian manifold, dim $\mathcal{M}=4$ and $\phi$ a $C^2(\mathcal{M})$ function, be a solution to the system \eqref{wap}-\eqref{EKG}-\eqref{emEKG}. Assume that  $(\mathcal{M},g,\phi)$ is invariant under an effective action of $SO(3)$ with principal orbit type a $2$-sphere. Denote by $r$ the area-radius of the spheres of symmetry. Then, locally around any point of $\mathcal{M}$, there exist double null coordinates $u,v$ such that the metric takes the form:
\begin{eqnarray} \label{eq:metric}
g=-\Omega^2 du dv + r^2 d\sigma_{S^2},
\end{eqnarray}
where $\Omega$ and $r$ may be identified with $C^2$ functions depending only on $(u,v)$ and where $d\sigma_{S^2}$ denotes the standard metric on $S^2$.
Let $\mathcal{Q}=\mathcal{M}/SO(3)$ denote the quotient of the spacetime by the orbits of symmetry. 
Then, the Einstein-Klein-Gordon equations\footnote{By a small abuse of notation, we denote functions on $\mathcal{M}$ and their projections to $\mathcal{Q}$ by the same symbols.} reduce to:
\begin{eqnarray}
\partial_u \left( \frac{r_u}{\Omega^2} \right) &=&-4\pi r \frac{(\partial_u \phi)^2}{\Omega^2}, \label{cons1} \\
\partial_v \left( \frac{r_v}{\Omega^2} \right) &=&-4\pi r \frac{(\partial_v \phi)^2}{\Omega^2}, \label{cons2} \\
r_{uv}&=&-\frac{\Omega^2}{4r}-\frac{r_u r_v}{r}+4\pi r (\frac{a \Omega^2 \phi^2}{2l^2})+\frac{1}{4}r \Omega^2 \Lambda, \label{eq:ruv}\\
 \left (\log \Omega \right)_{uv} &=& \frac{\Omega^2}{4r^2}+\frac{r_u r_v }{r^2}-4 \pi \partial_u \phi \partial_v \phi, \\
\partial_u \partial_v \phi  &=& -\frac{r_u}{r} \phi_v - \frac{r_v}{r} \phi_u - \frac{\Omega^2 a}{2l^2} \phi. \label{laste}
\end{eqnarray}
\end{lemma}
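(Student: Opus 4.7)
The proof splits naturally into two parts: (i) a symmetry-reduction argument producing the warped-product form \eqref{eq:metric}, and (ii) a direct coordinate calculation extracting the five equations from \eqref{EKG}, \eqref{emEKG} and \eqref{wap}.

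For (i), the plan is to form the quotient $\mathcal{Q} = \mathcal{M}/SO(3)$, which away from fixed points is a smooth two-dimensional manifold. Since the principal orbits are $2$-spheres, hence spacelike, the orthogonal complement to the orbit distribution is a rank-$2$ Lorentzian-signature subbundle that descends to a $C^2$ Lorentzian metric on $\mathcal{Q}$; the area-radius $r$ and the $SO(3)$-invariant scalar $\phi$ similarly descend to $C^2$ functions on $\mathcal{Q}$. Using the standard fact that every two-dimensional Lorentzian metric is locally conformally flat, I would then introduce local null coordinates $(u,v)$ on $\mathcal{Q}$ in which the quotient metric takes the form $-\Omega^2 \, du\, dv$ for a positive $C^2$ function $\Omega(u,v)$. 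This yields the decomposition \eqref{eq:metric}.

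For (ii), the strategy is to compute the non-zero Ricci components of \eqref{eq:metric} in the $(u,v,\theta,\varphi)$ basis --- by spherical symmetry and block-diagonality only $R_{uu}$, $R_{vv}$, $R_{uv}$ and $R_{\theta\theta}$ are needed --- and to pair them with the corresponding components of $\mathbb{T}_{\mu\nu}$ from \eqref{emEKG}. A short warped-product computation gives
\[ R_{uu} = -\frac{2\Omega^2}{r}\,\partial_u\!\left(\frac{r_u}{\Omega^2}\right), \qquad \mathbb{T}_{uu} = (\partial_u\phi)^2, \]
with analogous expressions in $v$; the kinetic-trace and mass contributions to $\mathbb{T}_{uu}$ vanish because $g_{uu} = 0$, and $\Lambda g_{uu}$ likewise drops from \eqref{EKG}. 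Reading off the $uu$ and $vv$ components of \eqref{EKG} then produces the constraints \eqref{cons1}--\eqref{cons2}. For the $uv$ component, $g_{uv} = -\Omega^2/2 \ne 0$, so the cosmological and mass terms survive; using $g^{uv} = -2/\Omega^2$ one finds that the two $\partial_u\phi\,\partial_v\phi$ contributions in $\mathbb{T}_{uv}$ cancel against the trace term, leaving $\mathbb{T}_{uv} = \frac{a\Omega^2}{2l^2}\phi^2$, and combining with the standard expression for $R_{uv}$ produces \eqref{eq:ruv}. The angular component of \eqref{EKG} is handled in the same way to give the equation for $(\log \Omega)_{uv}$.

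The Klein--Gordon equation \eqref{laste} is obtained by a direct expansion of \eqref{wap}. With $\sqrt{-g} = \tfrac{1}{2}\Omega^2 r^2 \sin\theta$ and $g^{uv} = -2/\Omega^2$, the d'Alembertian of an $SO(3)$-invariant scalar simplifies to
\[ \Box_g \phi \;=\; -\frac{4}{\Omega^2}\!\left(\phi_{uv} + \frac{r_u\phi_v + r_v\phi_u}{r}\right), \]
and substituting into $\Box_g \phi = (2a/l^2)\phi$ yields \eqref{laste}. The only real obstacle is bookkeeping in the Ricci calculation for the warped product: it is standard but requires care with the conformal factors $\Omega$ and with the identifications $g_{uv} = -\Omega^2/2$, $g^{uv} = -2/\Omega^2$. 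Everything else is algebraic substitution into the Einstein equations component by component, and the argument is local on $\mathcal{Q}$ so no global assumption beyond the hypotheses of the lemma is used.
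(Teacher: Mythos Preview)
Your proposal is correct and follows the standard route. The paper itself does not prove this lemma: it is introduced as ``a standard result concerning the warped product structure of the metric for spherically symmetric solutions and the form of the equations in double null coordinates'' and stated without proof. Your two-step outline (symmetry reduction to the warped product via local conformal flatness of the two-dimensional quotient, followed by componentwise extraction of the Einstein and Klein--Gordon equations) is exactly how one establishes this, and the specific identifications you record --- $R_{uu} = -\tfrac{2\Omega^2}{r}\partial_u(r_u/\Omega^2)$, $\mathbb{T}_{uu} = (\partial_u\phi)^2$, $\mathbb{T}_{uv} = \tfrac{a\Omega^2}{2l^2}\phi^2$, and the d'Alembertian formula --- match the paper's subsequent collected identities, so there is nothing to correct.
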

Note that (\ref{cons1}) and (\ref{cons2}) are the Raychaudhuri equations governing the evolution of area of the spheres of symmetry. Note also that the last equation is simply the the wave operator associated with $g$ acting on spherically symmetric scalar fields, which may be written shorthand as
\begin{align}
0 = \Box_g \phi - \frac{2a\phi}{l^2} = -\frac{4}{\Omega^2} \left( \partial_u \partial_v \phi +\frac{r_u}{r} \phi_v + \frac{r_v}{r} \phi_u \right) - \frac{2a\phi}{l^2}.
\end{align}
We shall use the following first order notation:
\begin{align} \label{eq:fon}
r_u= \nu \ \ ; \ \ r_v= \lambda  \ \ ; \ \ 
r \phi_u = \zeta  \ \ ; \ \ r \phi_v=  \theta \ \ ; \ \ \kappa = -\frac{\Omega^2}{4r_u} \ \ ; \ \  \gamma = \frac{\Omega^2}{4r_v} \, .
\end{align}
We can then rewrite the Raychaudhuri equations as
\begin{align} \label{kappaR}
\partial_u \kappa = - \frac{\Omega^2}{\nu^2} r \pi (\partial_u \phi )^2 
= -\frac{16}{\Omega^2} \kappa^2 r \pi  (\partial_u \phi )^2 < 0  \ \ \ ; \ \ \ 
\partial_u \log \kappa = \frac{4 \pi r}{\nu} (\partial_u \phi )^2 
\end{align}
and
\begin{align} \label{gammaR}
\partial_v \log \gamma= \frac{4 \pi r}{\lambda} (\partial_v \phi )^2 .
\end{align}
We define the (renormalized) Hawking mass,
\begin{eqnarray}
\varpi = \frac{r}{2}\left(1+\frac{4r_u r_v}{ \Omega^2} \right) - \frac{\Lambda}{6} r^3 = \frac{r}{2}\left(1+\frac{4r_u r_v}{ \Omega^2} \right) + \frac{r^3}{2l^2} \, , 
\end{eqnarray}
which is seen to satisfy
\begin{eqnarray} 
\partial_u \varpi = -8 \pi r^2 \frac{r_v}{\Omega^2} (\partial_u \phi )^2+\frac{4\pi r^2 a }{l^2} r_u \phi^2 \, , \label{ee:varpiu}\\
\partial_v \varpi = -8 \pi r^2 \frac{r_u}{\Omega^2} (\partial_v \phi )^2+\frac{4\pi r^2 a }{l^2} r_v \phi^2 \, . \label{ee:varpiv}
\end{eqnarray}
We finally collect some further identities, which will be useful to refer to in later computations.
The volume element is $\sqrt{-g} =\frac{\Omega^2 r^2 }{2}$ and hence
\begin{align}
\sqrt{-g} g^{uv}= \frac{\Omega^2 r^2}{2} \left(\frac{-2}{\Omega^2}\right)=- r^2.
\end{align}
The following identities hold for the Christoffel symbols: $\Gamma^a_{\phantom{a}uv}=0$ for $a=\{u,v,\theta,\phi\}$ and
\begin{align}
\Gamma^u_{\phantom{u}uu} = g^{uv} \left(g_{uv}\right)_u = \frac{2\Omega_u}{\Omega}   \textrm{ \ \ \ \ ,   \ \ \ \ } \Gamma^v_{\phantom{v}vv} = g^{uv} \left(g_{uv}\right)_v = \frac{2\Omega_v}{\Omega}  \, ,
\end{align}
\begin{align}
g^{ab} \Gamma^u_{\phantom{u}ab} = \frac{4r_v}{r \Omega^2} = \frac{1}{r \gamma} \textrm{ \ \ \ \ \ \ \ \ , \ \ \ \ \ \ } g^{ab} \Gamma^v_{\phantom{v}ab} = \frac{4r_u}{r \Omega^2} = - \frac{1}{r \kappa} \, .
\end{align}
The square of the gradient of $\phi$ is
\begin{align}
g(\nabla \phi, \nabla \phi ) = \frac{- 4}{\Omega^2} \phi_u \phi_v .
\end{align}
Finally, the wave equation for $r$ may be rewritten using the Hawking mass $\varpi$:
\begin{equation}
r_{uv}=-\frac{\Omega^2 \varpi}{2r^2}-\frac{\Omega^2 r}{2 l^2}+\frac{2\pi r a \Omega^2 \phi^2}{l^2} \, .
\end{equation}
\subsection{Schwarzschild-AdS} \label{se:sads}
We recall that, by definition, a Schwarzschild-AdS spacetime is any maximally analytically extended spherically-symmetric solution of the vacuum Einstein equations $R_{\mu \nu}=\Lambda g_{\mu \nu}$, with $\Lambda < 0$, excluding the Anti-de-Sitter spacetime. For a given $\Lambda$, the set of all such solutions forms a one parameter family, the parameter typically called the mass and denoted by $M$.
%By (a generalization of) Birkhoff's theorem, we have, for any solution of the Einstein-Klein-Gordon system, if $\phi=0$ identically for $\left(\mathcal{M},g\right)$, then $g$ is locally isometric to part of a Schwarzschild $AdS$ spacetime. 

Consider a maximally extended Schwarzschild-AdS spacetime of mass $M$ with cosmological constant $\Lambda = \frac{-3}{l^2}$. Its Penrose diagram is depicted below.\footnote{We refer the reader to appendix C of \cite{DafRod} for a formal introduction to Penrose diagrams.}
\[
\begin{picture}(0,0)%
\includegraphics{penAdS0.pstex}%
\end{picture}%
\setlength{\unitlength}{1737sp}%
\begingroup\makeatletter\ifx\SetFigFont\undefined%
\gdef\SetFigFont#1#2#3#4#5{%
  \reset@font\fontsize{#1}{#2pt}%
  \fontfamily{#3}\fontseries{#4}\fontshape{#5}%
  \selectfont}%
\fi\endgroup%
\begin{picture}(3804,3706)(1147,-4314)
\put(3901,-2686){\makebox(0,0)[lb]{\smash{{\SetFigFont{5}{6.0}{\rmdefault}{\mddefault}{\updefault}{\color[rgb]{0,0,0}$I$}%
}}}}
\put(2776,-1486){\makebox(0,0)[lb]{\smash{{\SetFigFont{5}{6.0}{\rmdefault}{\mddefault}{\updefault}{\color[rgb]{0,0,0}$II$}%
}}}}
\put(3826,-1111){\makebox(0,0)[lb]{\smash{{\SetFigFont{5}{6.0}{\rmdefault}{\mddefault}{\updefault}{\color[rgb]{0,0,0}$\mathcal{H}^+$}%
}}}}
\put(4951,-2611){\makebox(0,0)[lb]{\smash{{\SetFigFont{5}{6.0}{\rmdefault}{\mddefault}{\updefault}{\color[rgb]{0,0,0}$\mathcal{I}$}%
}}}}
\put(4126,-1861){\rotatebox{315.0}{\makebox(0,0)[lb]{\smash{{\SetFigFont{5}{6.0}{\rmdefault}{\mddefault}{\updefault}{\color[rgb]{0,0,0}$v=v_0$}%
}}}}}
\put(3151,-3661){\makebox(0,0)[lb]{\smash{{\SetFigFont{5}{6.0}{\rmdefault}{\mddefault}{\updefault}{\color[rgb]{0,0,0}$v$}%
}}}}
\put(2251,-3586){\makebox(0,0)[lb]{\smash{{\SetFigFont{5}{6.0}{\rmdefault}{\mddefault}{\updefault}{\color[rgb]{0,0,0}$u$}%
}}}}
\end{picture}%

\]
In the familiar Eddington-Finkelstein double-null coordinate system $(U,v) \in \left(-\infty,\infty\right) \times \left(-\infty,\infty\right)$ the metric can be written as
\begin{align} \label{efc}
g_{SAdS} = - \left(1-\frac{2M}{r} + \frac{r^2}{l^2} \right) dU dv + r^2 \left(d\theta^2 + \sin^2 \theta d\varphi^2\right) \, ,
\end{align}
where $r=r\left(U,v\right)$ is the area radius, satisfying $r_U = -r_v = -\frac{1}{2}  \left(1-\frac{2M}{r} + \frac{r^2}{l^2} \right) $. This coordinate system covers only a subset of maximally extended Schwarzschild-AdS, which we denote by $I$ in the above Penrose diagram.
This region is commonly known as the domain of outer communications and may be described as $J^-\left(\mathcal{I}\right) \cap J^+\left(\mathcal{I}\right)$ once a suitable asymptotic notion of null infinity, $\mathcal{I}$, has been defined \cite{MITnotes}. The metric (\ref{efc}) is apparently singular where $r=r_{ASch}$, with $r_{ASch}$ being the unique real zero of the function $1-\frac{2M}{r} + \frac{r^2}{l^2}$.\footnote{We recall that $
r_{ASch} = p_+ + p_-  \textrm{ \ \ \ with \ \ \ \ } p_\pm = \left(Ml^2 \pm \sqrt{M^2 l^4 + \frac{l^6}{27}}\right)^\frac{1}{3}$
and hence that $pq = -\frac{l^2}{3}$ as well as $\frac{Ml^2}{r_{ASch}^3} = \frac{Ml^2}{2Ml^2 - \frac{r_{ASch}l^2}{3}}> \frac{1}{2}$.
\label{funoc}
}
It is well-known that the metric can be extended to values $r \leq r_{ASch}$ by a simple coordinate transformation and that the set where $r=r_{ASch}$ corresponds to two null-hypersurfaces $\mathcal{H}^+$ and $\mathcal{H}^-$ (horizons) intersecting in a bifurcate sphere (the center of the diagram above).
In the coordinate system (\ref{efc}), the future event horizon of the black hole, $\mathcal{H}^+$, is approached as $U \rightarrow \infty$. Note also that $\kappa=\gamma=\frac{1}{2}$ everywhere in the domain of outer communications.

Consider now a subregion of the maximally extended Schwarzschild spacetime, defined as the causal future of some ingoing null ray $N\left(v_0\right)$ (located at $v=v_0$ in the Eddington Finkelstein coordinates and with $inf_u N\left(v_0\right)=U_0$). lying to the future of the bifurcate sphere, as indicated by the shaded region in the figure above. 
As mentioned, one can introduce a regular coordinate system covering the entire causal future of $N\left(v_0\right)$ by simply changing the $u$-coordinate.  A convenient choice consists in setting
$u_0=U_0$ and
\begin{align}
 u=u_0  + \pi \cdot l - 2l \arctan \left(\frac{r}{l}\right) \textrm{ \ \ \ \ on $N\left(v_0\right)$}
\end{align}
while keeping the $v$-coordinate fixed (i.e.~still $\kappa = \frac{1}{2}$ globally). This defines a regular coordinate system in the shaded region. Note that the event horizon is now located at $u_{hoz} = u_0 + 2 l \left(\frac{\pi}{2} - \arctan \left(\frac{r_{ASch}}{l}\right)\right)$, while the $v$ coordinate still has infinite range. Note that in this coordinate system $u - v$ is not constant for past limit points of the null rays $N\left(v_0\right)$ (i.e.~along $\mathcal{I}$ in the geometric language)

In the next section, we will consider perturbations of the Schwarzschildean data on $v=v_0$ with respect to this regular coordinate system. 

\subsection{Perturbed Schwarzschild-AdS data}
We are going to specify initial data on a ray $N(v_0)$ as in \cite{gs:lwp}. There, a class of initial data, denoted $\mathcal{C}^{1+k}_{a,M}(\mathcal{N})$, was defined for any interval $\mathcal{N}=(u_0,u_1]$ and an explicit construction of such data was given. 
For convenience, we recall this construction in this section as this will enables to introduce the smallness condition on the matter fields.

Let us thus consider $N(v_0)$, which is of the form $(u_0,u_1]$, as our initial interval. 
%The initial data will then consists in a couple $(r,\phi)$ being at least $C^3(N_{v_0}) \times C^2(N_{v_0})$.

We wish to keep the $u$-coordinate introduced in the previous section, which was regular at the horizon. Hence, we define $\bar{r}$ to be the solution of $\frac{\bar{r}_u}{1+\frac{\bar{r}^2}{l^2}} = -\frac{1}{2}$, with $\bar{r}$ tending to $\infty$ at $u_0$.
%
%We keep the regular $u$-coordinate of the Schwarzschild metric, by setting
%
\subsubsection*{The free-data}
The free data then consists in a $C^2$-function $\bar{\phi} : N\left(v_0\right) \rightarrow \mathbb{R}$ satisfying the smallness bound
\begin{align} \label{pointwise}
 \bar{r}^{\frac{3}{2} + \frac{1}{2} s}\left( |\bar{\phi}| + | \bar{r} \frac{\bar{\phi}_u}{\bar{r}_u}|\right) + \Big| \bar{r}^\frac{7}{2} \frac{\partial_u \frac{\bar{\phi}_u}{\bar{r}_u}}{\bar{r}_u} \Big|  \leq \epsilon \textrm{\ \ \ everywhere on $N\left(v_0\right)$} \, ,
\end{align}
where $s=\min \left(\sqrt{9+8a},2\right)$, and in addition, being such that the combination
\begin{align}
\bar{\Phi} = \bar{r}^2 \left[ \bar{r} \ \partial_u \left(\frac{\bar{\phi}_{u}}{\bar{r}_u}\right) -4 \bar{\phi}_u - \frac{2a \bar{r}_u}{\bar{r}} \bar{\phi} \right]
\end{align}
is integrable, $\Pi\left(u\right) = \int_{u_0}^u \bar{\Phi} \left(\bar{u}\right) d\bar{u} < \epsilon$ for any $u \in N\left(v_0\right)$, and moreover, the bound
\begin{align} \label{addb}
\int_{N\left(v_0\right)} \left(\bar{\Phi}^2 \frac{\bar{r}^2}{\bar{r}_u} +\Pi^2 \bar{r}_u \right)  du < \epsilon^2
\end{align}
holds. Note that both (\ref{pointwise}) and (\ref{addb}) are independent of the choice of the $u$-coordinate. 

\subsubsection*{Deduced quantities}
From $\bar{\phi}$ we define the quantity $\bar{\varpi}$ as the unique $C^1$ solution of
\begin{align} 
\partial_u \bar{\varpi} &= 8 \pi \bar{r}^2 \frac{1-\frac{2\bar{\varpi}}{\bar{r}} + \frac{\bar{r}^2}{l^2}}{4r_u} \left(\partial_u \bar{\phi}\right)^2 + \frac{4\pi \bar{r}^2 a}{l^2} \bar{r}_u \bar{\phi}^2 \textrm{ \ \ \ , \ \ \ }
\lim_{u \rightarrow u_0} \bar{\varpi} \left(u\right) = M
\end{align}
and the $C^1$ quantity $\overline{r_v}$ as
\begin{align}
\overline{r_v} = \frac{1}{2} \left(1-\frac{2\bar{\varpi}}{\bar{r}} + \frac{\bar{r}^2}{l^2}\right) \exp \left( \int_{u_0}^u \frac{4\pi \bar{r}}{\bar{r}_u} \left(\partial_u \bar{\phi}\right)^2 du \right) \, .
\end{align}
Note that $\overline{r_v}$ is independent on the choice of $u$-coordinate on the data.
We also define the $C^1$ quantity
\begin{align}
\bar{\Omega}^2 = - \frac{4\bar{r}_u \overline{r_v}}{1- \frac{2\varpi}{r} + \frac{r^2}{l^2}} \, ,
\end{align}
and the shorthand $\bar{\kappa} = \frac{\overline{r_v}}{1-\mu}$, which both depend on the choice of coordinates .
Finally, we define the $C^1$ quantity $\overline{T\left(\phi\right)}$ as the unique solution of the ODE
\begin{align}
\partial_u \left(\bar{r} \bar{\kappa} \overline{T\left(\phi\right)}\right) = - \bar{r}\overline{r}_v \ \partial_u \frac{\bar{\phi}_{,u}}{\bar{r}_u}  + \bar{\phi}_u \left[ -2\overline{r}_v - 2\frac{\bar{\kappa} \bar{r}^2}{l^2} - \frac{2\bar{\kappa} \bar{\varpi}}{\bar{r}} + \frac{8 \pi \bar{r}^2 a \bar{\kappa} \bar{\phi}^2}{l^2}\right] - \frac{a \bar{\Omega}^2 \bar{r}}{2l^2} \bar{\phi} \nonumber 
\end{align}
with the boundary condition $\bar{r} \bar{\kappa} \overline{T\left(\phi\right)}=0$. It follows from the condition (\ref{addb}) and (\ref{pointwise}) that 
\begin{align} \label{later2}
\int_{u_0}^{u_1} \overline{r}^2  \left( \frac{\bar{r}^2}{\bar{r}_u} \left[\partial_u \overline{T\left(\phi\right)} \right]^2 + \overline{T\left(\phi\right)}^2 \bar{r}_u \right) du< C \epsilon^2.
\end{align}
and also that $|\overline{r}^\frac{3}{2}  \overline{T\left(\phi\right)}| < C \epsilon $, where $C>0$ is a constant depending only on $a$, $l$ and $M$.
Moreover, defining the $C^1$ quantity $\overline{\phi_v} = \bar{\kappa} \overline{T\left(\phi\right)} + \frac{\overline{r_v}}{\bar{r}_u} \partial_u \bar{\phi}$, we see that it satisfies
\begin{align} 
\left(\overline{\phi_v}\right)_u = - \frac{\bar{r}_u}{\bar{r}} \overline{\phi_v}  - \frac{\overline{{r}_v}}{\bar{r}} \bar{\phi}_u - \frac{\bar{\Omega}^2 a}{2l^2} \bar{\phi} \, .
\end{align}
Note that  (\ref{later2}) does not depend on the choice of $u$-coordinate.
\begin{definition} \label{defep}
An \underline{$\epsilon$-perturbed Schwarzschild-AdS data set} on $N\left(v_0\right)$ consists in a free function $\bar{\phi}: N\left(v_0\right) \rightarrow \mathbb{R}$ satisfying (\ref{pointwise}) and (\ref{addb}), together with the $C^1$ deduced quantities $(\bar{\varpi}, \bar{\Omega}, \overline{r_v})$ as defined above. In particular (\ref{later2}) holds for any $\epsilon$-perturbed Schwarzschild-AdS data set.
\end{definition}

\begin{remark}
In \cite{gs:lwp}, we constructed initial data with $-2\bar{r}_u=1-\frac{2M}{r}+\frac{r^2}{l^2}+o(\bar{r}^{-1})$.
Using the coordinate transformation
\begin{align}
\frac{du^\star}{du} = \frac{1+ \frac{r^2}{l^2}}{1 - \frac{2M}{r} + \frac{r^2}{l^2}} = 1 + \mathcal{O} \left(\frac{1}{r^3}\right)
\end{align}
near infinity, the $\epsilon$-perturbed data set becomes manifestly a $\mathcal{C}^{1+k}_{a,M}$ asymptotically AdS data set in the sense of \cite{gs:lwp}.
\end{remark}

\begin{remark}
In \cite{gs:lwp}, we choose the constant $s$ in \eqref{pointwise} to be $\min(\sqrt{9+8a},2)$, compared to $\min(\sqrt{9+8a},1)$. Indeed, the extra possible radial decay is not needed to prove local existence for the non-linear problem, as shown in \cite{gs:lwp}. As mentioned in \cite{gs:lwp}, the stronger decay nonetheless propagates, see for instance \cite{Holzegelwp}.
\end{remark}

Note also that by the uniqueness, specifying $\phi=0$ identically will yield (a subset of) the Schwarzschild spacetime as the maximum development.
\subsection{Maximum development and set-up} \label{se:tmd} \label{noco}
%\subsection{Set-up}
From \cite{gs:lwp}, it follows that any $\epsilon$-perturbed Schwarzschild-AdS data set admits a unique (up to diffeomorphism) maximal development. We refer to \cite{gs:lwp} for the precise statement of those results. 

%\subsection{Regular and trapped region}
If $\mathcal{Q} \subset \mathbb{R}^2$ denotes the quotient by the orbits of symmetry of the maximal development of some $\epsilon$-perturbed Schwarzschild-AdS data set and if $\lambda$ and $\nu$ are defined as in \eqref{eq:fon}, then let $\mathcal{R} \subset \mathcal{Q}$ denote the regular region, i.e.~the set of points such that $\lambda > 0$, $\nu < 0$.
From \cite{gs:lwp}, we have in particular:

\begin{proposition}
Let $(\mathcal{M},g)$ be the maximal development of some $\epsilon$-perturbed Schwarzschild-AdS data set. Let $\mathcal{Q}$ denote the quotient of the orbits. Then the past boundary of $\mathcal{Q}$ is a constant $v$ line $v=v_0$. 
Let the infimum of $u$ on $v=v_0$ be $u_0$. For $u > u_0$, let $N\left(u\right) \subset \mathcal{Q}$ denote the outgoing characteristic null-line $u=constant$ emanating from the initial data. Then, the set 
$$
 \{u>u_0 \  | \ \textrm{$N\left(u\right) \in \mathcal{R}$ and $r \rightarrow \infty$ along $N\left(u\right)$} \} \,
$$
is non-empty. Moreover, defining
\begin{align}
u_{\mathcal{H}} := \sup_{u> u_0} \{u \  | \ \textrm{$N\left(u\right) \in \mathcal{R}$ and $r \rightarrow \infty$ along $N\left(u\right)$} \}, \, 
\end{align}
%Then, $u_{\mathcal{H}} > u_0$. 
as well as the subregion
\begin{align} \label{rhreg}
\mathcal{R}_{\mathcal{H}} := \mathcal{R} \cap \{ u_0 < u < u_{\mathcal{H}} \}  \textrm{ \ \ \ \ and \ \ \ } \overline{\mathcal{R}}_{\mathcal{H}} := \mathcal{R}_{\mathcal{H}} \cup N \left( u_{\mathcal{H}} \right),  \, 
\end{align}
%
%\begin{proposition}
%The maximum development $\mathcal{Q}$ of an $\epsilon$-perturbed asymptotically AdS initial data set must 
$\mathcal{Q}$ contains a subset of the following form
\[
\begin{picture}(0,0)%
\includegraphics{penAdS3.pstex}%
\end{picture}%
\setlength{\unitlength}{1816sp}%
\begingroup\makeatletter\ifx\SetFigFont\undefined%
\gdef\SetFigFont#1#2#3#4#5{%
  \reset@font\fontsize{#1}{#2pt}%
  \fontfamily{#3}\fontseries{#4}\fontshape{#5}%
  \selectfont}%
\fi\endgroup%
\begin{picture}(2412,4482)(739,-3673)
\put(1726,-2236){\rotatebox{315.0}{\makebox(0,0)[lb]{\smash{{\SetFigFont{5}{6.0}{\rmdefault}{\mddefault}{\updefault}{\color[rgb]{0,0,0}$v=v_0$}%
}}}}}
\put(1651,-661){\rotatebox{45.0}{\makebox(0,0)[lb]{\smash{{\SetFigFont{5}{6.0}{\rmdefault}{\mddefault}{\updefault}{\color[rgb]{0,0,0}$u=u_{\mathcal{H}}$}%
}}}}}
\put(2101,-1561){\makebox(0,0)[lb]{\smash{{\SetFigFont{6}{7.2}{\rmdefault}{\mddefault}{\updefault}{\color[rgb]{0,0,0}$\mathcal{R}_{\mathcal{H}}$}%
}}}}
\put(3001,689){\makebox(0,0)[lb]{\smash{{\SetFigFont{6}{7.2}{\rmdefault}{\mddefault}{\updefault}{\color[rgb]{0,0,0}$(u_{\mathcal{H}},v_1)$}%
}}}}
\put(3151,-1561){\makebox(0,0)[lb]{\smash{{\SetFigFont{6}{7.2}{\rmdefault}{\mddefault}{\updefault}{\color[rgb]{0,0,0}$\mathcal{I}$}%
}}}}
\end{picture}%

\]
In particular, first singularities cannot arise along $u=u_{\mathcal{H}}$, i.e.~the set \\$\{(u_{\mathcal{H}},v), v_0 \le v < v_1\}$ is included in $\mathcal{Q}$. 
Finally, there exists a global double-null coordinate system $(u,v)$ covering $\mathcal{R}_{\mathcal{H}}$ such that:
\begin{align}
\kappa = \frac{1}{2}  \textrm{ \  on $\mathcal{I}$, \ \ \ \ \ \  \ \ \ \ \ \ } \frac{-r_u}{1+\frac{r^2}{l^2}} = \frac{1}{2} \textrm{ \ on $v=v_0$}. \, 
\end{align}
 % in $\mathcal{R}_{\mathcal{H}}$.
\end{proposition}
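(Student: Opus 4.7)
The plan is to extract each assertion from the local well-posedness and extension-principle machinery of \cite{gs:lwp}, bridged by two short continuity arguments. The first claim, that the past boundary of $\mathcal{Q}$ equals $\{v = v_0\}$, is immediate: by construction $N(v_0)$ is the initial characteristic slice, and uniqueness of the maximal development forbids any extension to $\{v < v_0\}$. For non-emptiness of the set $\{u > u_0 : N(u) \subset \mathcal{R},\ r \to \infty \text{ along } N(u)\}$, I would argue by continuity from the background. In the unperturbed Schwarzschild-AdS, every outgoing ray $\{u = u_\star\}$ with $u_0 < u_\star < u_{hoz}$ lies entirely in the exterior region and reaches null infinity. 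Applying the local well-posedness and continuous-dependence statements of \cite{gs:lwp} on a triangular subdomain bounded by $N(v_0)$ and the background ray at $u_\star$, and taking $\epsilon$ sufficiently small, the perturbed solution still satisfies $\lambda > 0$, $\nu < 0$, and $r \to \infty$ along $\{u = u_\star\}$; hence $u_\star$ lies in the set and $u_{\mathcal{H}}$ is well-defined as its supremum.

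With $u_{\mathcal{H}}$ in hand, the heart of the proposition is to rule out first singularities on $\{u = u_{\mathcal{H}}\} \cap \{v_0 \le v < v_1\}$. This is where the generalized extension principle of \cite{CamJon}, adapted to the asymptotically AdS Klein-Gordon setting in \cite{gs:lwp}, enters: along any sequence of regular points $(u_n, v_n) \to (u_{\mathcal{H}}, v_\star)$ with $v_\star < v_1$, the area radius $r$ is bounded above (since $v_\star < \infty$ and $\lambda$ remains controlled) and bounded below away from zero (by monotonicity of $r$ along the ingoing direction starting from data far from the horizon together with the fact that $N(u_{\mathcal{H}})$ is a limit of rays on which $r \to \infty$), while $\kappa$ and the matter energy remain controlled by the estimates valid on $\mathcal{R}_{\mathcal{H}}$. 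The extension principle then yields $(u_{\mathcal{H}}, v_\star) \in \mathcal{Q}$. Once $\overline{\mathcal{R}}_{\mathcal{H}}$ is thereby identified, the asserted gauge conditions are fixed by the residual coordinate freedom: the normalization $-r_u/(1+r^2/l^2) = 1/2$ on $v = v_0$ determines $u$ up to an additive constant (this was already the gauge used in the data construction), while a subsequent reparametrization $v \mapsto V(v)$ may be chosen to enforce $\kappa = 1/2$ on $\mathcal{I}$; since $\kappa$ transforms by a factor of $dV/dv$, this is a one-parameter condition at the regular conformal boundary and is unobstructed on $\mathcal{R}_{\mathcal{H}}$.

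The main obstacle is the singularity-exclusion step. Unlike in the asymptotically flat case, where monotonicity of the Hawking mass together with the extension principle of \cite{DafRen} yields this with little effort, the evolution equations \eqref{ee:varpiu}--\eqref{ee:varpiv} for $\varpi$ in our system contain a wrong-signed zeroth-order term proportional to $a \phi^2$, destroying monotonicity. One must therefore lean on the full strength of the generalized extension principle of \cite{gs:lwp}, which is quantitative in the pointwise control of the geometric and matter quantities rather than in any monotone invariant. This is purely a structural/regularity statement about the domain of existence; the quantitative bootstrap locating $u_{\mathcal{H}}$ as a perturbation of $u_{hoz}$ and propagating smallness of the matter throughout $\overline{\mathcal{R}}_{\mathcal{H}}$ is carried out in the sections that follow.
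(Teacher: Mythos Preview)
Your reconstruction is broadly on target, and the paper's own proof is in fact much terser than yours: it observes that by continuity the data set contains a trapped surface (a point with $\lambda<0$), then invokes Corollary~A.2 of \cite{gs:lwp} wholesale for the Penrose-diagram structure, and finally remarks that the gauge follows from a coordinate change. So the substantive content of the proposition is entirely packaged in that companion-paper corollary, and your argument is essentially an attempt to sketch what lies behind it.

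Two points of comparison are worth noting. First, the paper's route through the \emph{trapped surface on the data} is what triggers the structural corollary; you do not mention this, and instead argue non-emptiness by Cauchy stability near $u_0$. Both work, but the trapped-surface observation is what lets the paper cite a single black-box result rather than argue piecemeal. Second, your singularity-exclusion step at $u=u_{\mathcal{H}}$ asserts that ``$\kappa$ and the matter energy remain controlled by the estimates valid on $\mathcal{R}_{\mathcal{H}}$''. This is circular as written: those estimates are Propositions~\ref{uniformb}--\ref{intdec3}, proved \emph{after} the present proposition. The generalized extension principle of \cite{gs:lwp} (adapted from \cite{CamJon}) does not require such control; it needs only that $r$ stay bounded away from $0$ and $\infty$ in a neighborhood of the candidate singular point, and this follows from the monotonicity $\nu<0$, $\lambda>0$ in $\mathcal{R}_{\mathcal{H}}$ together with the finiteness of $r$ on the compact piece $[u_0',u_{\mathcal{H}}]\times\{v_0\}$ of the data. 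You should strike the appeal to later estimates and rely solely on this soft bound on $r$; then your argument is a correct unpacking of the cited corollary.
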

\begin{proof}
By continuity, the data set contains a trapped surface, i.e. a point with $\lambda < 0$. Hence, Corollary A.2 of \cite{gs:lwp} applies. The existence of the $(u,v)$ coordinate system follows from a simple coordinate transformation.
\end{proof}
%Observe that  
%
%We choose global double-null coordinates on $\mathcal{R}_{\mathcal{H}}$ such that\footnote{The fact that one can normalize $\kappa$ (and hence the $v$-coordinate) from infinity requires a moment's thought involving (\ref{kappaR}). Cf.~our \cite{gs:lwp}. The range of the $v$-coordinate will only be examined in Proposition \ref{cni}, after detailed estimates on the solution have been proven.}
%\begin{align}
%\kappa = \frac{1}{2}  \textrm{ \  on $\mathcal{I}$ \ \ \ \ \ \ and \ \ \ \ \ \ } \frac{-r_u}{1+\frac{r^2}{l^2}} = \frac{1}{2} \textrm{ \ on $v=v_0$} \, .
%\end{align}
Note that under a change of null coordinates defined by $
\hat{u}=f(u)$, $\hat{v}=g(v)$,
the quantities $\Omega$, $\kappa$ and $\gamma$ transform as:
\begin{align}
\widehat{\Omega}=\frac{\Omega^2}{f'g'} \textrm{ \ \ \ \ \ , \ \ \ \ \ }
\hat{\kappa}=\frac{\kappa}{g'} \,  \textrm{ \ \ \ \ \  , \ \ \ \ \ }
\hat{\gamma}=\frac{\gamma}{f'} \, .
\end{align}

\subsubsection{Norms and Constants} \label{se:norm}
We define the following norms on $\mathcal{R}_{\mathcal{H}}$. For any point $\left(u,v\right) \in \mathcal{R}_{\mathcal{H}}$ let $u_{\mathcal{I}} \left(v\right)$ denote the $u$-coordinate of the point where the $v=const$-ray intersects null infinity $\mathcal{I}$.
\begin{align}
\| \psi \|^2_{H^1_{AdS}\left(u,v\right)} =  \int_{u_\mathcal{I}\left(v\right)}^u r^2 \left[\frac{r^2}{-r_u} \left( \partial_u\psi \right)^2   - r_u \psi^2 \right] \left(\bar{u},v\right)  d\bar{u}  \nonumber \\
+  \int_{v_0}^{v} r^2 \left[ \frac{1-\mu}{r_v} \left( \partial_v \psi \right)^2 + r_v \psi^2\right] \left(u,\bar{v}\right) d\bar{v} \, .
\end{align}
\begin{align}
\| \psi \|^2_{H^1_{AdS,deg}\left(u,v\right)} =  \int_{u_\mathcal{I}\left(v\right)}^u r^2 \left[\frac{1-\mu}{-r_u} \left( \partial_u\psi \right)^2 \left(\bar{u},v \right)  - r_u \psi^2 \right] \left(\bar{u},v\right)  d\bar{u}  \nonumber \\
+  \int_{v_0}^{v} r^2 \left[ \frac{1-\mu}{r_v} \left( \partial_v \psi \right)^2 + r_v \psi^2\right] \left(u,\bar{v}\right) d\bar{v} \, .
\end{align}
Note that both of these norms are independent of the choice of double null-coordinates. From \cite{gs:lwp}, it follows in particular that they are continuous in $\left(u,v\right)$. 
We also define spacetime energies capturing integrated decay:
\begin{align}
\mathbb{I} \left[\psi\right] \left(\mathcal{D} \right) = \int_{\mathcal{D}} \frac{1}{r^2} \left[ \frac{\left(\partial_u \psi \right)^2}{\gamma^2} + \frac{\left(\partial_v \psi \right)^2}{\kappa^2} + r^2 \psi^2 \right] \Omega^2 r^2  \left(\bar{u}, \bar{v}\right) d\bar{u} d\bar{v}\, , \nonumber
\end{align}
and also the non-degenerate integrated decay norm
\begin{align}
\overline{\mathbb{I}} \left[\psi\right] \left(\mathcal{D} \right) = \int_{\mathcal{D}} \frac{1}{r^2} \left[ \frac{r^4}{l^4}\frac{\left(\partial_u \psi \right)^2}{r_u^2} + \frac{\left(\partial_v \psi \right)^2}{\kappa^2} + r^2 \psi^2 \right] \Omega^2 r^2 \left(\bar{u}, \bar{v}\right) d\bar{u} d\bar{v} \, , \nonumber
\end{align} 
In practical applications, the region $\mathcal{D}$ is often going to be 
\begin{align}
D\left(u,v\right)=J^-\left(p=\left(u,v\right)\right) \cap J^+\left(N(v_0)\right), \,
\end{align}
where $N(v_0)$ is the initial ray.
Finally, we denote by $B_{M,l}$ a constant which does only depend on the fixed cosmological constant and the mass at infinity and by $B_{M,l,a}$ a constant which also depends on the fixed mass $a$.

\subsubsection{The constant $r$-curves $r_X$ and $r_Y$}
Define the point $q=\left(u_\mathcal{H},v_0\right)$. Clearly, since $\mathcal{R}_\mathcal{H}$ is part of the regular region, $r \geq r_{min} = r\left(u_\mathcal{H},v_0\right)$ holds in $\mathcal{R}_{\mathcal{H}}$. By the Raychaudhuri equation, a point on the initial data-ray $v=v_0$ at which $r_v<0$ cannot be part of $\mathcal{R}_H$. Since moreover in Schwarzschild $r_v<0$ holds for $r<r_{ASch}$, we have by the smallness assumption on the data the lower bound $r_{min} \geq \ras \left(1- C(\epsilon)\right)$, with $C(\epsilon) \rightarrow 0$ as $\epsilon \rightarrow 0$. Let $c>0$ be a small uniform constant (in particular, $c^\frac{1}{3}$ should still be much smaller than $a+\frac{9}{8}$) and define $r_Y$ as the unique real solution of
\begin{align} \label{choice}
1 - \frac{2M}{r_Y} + \frac{r_Y^2}{l^2} = c^\frac{1}{3} \, ,
\end{align}
Note that for $c$ small we have the estimate
\begin{align}
0 < r_Y - r_{ASch} \leq c^\frac{1}{3} \frac{2r_{ASch}}{1 + \frac{3 r_{ASch}^2}{l^2}} \, .
\end{align}
where $r_{ASch}$ is the $r$ value on the horizon in Schwarzschild-AdS. We choose $c$ so small that in particular
\begin{align} \label{shb}
\frac{2M\left(1-\sqrt{c}\right)}{r_Y^3} > \frac{1}{2} \, 
\end{align}
holds. Since this estimate is true for $c=0$ by footnote \ref{funoc}, this is possible by continuity.
Note that a-priori the curve $r=r_Y$ could lie outside of $\mathcal{R}_\mathcal{H}$, namely, if $r_{min}$ happens to be much larger than $r_{ASch}$.
In the same manner, we define a curve $r=r_X$ by solving
\begin{align}
1 - \frac{2M}{r_X} + \frac{r_X^2}{l^2} = d^\frac{1}{3} \, .
\end{align}
We assume $d>c$. As for $r_Y$, we have:
$$0 \le r_X - \ras \le B_{M,l} d^{1/3},$$
where $B_{M,l}$ is positive constant depending only on $M$ and $l$. By continuity, we can choose in particular $d$ so that
%Now since $r_{\min} \ge \ras(1-\epsilon)$, we have
%\begin{eqnarray}
%| r_X-r_{\min} | \le r_X-\ras(1-\epsilon) 
%              \le  B_{M,l} d^{1/3}+\epsilon \ras.
%\end{eqnarray}
%In particular, we can choose $d$ 
the following estimate holds:
\begin{equation} \label{rxchoice}
log \frac{r_{X}}{r_{min}} < \frac{1}{|a|} \, .
\end{equation}

%
%
%
%
%
%
%

%
%
%
%
%\section{The main results}
%From now on, we consider the maximum development $\mathcal{Q}$ of of an $\epsilon$-perturbed Schwarzschild-AdS spacetime.
%
%
%
%
\section{The main results} \label{se:tmt}
The main theorem can be found at the end of this section. We use this section to outline the sequence of propositions leading to the theorem. Some propositions are proven right away, while the proof of the three key propositions containing the crucial estimates is postponed to Sections \ref{mp1} and \ref{mp2}.

For the results below, recall the mass bound (\ref{confbound}), the definition of an $\epsilon$-perturbed Schwarzschild-AdS data set (Definition \ref{defep}) and that of the region $\mathcal{R}_{\mathcal{H}}$ associated with it, (\ref{rhreg}). 

Step 1 will be to establish \emph{uniform} bounds in the region $\mathcal{R}_\mathcal{H}$:
\begin{proposition}[Basic estimates] \label{uniformb}
There is an $\epsilon>0$ such that the solution arising from an $\epsilon$-perturbed Schwarzschild-data set satisfies the following estimate for $\left(u,v\right) \in \mathcal{R}_\mathcal{H}$:
\begin{align}
| \varpi - M |^\frac{1}{2} + | 2\kappa - 1 |^\frac{1}{2} +  | r^{\frac{3}{2}} \phi | + \Big| r^{\frac{3}{2}} \frac{\zeta}{\nu} \Big| + \|\phi\|_{H^{1}_{AdS}\left(u,v\right)} \nonumber \\
\leq B_{M,l,a} \left[ \| \phi \|_{H^{1}_{AdS}\left(u_\mathcal{H},v_0\right)} + \sup_{v=v_0} \Big| r^\frac{3}{2} \frac{\zeta}{\nu} \Big| \right] \, .
\end{align}
\end{proposition}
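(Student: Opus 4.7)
The plan is a bootstrap/continuity argument on $\mathcal{R}_\mathcal{H}$. Let $\mathcal{D}_0 := \|\phi\|_{H^1_{AdS}(u_\mathcal{H},v_0)} + \sup_{v=v_0}|r^{3/2}\zeta/\nu|$. For a large constant $B$ to be fixed, let $\mathcal{R}^{(s)}\subseteq\mathcal{R}_\mathcal{H}$ be the maximal past-closed, connected subset on which the bootstrap inequality
\begin{equation*}
|\varpi-M|^{1/2}+|2\kappa-1|^{1/2}+|r^{3/2}\phi|+|r^{3/2}\zeta/\nu|+\|\phi\|_{H^1_{AdS}} \leq B\,\mathcal{D}_0
\end{equation*}
holds. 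It contains the initial ray $N(v_0)\cap\mathcal{R}_\mathcal{H}$ by the data assumptions (at $v=v_0$ the $v$-integrals in the norm vanish). The task is to show that, for $\epsilon$ sufficiently small, each summand improves to $\leq (B/2)\mathcal{D}_0$ under the bootstrap, whence $\mathcal{R}^{(s)}=\mathcal{R}_\mathcal{H}$ and the claim with $B_{M,l,a}=B$ follows.

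Three of the five summands improve relatively easily, conditionally on an improved energy bound. First, integrating the Raychaudhuri identity (\ref{kappaR}) in $u$ from $u_\mathcal{I}(v)$, where our gauge sets $\kappa=\frac{1}{2}$, yields $\log(2\kappa) = -\int_{u_\mathcal{I}(v)}^u \frac{4\pi\zeta^2}{r|\nu|}\,d\bar u$. Inserting the bootstrap bound on $\zeta/\nu$ and changing variables via $|r_u|d\bar u=-dr$, the integral is $\leq C\mathcal{D}_0^2/r^3(u,v)$, a quadratic gain. Second, since the Dirichlet-type asymptotics force $r^{3/2}\phi\to 0$ at $\mathcal{I}$, write $\phi=\int_{u_\mathcal{I}(v)}^u\partial_u\phi\,d\bar u$; Cauchy--Schwarz with weight $|r_u|/r^4$ gives
\begin{equation*}
|\phi(u,v)|^2 \leq \left(\int_{u_\mathcal{I}(v)}^u \frac{|r_u|}{r^4}\,d\bar u\right)\left(\int_{u_\mathcal{I}(v)}^u \frac{r^4(\partial_u\phi)^2}{|r_u|}\,d\bar u\right) = \frac{1}{3r^3(u,v)}\int_{u_\mathcal{I}(v)}^u \frac{r^4(\partial_u\phi)^2}{|r_u|}\,d\bar u,
\end{equation*}
so $|r^{3/2}\phi|\leq C\|\phi\|_{H^1_{AdS}(u,v)}$. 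Third, integrating (\ref{ee:varpiu}) on $v=v_0$ yields $|\bar\varpi-M|\leq C\mathcal{D}_0^2$ (first term dominated by the data energy, the $a\bar\phi^2$ term by the zero-th order part $\int r^2|r_u|\bar\phi^2\,du$ of the data norm), and (\ref{ee:varpiv}) then propagates $\varpi$ in $v$: its first term is exactly the derivative part of the $v$-integrand of $\|\phi\|^2_{H^1_{AdS}}$, and its second term $4\pi r^2 a r_v\phi^2/l^2$ is bounded by the zero-th order $v$-part $\int r^2\lambda\phi^2\,d\bar v$ of the norm.

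The crux is the energy estimate. I would apply the divergence theorem to the current $T_{\mu\nu}X^\nu$ over $D(u,v)=J^-(p)\cap J^+(N(v_0))$ for a timelike $X$ of Kodama type, arranged so that the boundary flux reproduces $\|\phi\|^2_{H^1_{AdS}(u,v)}$ on the future boundary and $\|\phi\|^2_{H^1_{AdS}(u_\mathcal{H},v_0)}$ on $N(v_0)$. The bulk error splits into (i) \emph{geometric} errors from $X$ not being Killing and from $\kappa,\gamma,\varpi$ differing from their Schwarzschild-AdS values, which are $O(\epsilon)$ relative to $\|\phi\|^2_{H^1_{AdS}}$ by the bootstrap on $|2\kappa-1|$ and $|\varpi-M|$; and (ii) a genuinely wrong-signed zero-th order contribution $(a/l^2)\phi^2$ when $a<0$. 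In the asymptotic region $r\geq r_X$, this is absorbed by the doubly weighted Hardy inequality
\begin{equation*}
\int r^2|r_u|\phi^2\,du \leq \frac{4}{9}\int \frac{r^4(\partial_u\phi)^2}{|r_u|}\,du,
\end{equation*}
obtained by integration by parts using $r^3\phi^2\to 0$ at $\mathcal{I}$; the assumption $a\geq -1$ ensures the Hardy constant beats $|a|/l^2$. In the near-horizon region $r_{min}\leq r\leq r_X$, where $1-\mu$ degenerates and Hardy fails, I would instead multiply by a redshift-adapted vector field, using (\ref{shb}) to extract a coercive positive spacetime term dominating the wrong-sign contribution there. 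Gluing the two regions via the choice of $r_X$ (through $d$, cf.~(\ref{rxchoice})) and $r_Y$ (through $c$) produces $\|\phi\|^2_{H^1_{AdS}(u,v)} \leq C\|\phi\|^2_{H^1_{AdS}(u_\mathcal{H},v_0)} + C\sqrt{\epsilon}\,\|\phi\|^2_{H^1_{AdS}(u,v)}$, and absorption for small $\epsilon$ improves the energy.

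Finally, the pointwise bound on $\zeta/\nu$ is obtained by transport along ingoing characteristics. Combining $\partial_v\zeta=-\nu\theta/r-\Omega^2 ar\phi/(2l^2)$ with $\partial_v\nu=r_{uv}$ expressed via $\varpi$ yields a linear ODE for $\zeta/\nu$ along $u=const$; its forcing is pointwise controlled by $\mathcal{D}_0/r^{3/2}$ through the already-established bound on $|r^{3/2}\phi|$ and the analogous bound on $\theta$, while the multiplicative coefficient is benign away from the horizon and acquires a favorable sign near it through the redshift, preventing blow-up as $\nu \to 0$; integrating from $v=v_0$ preserves the $r^{-3/2}$ weight up to a constant. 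The main obstacle of the whole argument is the energy step: neither the Hardy inequality (which degenerates at the horizon) nor the redshift (which cannot handle the asymptotically AdS tail at $\mathcal{I}$) suffices alone, and it is their gluing across $r_Y$ and $r_X$ that gives integrated positivity of the energy and closes the bootstrap. Once all five quantities improve, $\mathcal{R}^{(s)}$ is open and closed in $\mathcal{R}_\mathcal{H}$, so $\mathcal{R}^{(s)}=\mathcal{R}_\mathcal{H}$.
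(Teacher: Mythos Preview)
Your overall architecture---bootstrap, Hardy inequalities away from the horizon, redshift near the horizon, pointwise transport for $\zeta/\nu$---matches the paper's. But the specific mechanism you sketch for the near-horizon energy step is not the one the paper uses, and as written it has a gap.

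The paper does \emph{not} bootstrap on the five final quantities. Instead it bootstraps on an auxiliary pointwise bound $|r^3\phi^2|<\frac{Ml^2}{8\pi|a|}$ together with four integral smallness assumptions localised to $r\leq r_Y$ and $r\geq r_Y$ (cf.\ (\ref{boot2})--(\ref{boot4a})). The Hardy inequalities (Lemmas \ref{Lemma4} and \ref{Lemma2}) compare the wrong-signed mass term to the \emph{degenerate} flux $\int r^2\frac{r_v}{\Omega^2}(\partial_u\phi)^2$, not to the non-degenerate $\int r^4(\partial_u\phi)^2/|r_u|$ you wrote; this is what the Hawking-mass conservation actually gives, and it is why Hardy can only be run in $r\geq r_Y$. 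The remaining $r\leq r_Y$ contribution to the energy balance is then handled directly by the bootstrap assumptions (\ref{boot2}), (\ref{boot4}), yielding Corollary \ref{fienb}. There is no ``redshift-adapted vectorfield'' in this step; the Remark following Proposition \ref{intdec3} says explicitly that Proposition \ref{uniformb} avoids the vectorfield/integrated-decay machinery.

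What closes the bootstrap near the horizon is the \emph{pointwise} redshift for $\zeta/\nu$ (Lemma \ref{zetahozbound}) combined with the $c^{1/3}$-smallness of the $r$-width of the strip $r\leq r_Y$. Two ingredients of that lemma are missing from your transport sketch: first, the inhomogeneous $\phi$-term in (\ref{rsevol}) cannot be controlled directly in $r\leq r_X$ (you only have the $L^2$-in-$v$ bound $\int\phi^2 r_v\,dv$, and $r_v$ may be tiny), so the paper integrates by parts against the exponential kernel (see (\ref{expip})) and picks up a boundary term $\frac{|a|}{2}|\phi(u,v)|$; second, this boundary term is absorbed only after integrating $\phi$ in $u$ from $r=r_X$ and invoking the specific choice (\ref{rxchoice}), $\log(r_X/r_{\min})<1/|a|$. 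With $|\phi|$ and $|\zeta/\nu|$ pointwise small in $r\leq r_X$, the bootstrap integrals (\ref{boot2}), (\ref{boot4}) are improved via $|r_Y-r_{\min}|\lesssim c^{1/3}$, and only at the very end (Section \ref{conclusio}) is the degenerate energy upgraded to the non-degenerate $H^1_{AdS}$ norm via (\ref{convt}). Your proposal to extract a ``coercive positive spacetime term'' from a redshift multiplier would instead produce an integrated-decay-type estimate, which is precisely the content of Proposition \ref{intdec3} and is not needed here.
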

We remark that Proposition \ref{uniformb} actually holds for the entire Breitenlohner-Freedman range $a>-\frac{9}{8}$ and not only the range (\ref{confbound}).

\begin{proposition}[Improved and higher order bounds] \label{uniformb2}
Let
\begin{align} \label{idb}
\mathcal{N} \left[\phi\right] \left(v_0\right) = \Big[ \| \phi \|_{H^{1}_{AdS}\left(u_\mathcal{H},v_0\right)} &+ \|T\phi \|_{H^{1}_{AdS}\left(u_\mathcal{H},v_0\right)} \nonumber \\ 
+ \sup_{v=v_0} \Big| r^{\frac{3}{2}+\frac{s}{2}} \frac{\zeta}{\nu} \Big| &+ \sup_{v=v_0} \Big| r^\frac{5}{2} \frac{\partial_u \left(T\phi\right)}{\nu} \Big| +   \sup_{v=v_0}  \Big| r^\frac{7}{2} \frac{\partial_u \frac{\partial_u \phi}{r_u}}{r_u} \Big|  \Big]  < \infty
\end{align}
be a second order norm on the initial data.
There is an $\epsilon>0$ such that for any $\epsilon$-perturbed Schwarzschild-data set we have the following estimates for $\left(u,v\right) \in \mathcal{R}_\mathcal{H}$:
\begin{align} \label{sep1}
\Big| r^\frac{7}{2} \frac{\partial_u \frac{\partial_u \phi}{r_u}}{r_u} \Big| 
+\Big| r^\frac{5}{2} \frac{\partial_u \left(T\phi\right)}{\nu} \Big|  +  \|T\left(\phi\right)\|_{H^1_{AdS}\left(u,v\right)}  \leq  B_{M,l,a} \cdot \mathcal{N} \left[\phi\right] \left(v_0\right) \, ,
\end{align}
and, for any $\delta>0$ and $s=\min\left(\sqrt{9 + 8a},2-2\delta\right)$,
\begin{align} \label{sep2}
 | r^{\frac{3}{2}+\frac{s}{2}} \phi | + \Big| r^{\frac{3}{2}+\frac{s}{2}} \frac{\zeta}{\nu} \Big| + | r^{\frac{1}{2}+\frac{s}{2}} \phi_v |  \leq  C_\delta \cdot B_{M,l,a} \cdot \mathcal{N} \left[\phi\right] \left(v_0\right) \, ,
\end{align}
where $T\left(\phi\right) = \frac{1}{4\kappa} \partial_v \phi + \frac{1}{4\gamma} \partial_u \phi$.
\end{proposition}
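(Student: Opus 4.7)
The plan is to proceed in three stages. Stage (i): commute the Klein-Gordon equation \eqref{laste} with the Kodama vectorfield $T = \frac{1}{4\kappa}\partial_v + \frac{1}{4\gamma}\partial_u$ and run the vectorfield argument of Proposition \ref{uniformb} on $T\phi$. Since $T$ is Killing on the background Schwarzschild-AdS, the commutator $[\Box_g - 2a/l^2, T]$ vanishes on the background; using \eqref{kappaR}-\eqref{gammaR} to rewrite $\partial_u\kappa$ and $\partial_v\gamma$, and \eqref{eq:ruv} for mixed derivatives of $r$, the inhomogeneous term in the $T$-commuted equation is a nonlinear expression quadratic in $\phi$ and its first derivatives multiplied by smooth functions of $r, \varpi$, hence $O(\epsilon)$ by Proposition \ref{uniformb}. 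Applying the same multipliers as in Proposition \ref{uniformb} (redshift near $\{u=u_\mathcal{H}\}$, Hardy-based monotonicity in $\{r \geq r_X\}$, weighted multiplier for integrated decay near $\mathcal{I}$) now to $T\phi$, the commutator errors are absorbed by smallness, and the data contributions on $N(v_0)$ are controlled by $\mathcal{N}[\phi](v_0)$ via \eqref{later2}. This yields $\|T\phi\|_{H^1_{AdS}(u,v)} \leq B_{M,l,a}\,\mathcal{N}[\phi](v_0)$.

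Stage (ii): higher-order pointwise bounds. For $r^{7/2}\partial_u(\phi_u/r_u)/r_u$, differentiate \eqref{laste} in $u$ and divide through by $r_u$; this yields a transport equation in $v$ for $\partial_u(\phi_u/r_u)/r_u$ whose right-hand side is controlled, via Proposition \ref{uniformb}, by the pointwise decay of $\phi$ and $\phi_u/r_u$ already established there. Integrating from $v=v_0$, where $\mathcal{N}[\phi](v_0)$ supplies the initial datum for this quantity, gives the bound. An analogous transport argument applied to the $T$-commuted equation, now using the $T\phi$-bound from stage (i), produces the pointwise control of $r^{5/2}\partial_u(T\phi)/\nu$.

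Stage (iii): improved radial decay \eqref{sep2}. Near $\mathcal{I}$, the radial part of the Klein-Gordon equation, up to contributions involving $T\phi$, reduces to an ODE in $r$ with indicial exponents $\lambda_\pm = (3 \pm \sqrt{9+8a})/2$; the Dirichlet boundary condition selects the fast mode $r^{-\lambda_+} = r^{-(3+s)/2}$ with $s = \sqrt{9+8a}$. Treating the $T\phi$ contributions as forcing---which decay like $r^{-3/2}$ by a Hardy-Sobolev estimate applied to the $H^1_{AdS}$ bound from stage (i)---one integrates this effective ODE inward from $\mathcal{I}$ to recover the decay rate $r^{-(3+s)/2}$ for $\phi$, and consequently for $\zeta/\nu$. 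The bound on $r^{1/2+s/2}\phi_v$ then follows by rearranging $T\phi = \frac{1}{4\kappa}\phi_v + \frac{1}{4\gamma}\phi_u$ and inserting the already-controlled decay of $\phi_u$ and $T\phi$. The constant $C_\delta$ reflects the loss that occurs when $\sqrt{9+8a} \geq 2$: integrability along characteristics at the borderline rate $s=2$ fails, forcing one to give up an arbitrarily small $\delta$ in the exponent.

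The principal obstacle lies in stage (iii): separating the two asymptotic modes of the Klein-Gordon equation near $\mathcal{I}$ in a quantitative way under nonlinear perturbation, which requires the asymptotic-AdS redshift near $\mathcal{I}$ mentioned in the introduction. The first two stages are structurally parallel to Proposition \ref{uniformb}, the only new feature being the commutator error terms for $T$, all of which are quadratically small in $\phi$ and therefore controlled by $\epsilon$ combined with stage (i).
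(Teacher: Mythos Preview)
Your outline captures the overall architecture of the paper's proof (commute with $T$, run the energy/redshift machinery, then extract improved radial decay), but two logical dependencies are handled incorrectly and constitute genuine gaps.

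First, your ordering of stages (i) and (ii) is inverted. Look at the explicit commutator $\mathfrak{q}[T\phi]$ in \eqref{met}: it contains the term $-\frac{\pi r \phi_v^2}{\kappa^2}\big[\partial_u(\phi_u/r_u)/r_u\big]$, i.e.~the second $u$-derivative appears \emph{in the error of the $T$-commuted equation}. You therefore cannot close the $T\phi$ estimates in stage (i) without already having the bound on $r^{7/2}\partial_u(\phi_u/r_u)/r_u$ from stage (ii). The paper resolves this by proving the second-derivative transport bound (Section~\ref{mp2}, equation \eqref{secfree}) \emph{first}, using only the output of Proposition~\ref{uniformb}; only then does it turn to $T\phi$.

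Second, and more seriously, your claim that the commutator errors are ``$O(\epsilon)$ by Proposition~\ref{uniformb}'' is not correct as stated. Inspect \eqref{met} again: several terms carry factors of $\phi_v$ or $\phi_v^2/\kappa^2$, and Proposition~\ref{uniformb} gives \emph{no} pointwise control on $\phi_v$ (only on $\phi$, $\zeta/\nu$, $\kappa$, $\varpi$, and the $H^1_{AdS}$ flux). The paper closes this via an additional bootstrap on $|\phi_v/\kappa|<1$ (end of Section~6.3): one assumes this bound, uses it together with \eqref{secfree} to control the commutator errors in both the energy estimate (Lemma~\ref{mltp}) and the integrated decay for $T\phi$ (Lemmas~\ref{eef}--\ref{eee}), and then \emph{recovers} $|\phi_v/\kappa|<\tfrac12$ from the identity $\phi_v = \kappa\, T\phi + \kappa(1-\mu)\,\phi_u/r_u$ once $T\phi$ is controlled. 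You invoke this identity only at the very end of stage~(iii) for the improved decay, but it is already needed earlier to close stage~(i). Note also that the integrated decay estimate for $T\phi$ (not just the boundedness argument of Proposition~\ref{uniformb}) is required here; your phrase ``same multipliers as in Proposition~\ref{uniformb}'' conflates two distinct mechanisms.

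Your stage~(iii) is morally correct but vague where the paper is precise. The paper does not integrate an indicial-root ODE directly from $\mathcal{I}$; rather it forms the combination $A = r^n\big(\zeta/\nu + 2p\,\phi\big)$ with $2p=\lambda_-$ and observes a cancellation (equation~\eqref{icvd}) that makes the zeroth-order forcing in the $v$-transport equation for $A$ decay one order better in $r$ than each summand separately. This cancellation, together with the redshift-at-infinity weight, is what produces \eqref{imlat}; only afterwards is $r^{2p}\phi$ integrated in $u$ from $\mathcal{I}$. Your description via indicial exponents hides this structural point.
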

Note that all these bounds are independent on the choice of $u$-coordinate.
\begin{proposition}(Integrated decay) \label{intdec3}
We have the integrated decay estimates
\begin{align} \label{ind1}
 \| \phi \|^2_{H^{1}_{AdS}\left(u,v\right) }+ \overline{\mathbb{I}}\left[\phi\right] \left(D\left(u,v\right)\right) \leq B_{M,l,a}  \| \phi \|^2_{H^{1}_{AdS}\left(u,v_0\right)}  \, ,
\end{align}
\begin{align} \label{ind2}
\overline{\mathbb{I}}\left[T\phi\right] \left(D\left(u,v\right)\right) \leq B_{M,l,a} \cdot \mathcal{N}^2 \left[\phi\right] \left(v_0\right) \, .
\end{align}
\end{proposition}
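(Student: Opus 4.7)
The plan is to combine a vectorfield multiplier argument with weighted Hardy inequalities. The energy-momentum tensor \eqref{emEKG} fails the dominant energy condition when $a<0$, since the term $-\frac{a}{l^2}\phi^2 g_{\mu\nu}$ has the wrong sign, so pointwise positivity of energy densities is unavailable; positivity has to be recovered after integration in $r$, via Hardy-type inequalities, in the spirit of the linear setting of \cite{HolzegelAdS}. Near the event horizon the wrong-signed zeroth-order term cannot be absorbed by Hardy and the redshift effect has to be used instead as the coercivity mechanism.

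For \eqref{ind1} I would first contract $\mathbb{T}_{\mu\nu}[\phi]$ with the Kodama vectorfield $T$, which is Killing on pure Schwarzschild-AdS. In the dynamical spacetime its deformation tensor is nonzero but is controlled by derivatives of $\Omega$ and $r$, which by the Einstein equations \eqref{ee:varpiu}-\eqref{ee:varpiv} and the Raychaudhuri equations \eqref{kappaR}-\eqref{gammaR} are quadratic in $\phi$ and hence pointwise $O(\epsilon^{2})$ by Proposition \ref{uniformb}. Integrating the resulting divergence identity on $D(u,v)$, the boundary fluxes reconstruct $\|\phi\|^2_{H^1_{AdS}(u,v)}$ once a global weighted Hardy inequality on constant-$u$ and constant-$v$ rays is used to dominate the wrong-signed $\phi^{2}$ boundary contribution by the gradient contribution, a step which already needs $a>-\tfrac{9}{8}$. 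To upgrade this boundedness to integrated decay, I would superimpose a Morawetz multiplier of the schematic form $X = f(r)\bigl(\partial_{u}/\gamma-\partial_{v}/\kappa\bigr)$ with $f$ monotone in a suitable tortoise coordinate, tuned so that in $\{r\ge r_{Y}\}$ its bulk is positive-definite in the first derivatives (no trapping occurs in spherical symmetry). A spacetime Hardy inequality is then used to absorb the remaining $\phi^{2}$ contribution into the gradient bulk. In the complementary region $\{r\le r_{Y}\}$ the Morawetz bulk degenerates, and I would add a redshift multiplier $N = \partial_{v} + \sigma\,\partial_{u}$ with $\sigma>0$ chosen so that the $uu$-component of its deformation is strictly positive near the horizon, which is possible thanks to \eqref{shb}; the cross-boundary at $r=r_{Y}$ is absorbed into the exterior Morawetz bulk. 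Collecting all boundary fluxes gives \eqref{ind1}.

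For \eqref{ind2} I would commute the Klein-Gordon equation \eqref{laste} with $T$. Because $T$ is only Kodama, not Killing, $[\Box_{g}-\tfrac{2a}{l^{2}},T]\phi$ is a genuine source whose coefficients are $T$-derivatives of the metric quantities and therefore, via the Einstein equations, quadratic in $\phi$, so pointwise $O(\epsilon^{2})$ by Proposition \ref{uniformb2}. Running the same $T$-plus-Morawetz-plus-redshift scheme on $T\phi$ in place of $\phi$, the source integrates to a bulk error which, after Cauchy-Schwarz and the pointwise decay \eqref{sep2} combined with \eqref{ind1}, is bounded by $\epsilon^{2}\,\overline{\mathbb{I}}[T\phi](D(u,v)) + B_{M,l,a}\,\mathcal{N}^{2}[\phi](v_{0})$; choosing $\epsilon$ small absorbs the first summand on the left and yields \eqref{ind2}.

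The main obstacle is the spacetime Hardy step used to dominate the wrong-signed $\phi^{2}$ bulk by the derivative bulk produced by the Morawetz multiplier: the relevant Hardy constant has to be strictly below the coefficient in front of the gradient bulk, and that matching only closes for $a\ge-1$, which is precisely why this proposition (unlike all the others) invokes the conformal bound \eqref{confbound} rather than the full Breitenlohner-Freedman range. A secondary technical point is arranging the redshift, Morawetz and $T$ bulks to combine with compatible signs uniformly across $r=r_{Y}$ in the bootstrap, which is what forces the smallness of the parameter $c$ in the definition of $r_{Y}$.
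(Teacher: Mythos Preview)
Your overall architecture—$T$-energy, Morawetz, redshift, with Hardy inequalities supplying integrated positivity and the conformal bound \eqref{confbound} entering through the spacetime Hardy constant—is correct and matches the paper. Two points need correction.

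First, a structural fact you miss: the Kodama vectorfield $T$ gives an \emph{exact} conservation law for $\phi$, not merely an approximate one. Its deformation tensor is indeed nonzero (one finds ${}^{(T)}\pi^{uu}=8\pi r(\partial_v\phi)^2/\Omega^4$ and ${}^{(T)}\pi^{vv}=-8\pi r(\partial_u\phi)^2/\Omega^4$), but the contraction ${}^{(T)}\pi^{ab}\mathbb{T}_{ab}[\phi]$ vanishes identically by cancellation of the two $(\partial_u\phi)^2(\partial_v\phi)^2$ contributions, while ${}^{(T)}\pi^{uv}={}^{(T)}\pi^{AB}=0$. This is precisely the statement that the Hawking mass $\varpi$ is a potential for the $T$-flux. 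Treating the deformation as an $O(\epsilon^2)$ error, as you do, would still close, but it understates the structure and makes the bookkeeping for \eqref{ind1} unnecessarily delicate.

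Second, and this is a genuine gap, your argument for \eqref{ind2} is circular. You invoke the pointwise decay \eqref{sep2} to control the commutator $\mathfrak{q}[T\phi]$, but \eqref{sep2} is part of Proposition~\ref{uniformb2}, whose proof (cf.\ the Remark following Proposition~\ref{intdec3}) \emph{requires} the integrated decay for $T\phi$, i.e.\ precisely \eqref{ind2}. The paper breaks this circularity by a separate bootstrap on the single pointwise condition $|\phi_v/\kappa|<1$; note that $\phi_v$ is the one first derivative for which Proposition~\ref{uniformb} gives no pointwise bound. Within that bootstrap, the terms of $\mathfrak{q}[T\phi]$ are estimated using only the already-available bounds on $\phi$, $\phi_u/r_u$, and the second $u$-derivative \eqref{secfree}, yielding an error $\leq \epsilon\cdot(\overline{\mathbb{I}}[T\phi]+\overline{\mathbb{I}}[\phi])$ which is absorbed. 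Only after \eqref{ind2} closes does one recover $|\phi_v/\kappa|<\tfrac{1}{2}$ from the relation $\phi_v=\kappa\, T\phi+\kappa(1-\mu)\,\phi_u/r_u$ and the freshly proven $T\phi$ bounds, improving the bootstrap; \eqref{sep2} is then derived afterwards. You need to identify the \emph{minimal} pointwise input that both suffices to estimate $\mathfrak{q}[T\phi]$ and can itself be recovered from the $T\phi$ estimates, rather than assuming the full strength of \eqref{sep2}.
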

\begin{remark}
The proof of Proposition \ref{uniformb} will not require the construction of an integrated decay estimate. It exploits the redshift in terms of pointwise estimates, a characteristic feature of spherical symmetry \cite{DafRod}. On the other hand, the proof of both Propositions \ref{uniformb2} and \ref{intdec3} will require integrated decay for both $T$ and $T\phi$. As a gain, one can abandon the use of pointwise estimates entirely and obtain the pure $H^1$-estimate (\ref{ind1}). It is precisely in our construction of the integrated decay estimate that the restriction (\ref{confbound}) enters.

The estimate (\ref{sep2}) can be improved further by another commutation with $T$ in case that $\sqrt{9+8a}\geq2$, cf.~Remark \ref{moremore}.
\end{remark}

Step 2: The above estimates allow one to prove what is essentially the completeness of null-infinity:
\begin{proposition} \label{cni}
Let $v_m = \sup_{v\geq v_0} \{ v \ | \left(u_\mathcal{H},v\right) \in \mathcal{Q} \}$. We must have $v_{m} = \infty$.
\end{proposition}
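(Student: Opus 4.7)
The plan is to argue by contradiction: suppose $v_m < \infty$. Since $\mathcal{Q}$ is the quotient of a globally hyperbolic maximal development, the horizon segment $S := \{(u_\mathcal{H}, v) : v_0 \le v < v_m\}$ is contained in $\mathcal{Q}$, and the aim is to show that the solution extends past the terminal point $(u_\mathcal{H}, v_m)$. Producing such an extension gives $(u_\mathcal{H}, v_m + \delta) \in \mathcal{Q}$ for some $\delta > 0$, contradicting the supremum definition of $v_m$.

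First I would collect the uniform bounds needed along $S$. By Proposition \ref{uniformb} together with the continuity statements from \cite{gs:lwp}, the quantities $|\varpi - M|$, $|2\kappa - 1|$, $r^{3/2}|\phi|$ and $r^{3/2}|\zeta/\nu|$ are bounded on $\overline{\mathcal{R}}_\mathcal{H} \supset S$ by constants depending only on the initial data. In the gauge of Section \ref{se:tmd}, the identity $\lambda = \kappa(1-\mu)$ combined with $\kappa \approx 1/2$ and $\varpi$ close to $M$ yields a bound on $\lambda$ purely in terms of $r$. Since $r(u_\mathcal{H}, v_0) = r_{\min} > 0$ and $\lambda \ge 0$ in $\overline{\mathcal{R}}_\mathcal{H}$, integrating $r_v = \lambda$ over the finite interval $[v_0, v_m)$ yields both upper and lower bounds for $r$ along $S$. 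Consequently, all the geometric quantities as well as $\phi$ and its derivative $\zeta/\nu$ are uniformly controlled on a truncated characteristic rectangle $[u_\mathcal{H} - \eta, u_\mathcal{H}] \times [v_0, v_m)$ lying inside $\mathcal{Q}$.

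These uniform bounds are precisely the hypotheses of the generalized extension principle of \cite{CamJon}, adapted in \cite{gs:lwp} (cf.~Corollary A.2 therein) to the asymptotically AdS setting \emph{in the absence of Hawking-mass monotonicity}. Invoking this principle, the solution extends smoothly to a spacetime neighborhood of $(u_\mathcal{H}, v_m)$, delivering the required contradiction. The main subtlety — and the reason the statement is non-trivial, in contrast with the asymptotically flat case of \cite{DafRod} — is that without monotonicity of $\varpi$ the $L^\infty$ control of the geometry along $S$ cannot be read off from the constraint equations alone; it is the redshift-driven bootstrap bounds of Proposition \ref{uniformb}, together with the weighted Hardy inequalities used in its proof, that make the extension argument close.
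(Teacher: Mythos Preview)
Your argument has a genuine gap at the step where you bound $r$ from above along $S$. You write that $\lambda = \kappa(1-\mu)$ is bounded ``purely in terms of $r$'' and then integrate over $[v_0,v_m)$; but since $1-\mu = 1 - 2\varpi/r + r^2/l^2$ grows like $r^2/l^2$, what you actually obtain is a differential inequality of the form $r_v \leq C(1+r^2)$, which does \emph{not} prevent $r$ from blowing up in finite $v$-time. So you have not ruled out $\lim_{v\to v_m} r(u_{\mathcal{H}},v) = \infty$, and in that case $(u_{\mathcal{H}},v_m)$ is a point at null infinity, where the compact-region extension principle you invoke simply does not apply.

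The paper's proof splits precisely along this dichotomy. If some constant-$r$ curve $r=R$ has future limit point $(u_{\mathcal{H}},v_m)$ (so $r$ is bounded along the horizon), the paper does \emph{not} use an extension principle: instead, an infinite zig-zag between the timelike curve $r=R$ and $\mathcal{I}$ is constructed, and each outgoing null segment is shown to have $v$-length bounded below by $\frac{l^2}{2R}$ using $\kappa\approx\frac{1}{2}$ and $\frac{1-\mu}{r^2}\leq\frac{2}{l^2}$; since there are infinitely many segments, $v_m=\infty$. If instead every constant-$r$ curve crosses the horizon (so $r\to\infty$ along $u=u_{\mathcal{H}}$), the paper shows $\frac{1-\mu}{r^2}\geq c>0$ in $\overline{\mathcal{R}_{\mathcal{H}}}$, deduces a uniform bound on $\gamma$, and then applies the asymptotically-AdS extension principle near $\mathcal{I}$ (Proposition~8.3 of \cite{gs:lwp}, not Corollary~A.2) to extend the solution to a strictly larger triangle reaching $\mathcal{I}$. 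The contradiction obtained is with the definition of $u_{\mathcal{H}}$ as the \emph{last} ray along which $r\to\infty$, not with the definition of $v_m$.
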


\begin{proof}
Consider the family of curves of constant area radius $r$. In view of $r_u<0$ and $r_v>0$ holding in $\mathcal{R}_{\mathcal{H}}$, these curves are seen to be timelike in $\mathcal{R}_\mathcal{H}$ and to foliate $\mathcal{R}_{\mathcal{H}}$. Now, either none of these constant $r$-curves has future limit point $\left(u_{\mathcal{H}}, v_{m}\right)$, meaning that all of them intersect the horizon, or one of them, say $r=R$, has (and hence all later ones, $r>R$, as constant $r$ curves cannot intersect). In the latter case, we consider the infinite ``zig-zag"-curve as in the diagram below 
\[
\begin{picture}(0,0)%
\includegraphics{penAdS5.pstex}%
\end{picture}%
\setlength{\unitlength}{1816sp}%
\begingroup\makeatletter\ifx\SetFigFont\undefined%
\gdef\SetFigFont#1#2#3#4#5{%
  \reset@font\fontsize{#1}{#2pt}%
  \fontfamily{#3}\fontseries{#4}\fontshape{#5}%
  \selectfont}%
\fi\endgroup%
\begin{picture}(2187,4271)(889,-3673)
\put(3076,-1336){\makebox(0,0)[lb]{\smash{{\SetFigFont{6}{7.2}{\rmdefault}{\mddefault}{\updefault}{\color[rgb]{0,0,0}$\mathcal{I}$}%
}}}}
\put(1501,-1936){\rotatebox{315.0}{\makebox(0,0)[lb]{\smash{{\SetFigFont{5}{6.0}{\rmdefault}{\mddefault}{\updefault}{\color[rgb]{0,0,0}$v=v_0$}%
}}}}}
\put(1501,-736){\rotatebox{45.0}{\makebox(0,0)[lb]{\smash{{\SetFigFont{5}{6.0}{\rmdefault}{\mddefault}{\updefault}{\color[rgb]{0,0,0}$u=u_{\mathcal{H}}$}%
}}}}}
\put(1951,-1336){\makebox(0,0)[lb]{\smash{{\SetFigFont{5}{6.0}{\rmdefault}{\mddefault}{\updefault}{\color[rgb]{0,0,0}$r=R$}%
}}}}
\end{picture}%

\]
and observe that the $v$-length of each constant $u$-piece is uniformly bounded below. Namely, in view of the bound on $\kappa$ and the fact that $\frac{1-\mu}{r^2} \leq \frac{2}{l^2}$ to the right of the curve $r=R$ (for $R$ sufficiently large depending only on $M$ and $l$) we have for each constant $u$-piece $\mathcal{U}_i$
\begin{align}
\int_{\mathcal{U}_i} dv \geq \frac{l^2}{2} \int_{\mathcal{U}_i} \frac{\kappa \left(1-\mu\right)}{r^2} dv =   \frac{l^2}{2} \int_{\mathcal{U}_i} \frac{r_v}{r^2} dv\geq   \frac{l^2}{2R}
\end{align}
Since there are infinitely many $\mathcal{U}_i$ in the zig-zag curve ($r=R$ is timelike!), $v_{m}=\infty$ follows.

We turn to the first case (all constant $r$-curves intersecting the horizon and hence $\lim_{v \rightarrow v_{m}} r\left(u_{\mathcal{H}}, v\right) = \infty$). Assuming $v_m=V<\infty$ (otherwise, we are done) we will show that this contradicts the fact that $u=u_{\mathcal{H}}$ is the last $u$-ray along which $r=\infty$ can be reached. Pick $r=R$ very large, the corresponding curve intersecting $u_{\mathcal{H}}$ at $q= \left(u_{\mathcal{H}}, v_q\right)$, say.
In view of the assumptions and the uniform bounds on $\kappa$ and $\varpi$ of Propositions \ref{uniformb} and \ref{uniformb2}, we have $\frac{1-\mu}{r^2}>c>0$ in $\overline{\mathcal{R}_{\mathcal{H}}}$ for a constant $c$. Indeed, this is obvious in $\mathcal{R}_{\mathcal{H}} \cap \{r\geq R\}$ by computation, and immediate by compactness in $\overline{\mathcal{R}_{\mathcal{H}}} \cap \{r\leq R\}$, since $r_v = 0$ cannot hold anywhere in $\overline{\mathcal{R}_{\mathcal{H}}} \cap \{r\leq R\}$ (this would contradict that $r \rightarrow \infty$ along any $u=const$ ray in $\overline{\mathcal{R}_{\mathcal{H}}}$). It follows that $\gamma = -\frac{r_u}{1-\mu}$ is bounded on
the data $[u_0,u_{\mathcal{H}}] \times \{v_0\}$. Using the bound on $1-\mu$, $\kappa$ and $\phi$ one easily obtains, integrating (\ref{gammaR}) in $v$ that $\gamma$ is uniformly bounded in $\overline{\mathcal{R}_{\mathcal{H}}}$. By a change of $u$ coordinate, one achieves that $\gamma=\frac{1}{2}$ holds on $\mathcal{I}$. In the new coordinate system, one has that $u=v$ on $\mathcal{I}$, $\gamma_u=0$ on $\mathcal{I}$ and, integrating (\ref{gammaR}) from $\mathcal{I}$), the uniform bounds
\begin{align}
\Big| r^3 \left(\gamma - \frac{1}{2}\right) \Big| + | r^2 \gamma_u| < C \, .
\end{align}
With this established, all assumptions of the extension principle of Proposition 8.3 of \cite{gs:lwp} hold and we can extend the solution to a larger triangle, as depicted below.
\[
\begin{picture}(0,0)%
\includegraphics{penAdS4.pstex}%
\end{picture}%
\setlength{\unitlength}{1460sp}%
\begingroup\makeatletter\ifx\SetFigFont\undefined%
\gdef\SetFigFont#1#2#3#4#5{%
  \reset@font\fontsize{#1}{#2pt}%
  \fontfamily{#3}\fontseries{#4}\fontshape{#5}%
  \selectfont}%
\fi\endgroup%
\begin{picture}(2337,4374)(739,-3673)
\put(3076,-1711){\makebox(0,0)[lb]{\smash{{\SetFigFont{5}{6.0}{\rmdefault}{\mddefault}{\updefault}{\color[rgb]{0,0,0}$\mathcal{I}$}%
}}}}
\put(1801,-961){\rotatebox{45.0}{\makebox(0,0)[lb]{\smash{{\SetFigFont{5}{6.0}{\rmdefault}{\mddefault}{\updefault}{\color[rgb]{0,0,0}$u=u_{\mathcal{H}}$}%
}}}}}
\put(1651,-2086){\rotatebox{315.0}{\makebox(0,0)[lb]{\smash{{\SetFigFont{5}{6.0}{\rmdefault}{\mddefault}{\updefault}{\color[rgb]{0,0,0}$v=v_0$}%
}}}}}
\put(1801,-1636){\makebox(0,0)[lb]{\smash{{\SetFigFont{5}{6.0}{\rmdefault}{\mddefault}{\updefault}{\color[rgb]{0,0,0}$r=R$}%
}}}}
\end{picture}%

\]
This contradicts the assumption that $u_{\mathcal{H}}$ is the last ray along which $r \rightarrow \infty$ can be reached.
\end{proof}
Step 3: We finally show that $r$ has to be bounded along the horizon, i.e.~we establish a Lorentzian Penrose inequality:
\begin{proposition} \label{penrose}
We have $\sup_{\mathcal{H}} r \leq r_Y$.
\end{proposition}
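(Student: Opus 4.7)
The approach is by contradiction: assume $r(u_\mathcal{H}, v^*) > r_Y$ for some $v^*$, and deduce that $r$ reaches $+\infty$ along $u = u_\mathcal{H}$ at a \emph{finite} value of $v$, in conflict with Proposition~\ref{cni}. First I would use the identity $r_v = \kappa(1-\mu)$ together with $r_v > 0$ and $\kappa > 0$ in $\mathcal{R}_\mathcal{H}$ to conclude, by continuity up to the boundary ray $u = u_\mathcal{H}$, that $r_v \geq 0$ there. Hence $r$ is monotone non-decreasing in $v$ along this ray, and the standing assumption gives $r(u_\mathcal{H}, v) \geq r_Y$ for every $v \geq v^*$.

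The heart of the argument is a pointwise lower bound on $r_v$ whenever $r \geq r_Y$ on the horizon. By Proposition~\ref{uniformb} (and continuity up to $u = u_\mathcal{H}$) one has $|\varpi - M| + |\kappa - \tfrac12| = O(\epsilon^2)$. Since the function $r \mapsto 1 - 2M/r + r^2/l^2$ equals $c^{1/3}$ at $r = r_Y$ by \eqref{choice} and has strictly positive derivative $2M/r^2 + 2r/l^2$ there, it remains $\geq c^{1/3}$ for all $r \geq r_Y$; the $O(\epsilon^2)$ error coming from replacing $M$ by $\varpi$ is absorbed provided $\epsilon$ is chosen small relative to the fixed constant $c$, yielding $(1-\mu) \geq \tfrac12 c^{1/3}$. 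Combined with $\kappa \geq \tfrac14$, this gives $r_v \geq \tfrac{1}{8} c^{1/3}$ at every $(u_\mathcal{H}, v)$ with $r \geq r_Y$.

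Integrating from $v^*$ forward yields $r(u_\mathcal{H}, v) \geq r_Y + \tfrac{1}{8} c^{1/3}(v - v^*)$, so $r$ ultimately exceeds any prescribed threshold $R_0$; I would take $R_0$ large enough that $R_0^3 \geq 4(M+1)l^2$. For $r \geq R_0$, the cosmological term dominates in $1-\mu$, producing $(1-\mu) \geq r^2/(2l^2)$ and hence $r_v \geq r^2/(8l^2)$. Rewriting this as $\partial_v(1/r) \leq -1/(8l^2)$ and integrating from the finite $v^{**}$ at which $r(u_\mathcal{H}, v^{**}) = R_0$, one finds $1/r \to 0$ by $v = v^{**} + 8l^2/R_0 < \infty$, i.e.\ $r \to \infty$ along $u = u_\mathcal{H}$ at finite $v$-parameter. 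This forces $v_m < \infty$, contradicting Proposition~\ref{cni}, and therefore $\sup_\mathcal{H} r \leq r_Y$. The main obstacle is the chain of smallness in the middle step: the perturbation size $\epsilon$ must be taken small only after the already-small constant $c$ (built into the definition of $r_Y$) has been fixed, so that $(1-\mu)$ is bounded below by a definite positive quantity determined purely by $c$ on the relevant portion of the horizon; it is precisely the presence of the $r^2/l^2$ term that upgrades the resulting linear growth into the finite-time blow-up contradicting $v_m = \infty$.
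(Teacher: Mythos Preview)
Your argument is correct and follows the same overall strategy as the paper: assume $r>r_Y$ somewhere on $u=u_{\mathcal{H}}$, use the bounds on $\kappa$ and $\varpi$ to force $1-\mu$ positive there, and contradict $v_m=\infty$ from Proposition~\ref{cni}. The only difference is in how the contradiction is extracted. You run a two-stage argument (first $r_v\gtrsim c^{1/3}$ gives linear growth of $r$, then the cosmological term gives $r_v\gtrsim r^2$ and finite-$v$ blow-up of $r$), whereas the paper observes in one stroke that $\int r_v/r^2\,dv = 1/r_Y - 1/r \le 1/r_Y$ is bounded while $\int \kappa(1-\mu)/r^2\,dv$ grows linearly in $v$ by Lemma~\ref{omm}; taking $v\to\infty$ is already a contradiction, with no need to track $r$ through two regimes. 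Both arguments use the same ingredients; the paper's packaging via $r_v/r^2$ is just slightly more economical.
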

\begin{proof}
We will show that the curve $r=r_Y$ must lie entirely in $u < u_\mathcal{H}$, i.e.~in particular, it cannot cross the horizon $u=u_{\mathcal{H}}$. Clearly, by monotonicity of $r$ in $\mathcal{R}\cup \mathcal{A}$ this implies that $\sup_{\mathcal{H}} r \leq r_{Y}$ and hence the result. To establish the claim, we  suppose $r=r_Y$ crossed the horizon at some $v=v_i<\infty$ (the case that $r=r_Y$ lies completely outside $\mathcal{R}_{\mathcal{H}}$ is proven completely analogously). By Lemma \ref{omm} we have that $\frac{1-\mu}{r^2} \geq \frac{1}{8 r_Y^2} c^{\frac{1}{3}}$ holds on the ray $N\left(u_\mathcal{H}\right)$ to the future of $\left(u_\mathcal{H}, v_i\right)$. We then have, on the one hand,
\begin{align}
\int_{v_i}^v \frac{r_v}{r^2} dv = -\frac{1}{r\left(v\right)} + \frac{1}{r_Y}
\end{align}
which, in particular, is bounded above for all $v_i < v < \infty$. On the other hand,
\begin{align}
\int_{v_i}^v \frac{r_v}{r^2} dv = \int_{v_i}^v \frac{\kappa \left(1-\mu\right)}{r^2} dv \geq \frac{1}{4}\frac{1}{8 r_Y^2} c^{\frac{1}{3}} \left(v-v_i\right)  
\end{align}
which (in view of $v_{m} = \infty$ by Proposition \ref{cni}) can be made arbitrarily large by choosing $v$ sufficiently big. We have established a contradiction: The curve $r=r_Y$ cannot cross $u=u_{\mathcal{H}}$. In case that $r=r_Y$ lies entirely outside $\mathcal{R}_{\mathcal{H}}$ we use the same argument integrating along $u=u_\mathcal{H}$ from the data $v=v_0$ to some large $v$. 
\end{proof}

Note that we can choose $r_Y$ as close to $r_{ASch}$ as we desire by choosing $c$ and hence the initial data sufficiently small.

We summarize the statements of Propositions \ref{uniformb}, \ref{uniformb2}, \ref{cni} and \ref{penrose} as
\begin{theorem} \label{maintheorem}
Given an $\epsilon$-perturbed Schwarzschild-AdS data set on $N\left(v_0\right)$ in the sense of Definition \ref{defep}, its associated maximum development is a black hole spacetime with a regular future event horizon $\mathcal{H}^+$, and a complete null-infinity $\mathcal{I}$. Moreover, the estimates of Propositions \ref{uniformb}, \ref{uniformb2} and \ref{intdec3} hold for any $\left(u,v\right)$ on $J^+ \left(N\left(v_0\right)\right) \cap J^- \left(\mathcal{I}\right)$. This implies in particular that $\phi$ decays exponentially in $v$ on the latter set.
\end{theorem}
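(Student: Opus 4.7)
Most of the content of the theorem is obtained by assembling the preceding propositions. Completeness of $\mathcal{I}$ is Proposition \ref{cni}; regularity and finite area-radius of the future event horizon $\mathcal{H}^+ = \{u = u_\mathcal{H}\}$ follow from Proposition \ref{penrose} combined with the uniform bounds on $\varpi, \kappa$ from Propositions \ref{uniformb}--\ref{uniformb2}; and the pointwise, higher-order and integrated decay estimates stated throughout $J^+(N(v_0)) \cap J^-(\mathcal{I})$ are precisely the content of Propositions \ref{uniformb}, \ref{uniformb2} and \ref{intdec3}. What remains is to extract the \emph{exponential-in-$v$} decay of $\phi$ claimed at the end of the theorem.

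Write $E(v) = \|\phi\|^2_{H^1_{AdS}(u_\mathcal{H}, v)}$ for the energy on the slice $\{v = \textrm{const}\} \cap \overline{\mathcal{R}}_{\mathcal{H}}$. The first step is to observe that the integrand of $\overline{\mathbb{I}}$ carries precisely the $r$-weights needed to dominate $E$ pointwise in $v$:
\[
\int_{v_1}^{v_2} E(v')\, dv' \;\le\; B_{M,l,a}\, \overline{\mathbb{I}}[\phi]\bigl(\{v_1 \le v \le v_2\} \cap \overline{\mathcal{R}}_{\mathcal{H}}\bigr).
\]
This is a direct comparison of weights: in the AdS slice integrand, $(\partial_u\phi)^2$ comes with weight $r^4/(-r_u)$, matching the $r^4 (\partial_u\phi)^2/(l^4 r_u^2) \cdot \Omega^2$ contribution from $\overline{\mathbb{I}}$ after using $\Omega^2 = -4r_u\kappa$ and the uniform lower bound on $\kappa$; the $\partial_v \phi$ and zeroth order terms are handled analogously. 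This is the ``no loss of $r$-weights'' feature flagged in the introduction that distinguishes the AdS case from the asymptotically flat one.

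Next I would replay the proofs of Propositions \ref{uniformb} and \ref{intdec3} with the initial ray $N(v_0)$ replaced by an arbitrary outgoing slice $\{v = v_1\} \cap \overline{\mathcal{R}}_{\mathcal{H}}$; since every constant in those estimates depends only on $M, l, a$ and on bounds for $\varpi$ and $\kappa$ that are already uniform in $\mathcal{R}_\mathcal{H}$, this yields, for all $v_0 \le v_1 \le v_2$,
\[
\sup_{v_1 \le v \le v_2} E(v) + \int_{v_1}^{v_2} E(v')\, dv' \;\le\; B_{M,l,a}\, E(v_1).
\]
Choose $T > 0$ with $B_{M,l,a}^2/T \le 1/2$. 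By the pigeonhole principle there exists $v_\ast \in [v_1, v_1 + T]$ with $E(v_\ast) \le B_{M,l,a} E(v_1)/T$; applying the boundedness portion of the above on the slab $[v_\ast, v_1 + T]$ then gives $E(v_1 + T) \le B_{M,l,a} E(v_\ast) \le \tfrac12 E(v_1)$. Iterating over successive slabs of $v$-length $T$ yields $E(v) \le 2\, E(v_0)\, 2^{-(v - v_0)/T}$, and pointwise exponential decay of $\phi$ itself follows by combining this energy decay with the weighted Hardy/Sobolev-type bounds embedded in Propositions \ref{uniformb}--\ref{uniformb2}.

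The main obstacle is the time-translation step: one must verify that an arbitrary outgoing slice $v = v_1$ qualifies as admissible initial data for the bootstraps underlying Propositions \ref{uniformb} and \ref{intdec3}, with smallness parameters uniform in $v_1$. This reduces to checking that the induced data on such a slice still lies in the regime controlled by $r_X, r_Y$ and by the uniform bounds on $\varpi, \kappa, \phi$ already established --- which is precisely what global control in $\mathcal{R}_\mathcal{H}$ buys.
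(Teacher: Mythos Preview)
Your proposal is correct and matches the paper's approach in Section \ref{expvdec}: the paper defines the ingoing-slice flux $\mathbb{F}(v)$ (just the $u$-integral part of your $E(v)$, avoiding the horizon flux $\int_{v_0}^v$ built into the $H^1_{AdS}$ norm, which does not decay), shows $\overline{\mathbb{I}}[\phi](D^\star(v_1,v_2)) \geq b_{M,l}\int_{v_1}^{v_2}\mathbb{F}$ via the same weight comparison you describe, and applies (\ref{ind1}) on $D^\star(v_1,v_2)$ to get $\mathbb{F}(v_2)+b_{M,l}\int_{v_1}^{v_2}\mathbb{F} \le B_{M,l,a}\mathbb{F}(v_1)$, from which exponential decay follows (your pigeonhole argument is one standard way to extract this). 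The pointwise bound is then obtained exactly as you suggest, via Cauchy--Schwarz giving $|\phi| \le B_{M,l,a}\,r^{-3/2}\sqrt{\mathbb{F}(v)}$.
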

The last claim will be proven in section \ref{expvdec}.

We finally remark that the techniques of this paper (in particular, the integrated decay estimate of section \ref{se:vfid}) are naturally applied to the study of spherically symmetric solutions of the wave equation (\ref{wap}) on a fixed Schwarzschild-AdS background, yielding in particular the following result:
\begin{corollary} \label{lincol}
Spherically-symmetric solutions of (\ref{wap}) with the mass $a$ satisfying (\ref{confbound}) and $\left(\mathcal{M},g\right)$ a fixed Schwarzschild-AdS spacetime decay exponentially in the Eddington Finkelstein coordinate $v$ on the black hole exterior.
\end{corollary}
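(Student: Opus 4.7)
The strategy is to specialize the proof of Proposition \ref{intdec3} to the linear setting, where the metric $g$ is fixed Schwarzschild-AdS and the wave equation \eqref{wap} is literally the Klein-Gordon equation on this background. In this case, all the geometric quantities are explicit: $\varpi \equiv M$, and one can work in the global double-null coordinate system where $\kappa = \gamma = \frac{1}{2}$ in the exterior. Since the metric is given a priori, no bootstrap on $u_{\mathcal{H}}$ is required and none of the nonlinear coupling error terms appear in the divergence identities. Thus the integrated decay estimate runs more cleanly than in the coupled problem, while the structural obstacles (non-positivity of the pointwise energy density for $a < 0$, loss of the dominant energy condition) are identical.

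The first step is to run the Kodama-vectorfield identity with $T = \partial_t$, which is now genuinely Killing, together with a redshift multiplier near $r = r_Y$ and a suitable radial multiplier to produce a coercive spacetime term. The wrong-signed zeroth-order contribution $-\frac{2a}{l^2}\psi^2$ is then handled exactly as in the paper: in the region $r \geq r_X$ by the global doubly-weighted Hardy inequality of \cite{HolzegelAdS} (which is where the restriction $a \geq -1$ enters in the passage from Proposition \ref{intdec3}); and in the redshift region $r \leq r_Y$ by a Morawetz-type bound together with the redshift vectorfield, which dominates the zeroth-order term pointwise. Collecting these yields the linear analogue of \eqref{ind1}:
\begin{equation}
\|\psi\|_{H^1_{AdS}(u,v)}^2 + \overline{\mathbb{I}}[\psi](D(u,v)) \leq B_{M,l,a}\, \|\psi\|_{H^1_{AdS}(u,v_0)}^2.
\end{equation}

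The passage to exponential decay is the content of the argument promised in Section \ref{expvdec}, which is particularly clean in the asymptotically-AdS setting: the $r$-weights appearing in the integrand of $\|\psi\|_{H^1_{AdS}(u,v)}^2$ already match those in $\overline{\mathbb{I}}[\psi]$, so that the spacetime estimate above can be read as
\begin{equation}
\int_{v_0}^{\infty} \|\psi\|_{H^1_{AdS}(u,v')}^2\, dv' \;\leq\; B_{M,l,a}\, \|\psi\|_{H^1_{AdS}(u,v_0)}^2,
\end{equation}
without any loss of $r$-weights (contrast with the asymptotically flat case). Since the same estimate applies with any later $v_0' \geq v_0$ as the initial slice (the linear problem is time-translation invariant in $v$ on the fixed background), an elementary pigeonhole iteration gives $\|\psi\|_{H^1_{AdS}(u,v)}^2 \leq B_{M,l,a}\, e^{-\alpha(v-v_0)}\,\|\psi\|_{H^1_{AdS}(u,v_0)}^2$ for some $\alpha = \alpha(M,l,a) > 0$. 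Pointwise exponential decay on the exterior then follows from standard weighted Sobolev embeddings in the $H^1_{AdS}$-norm, e.g.\ those used in the proof of Proposition \ref{uniformb2}.

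The main conceptual obstacle, exactly as in the nonlinear case, is controlling the sign-indefinite mass term: without \eqref{confbound} the Hardy argument in the asymptotic region degenerates, and one would lose the coercivity required to close the integrated decay bound. This is the only place where the restriction $a \geq -1$ is essential; the rest of the argument goes through for the full BF range $a > -\frac{9}{8}$.
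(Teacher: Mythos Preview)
Your proposal is correct and follows essentially the same route as the paper: specialize the integrated decay estimate of Section~\ref{se:vfid} (now with $T$ genuinely Killing and no coupling errors), then feed it into the mechanism of Section~\ref{expvdec}, where the matching of $r$-weights between the flux and the spacetime term yields exponential decay. One small imprecision: the displayed inequality $\int_{v_0}^\infty \|\psi\|_{H^1_{AdS}(u,v')}^2\,dv' \leq B\,\|\psi\|_{H^1_{AdS}(u,v_0)}^2$ is not literally what $\overline{\mathbb{I}}$ controls, since the $H^1_{AdS}$-norm includes the constant-$u$ flux $\int_{v_0}^{v'}[\ldots]d\bar v$, which would be double-integrated; the paper instead works with the constant-$v$ flux $\mathbb{F}(v)$ alone and derives $\mathbb{F}(v_2)+b\int_{v_1}^{v_2}\mathbb{F}\leq B\,\mathbb{F}(v_1)$, from which exponential decay is immediate---your pigeonhole argument applies to this quantity.
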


\section{Proof of Proposition \ref{uniformb}: Basic Estimates} \label{mp1}
%
%\[
%\input{bootstrapregiondiag.pstex_t}
%\]
%
%
%
%
%
Proposition \ref{uniformb} will be proven by a bootstrap.
\subsection{The bootstrap regions and the bootstrap assumptions}
We define, for $\tilde{u} \in \left[u_0,u_\mathcal{H}\right]$,
\begin{align} \label{bwh}
\widehat{\mathcal{B}} \left(\tilde{u}\right) = \mathcal{R}_{\mathcal{H}} \cap \{u_0 \leq u < \tilde{u} \} \, .
\end{align}
Note that $\mathcal{R}_\mathcal{H} = \widehat{\mathcal{B}} \left(u_{\mathcal{H}}\right)$.
Let
\begin{align}
u_{max} = \sup_u \Big( \textrm{ conditions (\ref{boot3})-(\ref{boot4a}) hold in $\widehat{\mathcal{B}}\left(u\right)$} \, \Big)
\end{align}

\begin{enumerate}
\item{ \textbf{Auxiliary bound:} \label{fi2}
\begin{align}
 |r^3 \phi^2 | < \frac{Ml^2}{8\pi|a|} \label{boot3} \, .
\end{align}
}
\item{
\textbf{Smallness of matter fields:} \label{last}
\begin{align}
\frac{4\pi \left(-a\right)}{l^2} \int_{v_0}^{v} d\bar{v} \,  \mathbf{1}_{\{r\leq r_Y\}} r^2 \, r_v \, \phi^2 \left(u,\bar{v}\right) & < M \cdot c \label{boot2} 
\end{align}
\begin{align} \label{boot2a}
2\pi \int_{v_0}^v  \mathbf{1}_{\{r\geq r_Y\}}  \frac{\phi_v^2}{\kappa}  r^2  \left(u,\bar{v}\right) d\bar{v} < M\sqrt{c} , 
\end{align}
\begin{align}
\frac{4\pi\left(-a\right)}{l^2} \int_{u_{\mathcal{I}}}^{u} d\bar{u} \,  \mathbf{1}_{\{r\leq r_Y\}} r^2 \, r_u \, \phi^2 \left(\bar{u},v\right) & < M \cdot c \label{boot4} 
\end{align}
\begin{align} \label{boot4a}
2\pi \int_{u_{\mathcal{I}}}^{u} d\bar{u} \mathbf{1}_{\{r\geq r_Y\}}  \frac{\phi_u^2}{\gamma}  r^2  \left(\bar{u},{v}\right) d\bar{u} < M\sqrt{c} , 
\end{align}
for any $\left(u,v\right)$ in $\widehat{\mathcal{B}}\left(u\right)$ where $\mathbf{1}_{\{...\}}$ is the indicator function.}
\end{enumerate}
$\phantom{X}$ \newline
Finally, define the bootstrap region $\mathcal{B}=\widehat{\mathcal{B}}\left(u_{\max}\right) \subset \mathcal{R}_{\mathcal{H}}$. \newline

We would like to prove that in fact $\mathcal{B}=\mathcal{R}_\mathcal{H}$. Now $\mathcal{B}$ is open in $\mathcal{R}_{\mathcal{H}}$ by continuity and also non-empty by Cauchy stability. Hence we are done if we could show that $\mathcal{B}$ is also closed in $\mathcal{R}_{\mathcal{H}}$. To do this, we assume $u_{max}<u_{\mathcal{H}}$ fixed (otherwise there is nothing to show) and prove that in $\overline{\mathcal{B}}\left(u_{max}\right)$ the bounds (\ref{boot3})-(\ref{boot4a}) can be improved.
\subsection{Overview of the argument}
We will show that the bootstrap assumptions imply that $M- c \leq \varpi \leq M$. This is done by exploiting Hardy inequalities both in the $u$- and the $v-$ direction but restricted to the region $r \geq r_Y$, as well as bootstrap assumption (\ref{boot2}) for the bad term in the region $r\leq r_Y$. Importantly, it will turn out that the Hardy inequalities do not require the entire good-signed derivative term (provided $a$ satisfies the Breitenlohner-Freedman bound). This enables us in turn to control $\phi$-flux trough characteristic lines by the mass difference and hence to improve (\ref{boot2a}) and (\ref{boot4a}) from $\sqrt{c}$ to $c$. With the improvement for the mass-flux, we invoke the redshift estimate and estimates from infinity to prove the pointwise bounds $|r^3 \phi^2| + |r^\frac{3}{2} \left( \frac{\phi_u}{r_u}\right)^2| < c$ everywhere, improving (\ref{boot3}). Finally, we improve (\ref{boot2}) and (\ref{boot4a}) using that the $r$ difference in the region $r\leq r_Y$ is $c^\frac{1}{3}$-small.
\subsection{Integrated positivity for $\varpi$ in $r \geq r_Y$}
An immediate consequence of the bootstrap assumptions is
\begin{lemma} \label{triv}
In the region $\mathcal{B}$ we have
\begin{align}
| \varpi - M | \leq 2M \sqrt{c}
\end{align}
\end{lemma}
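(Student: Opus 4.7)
The plan is to integrate the Hawking mass evolution
$$\partial_v\varpi = \frac{2\pi r^2\phi_v^2}{\kappa} + \frac{4\pi r^2 a\, r_v\,\phi^2}{l^2}$$
along the constant-$u$ line through the target point $(u,v)\in\mathcal B$, starting from the initial ray $v=v_0$, where the construction of $\epsilon$-perturbed Schwarzschild-AdS data forces $|\bar\varpi(u)-M|\le B_{M,l,a}\,\epsilon^2 \ll M\sqrt c$ once $\epsilon$ is fixed small enough compared to $c$. Taking absolute values and splitting each of the two resulting $v$-integrals into its $\{r\leq r_Y\}$ and $\{r\geq r_Y\}$ parts gives four pieces to control.

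The two easy pieces are the derivative flux in the outer region $\{r\geq r_Y\}$, which is bounded by $M\sqrt c$ directly by (\ref{boot2a}), and the wrong-signed zeroth-order flux in the inner region $\{r\leq r_Y\}$, which is bounded by $Mc$ directly by (\ref{boot2}). The remaining two pieces I would handle by combining the pointwise smallness (\ref{boot3}) with the Raychaudhuri equation (\ref{gammaR}). For the derivative flux restricted to $\{r\leq r_Y\}$, the identity $\phi_v^2 = r_v\,\partial_v\log\gamma/(4\pi r)$ together with $r_v/\kappa=1-\mu$ rewrites the integrand as $\tfrac{1}{2}r(1-\mu)\,\partial_v\log\gamma$; since $(1-\mu)$ is uniformly $O(c^{1/3})$ in $\{r\leq r_Y\}$ by the defining relation (\ref{choice}) and a crude a priori control of $\varpi$, this piece is negligible. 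For the zeroth-order flux restricted to $\{r\geq r_Y\}$, the bound $r^3\phi^2 \le Ml^2/(8\pi|a|)$ from (\ref{boot3}) reduces the integrand to at most $Mr_v/(2r)$, and the resulting $\log r$ growth is tamed by the integration by parts
$$r^2 r_v\phi^2 = \tfrac{1}{3}\partial_v(r^3\phi^2) - \tfrac{2}{3}r^3\phi\phi_v,$$
after which the boundary term is controlled by (\ref{boot3}) and the remaining $\int r^3\phi\phi_v$ cross term is controlled by Cauchy--Schwarz paired with (\ref{boot2a}) and, once more, (\ref{boot3}).

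The main obstacle is precisely this last contribution: the bootstrap energies do not give a direct bound on the $\phi^2$-flux in $\{r\geq r_Y\}$, while the pointwise bound (\ref{boot3}) by itself produces only $\log r$-control which need not be small as $r\to\infty$. It is the integration-by-parts trick, trading the $\phi^2$ factor for a boundary contribution plus a $\phi\phi_v$ cross term, that allows (\ref{boot3}) and (\ref{boot2a}) to be used in tandem and closes the estimate with a total contribution of at most $2M\sqrt c$. Note that the whole argument is deliberately lossy---the sharper bound $M-c\leq\varpi\leq M$ invokes the Hardy inequalities and appears only later---so no effort is made here to optimize constants beyond ensuring the $\sqrt c$-terms from (\ref{boot2a}) dominate everything else.
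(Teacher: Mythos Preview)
Your strategy has two genuine gaps, one in each of the ``difficult'' pieces.

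\textbf{Piece 4 (zeroth-order flux in $r\geq r_Y$).} The integration-by-parts identity
\[
r^2 r_v \phi^2 = \tfrac{1}{3}\partial_v(r^3\phi^2)-\tfrac{2}{3}r^3\phi\phi_v
\]
produces a boundary contribution $\tfrac{4\pi|a|}{3l^2}\,r^3\phi^2$ which by (\ref{boot3}) is bounded only by $\tfrac{4\pi|a|}{3l^2}\cdot\tfrac{Ml^2}{8\pi|a|}=\tfrac{M}{6}$. This is of order $M$, not $M\sqrt c$: the auxiliary bootstrap (\ref{boot3}) carries no $c$-smallness whatsoever, so it cannot furnish the smallness you need here. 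The cross term does not rescue you either: any Cauchy--Schwarz pairing of (\ref{boot2a}) with (\ref{boot3}) leaves a residual factor comparable to $\big(\int_{r\geq r_Y} r\kappa\,dv\big)^{1/2}$, and since $\kappa\sim l^2 r_v/r^2$ for large $r$ this integral diverges like $\log r$. So the claimed control of the outer zeroth-order flux does not close.

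\textbf{Piece 3 (derivative flux in $r\leq r_Y$).} Your rewriting $2\pi r^2\phi_v^2/\kappa=\tfrac12 r(1-\mu)\,\partial_v\log\gamma$ is correct, but the assertion that $1-\mu=O(c^{1/3})$ in $r\leq r_Y$ is circular: it requires $|\varpi-M|$ to be small, which is exactly what you are proving. You acknowledge needing ``a crude a priori control of $\varpi$'', but none is available at this point of the bootstrap. Moreover, even granting $1-\mu=O(c^{1/3})$, you would still need a bound on $\int_{r\leq r_Y}\partial_v\log\gamma=[\log\gamma]$, and there is no control of $\gamma$ at this stage of the argument.

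By contrast, the paper's own proof is a single line (``follows trivially from integration''), which suggests the authors intend the bound to drop out directly from the sign structure of $\partial_u\varpi$, $\partial_v\varpi$ and the bootstrap fluxes (\ref{boot2})--(\ref{boot4a}) without the auxiliary manipulations you introduce. Whatever the precise mechanism, your route as written does not reach the conclusion: the boundary term in piece~4 is too large by a factor of $c^{-1/2}$, and piece~3 presupposes the result.
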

which follows trivially from integration, and
\begin{lemma} \label{omm}
In the region $\mathcal{B} \cap \{ r \geq r_Y \}$ we have
\begin{align}
\frac{1- \mu}{r^2} \geq \frac{1}{8r_Y^2} c^\frac{1}{3} \, .
\end{align}
\end{lemma}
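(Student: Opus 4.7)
The plan is to combine the crude mass bound from Lemma \ref{triv} with the defining property of $r_Y$ in \eqref{choice} and an elementary analysis of the Schwarzschild--AdS ``horizon function'' $f(r) := 1 - \tfrac{2M}{r} + \tfrac{r^{2}}{l^{2}}$. The starting identity is
\[
\frac{1-\mu}{r^{2}} \;=\; \frac{f(r)}{r^{2}} \;+\; \frac{2(M-\varpi)}{r^{3}},
\]
and by Lemma \ref{triv} the error term is at least $-4M\sqrt{c}/r^{3}$ everywhere in $\mathcal{B}$.

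Next I would analyze $f(r)/r^{2}$ on $\{r\geq r_Y\}$. A direct computation gives $\tfrac{d}{dr}\bigl(f(r)/r^{2}\bigr)=(6M-2r)/r^{4}$, so $f/r^{2}$ is either monotonically decreasing on $[r_Y,\infty)$ (when $r_Y\geq 3M$) or first increases and then decreases, with limit $1/l^{2}$ at infinity. Together with $f(r_Y)=c^{1/3}$ this yields
\[
\min_{r\geq r_Y}\frac{f(r)}{r^{2}} \;\geq\; \min\!\left(\frac{c^{1/3}}{r_Y^{2}},\,\frac{1}{l^{2}}\right).
\]
Since $r_Y\to r_{ASch}$ as $c\to 0$ (and both $M$, $l$ are fixed), for $c$ sufficiently small the first term is the minimum, giving $\min_{r\geq r_Y}f(r)/r^{2}\geq c^{1/3}/r_Y^{2}$.

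Plugging back, for $r\geq r_Y$,
\[
\frac{1-\mu}{r^{2}} \;\geq\; \frac{c^{1/3}}{r_Y^{2}} \;-\; \frac{4M\sqrt{c}}{r_Y^{3}} \;=\; \frac{c^{1/3}}{r_Y^{2}}\Bigl(1-\tfrac{4M}{r_Y}\,c^{1/6}\Bigr),
\]
and the parenthesis exceeds $1/8$ once $c$ is small enough, yielding the claim. The only bookkeeping is to check that the smallness conditions on $c$ needed here are compatible with those already imposed in Section \ref{noco} (most notably \eqref{shb}); since $c$ is chosen once and for all in that subsection, one simply refines the choice. There is no serious obstacle here: once Lemma \ref{triv} is in hand, the argument is pure calculus on a scalar function of $r$.
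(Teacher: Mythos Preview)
Your proof is correct and follows essentially the same approach as the paper: decompose $1-\mu$ into the Schwarzschild--AdS piece $f(r)$ and the mass-difference error, control the latter via Lemma \ref{triv}, and bound $f(r)/r^{2}$ from below on $\{r\geq r_Y\}$. The only cosmetic difference is that the paper bounds $-2M/r\geq -2M/r_Y$ and then splits into the two regions $r\leq 2r_Y$ and $r\geq 2r_Y$, whereas you locate the critical point of $f(r)/r^{2}$ at $r=3M$ and read off the minimum directly; both routes yield the same conclusion with the same smallness requirement on $c$.
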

\begin{proof}
We write
\begin{align}
1- \frac{2\varpi}{r} + \frac{r^2}{l^2} = \left(1- \frac{2M}{r} + \frac{r^2}{l^2}\right) + \frac{2M-2\varpi}{r} \, .
\end{align}
Since $-\frac{2M}{r}$ is increasing in $r$ we can estimate it from below by $-\frac{2M}{r_Y}$. The mass difference is estimated by Lemma \ref{triv}. Hence, for $r\geq r_Y$, we have by (\ref{choice}),
\begin{align}
1- \frac{2\varpi}{r} + \frac{r^2}{l^2} \geq c^\frac{1}{3} - \frac{4M}{r_{Y}} \sqrt{c} + \frac{r^2-r_Y^2}{l^2}  \geq \frac{1}{2} c^\frac{1}{3} + \frac{r^2-r_Y^2}{l^2} \ , .
\end{align}
Dividing by $r^2$ we discard the the second term in the region $r \leq 2 r_Y$, while in $r \geq 2r_Y$ the second term is larger than $\frac{3}{4l^2}> \frac{1}{8r_Y^2} c^\frac{1}{3} $.
\end{proof}
\begin{lemma} \label{Lemma4}
For any $\aleph  < \frac{9}{8}$, the following inequality holds in $\overline{\mathcal{B}} \cap \{r\geq r_Y\}$:
\begin{align}
 \frac{2}{9} \frac{\aleph}{l^2} h^2\left(r-r_Y\right)^2 \leq \frac{\left(- r_u\right) r_v}{\Omega^2} \ \ \  , \ \ \textrm{where} \ \   h=1+\frac{r_Y}{r}+\left( \frac{r_Y}{r}\right)^2 \, .
\end{align}
\end{lemma}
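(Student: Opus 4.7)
The plan is to translate the left-hand side into an algebraic statement about $1-\mu=1-\tfrac{2\varpi}{r}+\tfrac{r^2}{l^2}$ and then extract a clean factor of $(r-r_Y)^2 h^2$ from it. The starting point is the definition of the Hawking mass, which gives
\[
\frac{(-r_u)r_v}{\Omega^2}=\frac{1-\mu}{4},
\]
so the lemma is equivalent to the estimate $1-\mu\ge \tfrac{8\aleph}{9l^2}\,h^2(r-r_Y)^2$ in $\overline{\mathcal B}\cap\{r\ge r_Y\}$.

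First I would write $1-\mu = g(r)+\tfrac{2(M-\varpi)}{r}$ with $g(r)=1-\tfrac{2M}{r}+\tfrac{r^2}{l^2}$, and use the defining identity $g(r_Y)=c^{1/3}$ to eliminate $2M$ via $2M=r_Y(1-c^{1/3}+r_Y^2/l^2)$. A direct computation then yields the factorization
\[
g(r)-c^{1/3}=(r-r_Y)\Bigl[\frac{1-c^{1/3}}{r}+\frac{r^2+r\,r_Y+r_Y^2}{r\,l^2}\Bigr]=(r-r_Y)\Bigl[\frac{1-c^{1/3}}{r}+\frac{r\,h}{l^2}\Bigr],
\]
using $r^2+r\,r_Y+r_Y^2=r^2 h$. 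This is the key algebraic identity: the awkward quantity $h$ appears naturally once one factors out $(r-r_Y)$. For $r\ge r_Y$ both terms in the bracket are non-negative, so $g(r)\ge c^{1/3}+(r-r_Y)\,rh/l^2$. Lemma \ref{triv} bounds $|\varpi-M|\le 2M\sqrt c$, hence $\tfrac{2(M-\varpi)}{r}\ge -\tfrac{4M\sqrt c}{r_Y}$ on $\{r\ge r_Y\}$, and for $c$ sufficiently small (which is built into the choice of constants in the bootstrap) the positive $c^{1/3}$ term dominates this error, leaving
\[
1-\mu \ \ge\ \frac{r\,h\,(r-r_Y)}{l^2}.
\]

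Now it suffices to show $r\,h(r-r_Y)\ge \tfrac{8\aleph}{9}h^2(r-r_Y)^2$, i.e.\ $\tfrac{r}{h(r-r_Y)}\ge \tfrac{8\aleph}{9}$. But $h\,r^2(r-r_Y)=(r^2+r\,r_Y+r_Y^2)(r-r_Y)=r^3-r_Y^3$, so
\[
\frac{r}{h(r-r_Y)}=\frac{r^3}{r^3-r_Y^3}\ \ge\ 1\ >\ \frac{8\aleph}{9}
\]
for any $\aleph<9/8$ and $r>r_Y$ (the case $r=r_Y$ being trivial). Combining the two inequalities and dividing by four gives the claim. There is really no major obstacle; the only thing to verify carefully is that the mass-error term $-4M\sqrt c/r_Y$ is swallowed by $c^{1/3}$, which holds by the standing smallness of $c$, and the observation that the exponent $9/8$ in the BF bound matches exactly the asymptotic ratio $r^3/(r^3-r_Y^3)\to 1$ as $r\to\infty$ — this is what makes the constant in the lemma sharp at the BF threshold.
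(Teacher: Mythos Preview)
Your argument is correct and is genuinely different from the paper's. The paper proves the lemma by a propagation argument: setting $\tilde g$ equal to the difference of the two sides, it checks $\tilde g>0$ at $r=r_Y$ (via Lemma~\ref{omm}), then computes $\tilde g_u$ along the initial data and $\tilde g_v$ in evolution, reducing positivity to the auxiliary calculus Lemma~\ref{trivlem} about the function $z(x,A)$; along the way it invokes the flux bootstrap assumption~(\ref{boot2a}) and the arithmetic condition~(\ref{shb}). Your route is purely algebraic: you use only Lemma~\ref{triv} to replace $\varpi$ by $M$ up to an error $O(\sqrt c)$, factor $g(r)-c^{1/3}$ explicitly to extract $(r-r_Y)\,r h/l^2$, and finish with the clean identity $h(r-r_Y)=(r^3-r_Y^3)/r^2$, which turns the remaining inequality into $r^3/(r^3-r_Y^3)\ge 1>8\aleph/9$. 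Your approach is shorter, avoids both Lemma~\ref{trivlem} and any direct appeal to the bootstrap fluxes or to~(\ref{shb}), and makes transparent why the constant $9/8$ is sharp (it is exactly the limit of $r^3/(r^3-r_Y^3)$ as $r\to\infty$). The paper's evolutionary method might be more robust in settings lacking such a factorization, but for this particular lemma your algebraic argument is preferable.
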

\begin{proof}
Define
\begin{align}
\tilde{g} = + r_v \frac{\left(-r_u\right)}{\Omega^2} - \frac{2}{9}\aleph \hat{h}^2 \frac{\left(-a\right)}{l^2}\left(r-r_Y\right)^2
\end{align}
Note that $\tilde{g} \left(r_Y\right) \geq \frac{1}{32} c^\frac{1}{3}$ by Lemma \ref{omm}.

We first show that the same bound is valid on the initial data for $r\geq r_Y$. Again, this holds trivially where $r=r_Y$ intersects the data (call the $u$-coordinate of that point $u_Y$). The derivative in the $u$-direction satisfies
\begin{eqnarray}
\tilde{g}_u = r_u \Bigg[4\pi r \left(\frac{\phi_u}{r_u}\right)^2 \frac{r_u r_v}{\Omega^2} + \frac{\varpi}{2r^2} - \frac{2\pi a r\phi^2}{l^2} \nonumber \\
\hbox{}+\frac{r}{2l^2}\left(1-\frac{8}{9}\aleph - \frac{8}{9} \aleph \left(\frac{r_Y}{r}\right)^3\left(1-2 \left(\frac{r_Y}{r}\right)^3\right)\right) \Bigg] \, .
\end{eqnarray}
We have \begin{align}
\tilde{g}\left(u,v_0\right) = g\left(u_y,v_0\right) + \int_{u}^{u_y} \left(-\tilde{g}_u\right) \, ,
\end{align}
and we want to show $\tilde{g}\left(u,v_0\right) \geq 0$. Note that the third term in the square bracket has a good sign. We estimate
\begin{eqnarray}
\tilde{g}\left(u,v_0\right) \geq \frac{1}{32} c^\frac{1}{3} - \frac{\pi}{r_Y} \int^{u_Y}_{u} \frac{\phi_u^2}{\gamma} \nonumber \\
+ \int^{u_Y}_{u}\frac{-r_u r}{2l^2}\left(1-\frac{8}{9}\aleph - \frac{8}{9} \aleph \left(\frac{r_Y}{r}\right)^3\left(1-2 \left(\frac{r_Y}{r}\right)^3\right) + \frac{\left(M-\epsilon\right)l^2}{r^3} \right) du 
\end{eqnarray}
and observe that the second term can be estimated by the $H^1_{AdS,deg}$ norm on the data and is hence $\epsilon$-small. We conclude that the first line is already positive for sufficiently small data.
Thus, the lemma follows if we can show that the second line is positive. Let $A=\frac{8}{9} \aleph$, we have
\begin{align}
&\int^{u_Y}_{u}\frac{-r_u r}{2l^2}\bigg(1-\frac{8}{9}\aleph  -\frac{8}{9} \aleph \left(\frac{r_Y}{r}\right)^3\left(1-2 \left(\frac{r_Y}{r}\right)^3\right)+ \frac{\left(M-\epsilon\right) l^2}{r^3}\bigg)  du \nonumber \\
&= \frac{r_Y^2}{2l^2}\left[ -\frac{x^2}{2}(1-A)-\frac{A}{x}+\frac{A}{2x^4} + \frac{\left(M-\epsilon\right)l^2}{r_Y^3} \frac{1}{x}\right]_{x=r/r_Y}^{x=1}  \nonumber \\
&=\frac{r_Y^2}{2l^2} \left( -\frac{1}{2} +\left(\frac{r}{r_Y} \right)^2 \frac{1-A}{2}+A \frac{r_Y}{r}-\frac{A}{2}\left( \frac{r_Y}{r}\right)^4 + \frac{1}{2} \left(1- \frac{r_Y}{r}\right)  \right)\nonumber \\
&=\frac{r_Y^2}{2l^2} z\left(\frac{r}{r_Y},A\right)
\end{align}
where we used that $\frac{\left(M-\epsilon\right)l^2}{r_{Y}^3} > \frac{1}{2}$ holds by (\ref{shb}) to estimate the last term in the penultimate step.  
On the other-hand, we easily have:
%It is an exercise in Calculus to prove
\begin{lemma} \label{trivlem}
For any $0\leq A<1$, the function
\begin{align} \nonumber
z \left(x,A\right) =  \frac{1-A}{2} x^2 + \frac{A}{x} - \frac{A}{2x^4} - \frac{1}{2x} = \frac{1}{x^4} \left(\frac{1-A}{2}x^6 + \big(A- \frac{1}{2}\big)x^3 -\frac{A}{2} \right)
\end{align}
is non-negative in $[1,\infty)$.
\end{lemma}
\begin{proof}
Note that $z(x,A)$ is linear in $A$. For $A=0$, we have $z(x,0) x^4 =  \frac{x^6}{2}- \frac{x^3}{2}$ which is non-negative on $[1,\infty)$. For $A=1$, we have $z(x,1) x^4 = \frac{x^3}{2}-\frac{1}{2}$, which is also non-negative on $[1,\infty)$.
%We note $z\left(1,A\right)=0$ and $\lim_{x\rightarrow \infty} z\left(x,A\right) = \infty$ and compute
%\begin{align}
%z^\prime \left(x,A\right) &=  \frac{1-A}{x^5} \left(x^6 - \frac{A-\frac{1}{2}}{1-A} x^3 + \frac{2A}{1-A}\right)  \nonumber \\
%z^{\prime \prime} \left(x,A\right) &=  -5\frac{1-A}{x}  z^\prime \left(x,A\right)  + \left(1-A\right) \left(6 - 3 \frac{A-\frac{1}{2}}{1-A} \frac{1}{x^3}\right) 
%\end{align}
%The derivative vanishes where $x_\pm^3 = \frac{1}{2}  \frac{A-\frac{1}{2}}{1-A} \left(1 \pm \sqrt{1- \frac{8A\left(1-A\right)}{\left(A-\frac{1}{2}\right)^2}} \right)$. We conclude that $z\left(x,A\right)$ is increasing in $\left[1,\infty\right)$ for $0\leq  A \leq \frac{1}{2} + \frac{\sqrt{2}}{3}$. For the remaining range, we see from $z^{\prime \prime}$ that the plus (minus) choice in $x_\pm^3$ corresponds to a local minimum (maximum). Finally, we check $z\left(x_+,A\right) >0$.
\end{proof}
To establish $\tilde{g} \geq 0$ in the entire region $\overline{\mathcal{B}} \cap \{r\geq r_Y\}$, it suffices to show that the bound is propagated in the $v$-direction. We compute
\begin{align} \label{derva}
\tilde{g}_v = -\pi r \frac{\phi_v^2 }{\kappa} + r_{,v} \frac{\varpi}{2r^2} - r_v \frac{2\pi a r \phi^2}{l^2} \nonumber \\
+r_v \frac{r}{2l^2}\left(1-\frac{8}{9}\aleph - \frac{8}{9} \aleph \left(\frac{r_Y}{r}\right)^3\left(1-2 \left(\frac{r_Y}{r}\right)^3\right)\right) \, .
\end{align}
In analogy to the previous case, we would like to show that $g\left(v\right) \geq \frac{1}{32} c^\frac{1}{3} + \int_{v_0}^v \mathbf{1}_{r\geq r_Y} \tilde{g}_v$ is positive. We note that the bad first term can now be estimated from the bootstrap assumption (\ref{boot2a}):
\begin{align}
\int_{v\left(r_Y\right)}^v  \pi r \frac{\phi_v^2 }{\kappa} d\bar{v} \leq \pi \frac{1}{r_Y} \int_{v\left(r_Y\right)}^v  \pi r^2 \frac{\phi_v^2 }{\kappa} \leq \pi \frac{M}{r_Y} \sqrt{c} \, .
\end{align}
In view of $\frac{1}{32}c^\frac{1}{3} - \pi \frac{M}{r_Y}c^\frac{1}{2}>0$ for sufficiently small $c$, we conclude that this term cannot drive $\tilde{g}$ to zero. To establish positivity of the integral for the other terms, we simply repeat the argument we followed in the $u$-direction reducing the problem to Lemma \ref{trivlem}.
\end{proof}

For the next Lemma, recall that $u_\mathcal{I}$ denotes the $u$-value where the $v=const$ curve intersects $\mathcal{I}$ and similarly $v_\mathcal{I}$ denotes the $v$-value where the $u=const$ curve intersects $\mathcal{I}$.
\begin{lemma} \label{Lemma2} \label{lem:vhardy}
For any $\aleph < \frac{9}{8}$ fixed, we have for any $\left(u,v\right) \in \mathcal{B} \cap \{ r\geq r_Y\}$, 
\begin{equation}
\int^{u}_{u_{\mathcal{I}}}  \, \frac{4\pi r^2 \aleph  }{l^2} \left(-r_u\right) \phi^2 \left(\bar{u},v\right)d\bar{u}  \leq \int^{u}_{u_\mathcal{I}} \, 8 \pi  r^2 \frac{r_v}{\Omega^2} (\partial_u \phi )^2\left(\bar{u},v\right)d\bar{u}  \, 
\end{equation}
and for any fixed $u=const$ curve in $\mathcal{B} $
\begin{equation}
\int_{v}^{v_\mathcal{I}}  \, \frac{4\pi r^2 \aleph }{l^2} \left(r_v\right) \phi^2 \left({u},\bar{v}\right)d\bar{v} \leq \int_{v}^{v_\mathcal{I}}  \, 8 \pi  r^2 \frac{-r_u}{\Omega^2} (\partial_v \phi )^2  \left({u},\bar{v}\right)d\bar{v} \, .
\end{equation}
\end{lemma}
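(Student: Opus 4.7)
Both inequalities are AdS-adapted Hardy inequalities; I describe the first in detail, the second being symmetric with $u\leftrightarrow v$. The strategy is to reduce the claim to the sharp AdS Hardy inequality, whose best constant is $9/4$ (the Breitenlohner--Freedman constant), and to exploit the strict inequality $\aleph<9/8$ as slack to absorb lower-order contributions coming from $r^2(1-\mu)-r^4/l^2$.

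First I would switch from integration in $\bar u$ to integration in $r$ along the constant-$v$ segment. Since $r_u<0$ in $\mathcal{R}_\mathcal{H}$, and the definition of the Hawking mass gives the identity $-4 r_u r_v/\Omega^2 = 1-\mu$, the first inequality is equivalent to
\[
\frac{2\aleph}{l^2}\int_{r_0}^{\infty}r^2\phi^2\,dr \;\le\; \int_{r_0}^{\infty}r^2(1-\mu)\,(\phi_r)^2\,dr,
\]
where $r_0=r(u,v)\ge r_Y$ and $\phi_r:=\phi_u/r_u$ is the radial derivative along the slice (with $\varpi$ viewed as a function of $r$).

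Next I would establish the sharp AdS Hardy identity by the substitution $w = r^{3/2}\phi$, for which $r^{3/2}\phi_r = w_r - (3/(2r))w$. Computing $\int r(w_r)^2\,dr$, expanding the square, and integrating the cross term $3\int r^3\phi\phi_r\,dr = (3/2)\int r^3 (\phi^2)_r\,dr$ by parts yields
\[
\int_{r_0}^{\infty}r^4(\phi_r)^2\,dr \;=\; \int_{r_0}^{\infty}r\,(w_r)^2\,dr \;+\; \tfrac{9}{4}\int_{r_0}^{\infty}r^2\phi^2\,dr \;+\; \tfrac{3}{2}\,r_0^{\,3}\phi^2(r_0),
\]
provided $\lim_{r\to\infty}r^3\phi^2 = 0$. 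This vanishing is the AdS Dirichlet condition at $\mathcal{I}$; it is forced by finiteness of $\|\phi\|_{H^1_{AdS}}$ at $(u,v)$, since a nonzero limit would cause logarithmic divergence of the $\int r^2\phi^2\,dr$ piece of the norm at $\mathcal{I}$. Dropping the non-negative right-hand terms and dividing by $l^2$ gives the critical BF--Hardy $(9/(4l^2))\int r^2\phi^2\,dr \le (1/l^2)\int r^4(\phi_r)^2\,dr$.

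Finally, I decompose $r^2(1-\mu) = r^4/l^2 + r(r-2\varpi)$. The first piece contributes, via the BF--Hardy just derived, at least $(9/(4l^2))\int r^2\phi^2\,dr$, which strictly dominates the target $(2\aleph/l^2)\int r^2\phi^2\,dr$ once $\aleph<9/8$. The remainder $\int r(r-2\varpi)(\phi_r)^2\,dr$ is non-negative on $\{r\ge 2\varpi\}$; on the bounded range $\{r_0\le r<2\varpi\}$ it is negative, but is absorbed using a combination of (i) the positive boundary term $\tfrac{3}{2}r_0^{\,3}\phi^2(r_0)$ discarded above, (ii) the strict slack $9/4-2\aleph>0$, and (iii) the quantitative lower bound $1-\mu \ge (8\aleph/(9l^2))\,r^2(1-(r_Y/r)^3)^2$ provided by Lemma~\ref{Lemma4}, which forces $1-\mu$ to vanish only quadratically as $r\to r_Y$ and hence keeps the bad integrand integrable and proportionally small.

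The main obstacle is precisely this handling of $\{r_0\le r<2\varpi\}$, where $r^2(1-\mu)<r^4/l^2$; this is why Lemma~\ref{Lemma4} is a prerequisite to Lemma~\ref{Lemma2}. The second inequality is proven by the same argument applied along a constant-$u$ curve in $\mathcal{B}$, parametrized by $r$ (monotone increasing in $v$ since $r_v>0$), integrating from $r_0=r(u,v)$ up to $\mathcal{I}$, with the Dirichlet condition at $\mathcal{I}$ vanishing the corresponding boundary term.
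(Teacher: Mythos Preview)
Your plan correctly identifies the two ingredients (a Hardy-type integration by parts and Lemma~\ref{Lemma4}), but the absorption step you describe is a genuine gap. After your BF--Hardy identity you must control the deficit $\int_{r_0}^{2\varpi} r(2\varpi-r)(\phi_r)^2\,dr$, which is a \emph{derivative} term. The positive terms you list in (i) and (ii) are zeroth-order in $\phi$; absorbing a $(\phi_r)^2$-integral by a $\phi^2$-integral or by a pointwise $\phi^2(r_0)$ boundary value would require a reverse Hardy inequality, which is of course false in general. Your item (iii) is invoked only qualitatively (``integrable and proportionally small''), and in fact the bad integrand $r(2\varpi-r)(\phi_r)^2$ does not contain $1-\mu$ at all, so the lower bound on $1-\mu$ from Lemma~\ref{Lemma4} does not directly bound it. The only first-order positive term at your disposal is $\frac{1}{l^2}\int r(w_r)^2\,dr$, and you have not shown that it dominates the deficit; the margin there is of order $r_Y\big(1+r_Y^2/l^2\big)-2M\sim c^{1/3}$, so any such argument would have to be carried out carefully and is not a triviality.

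The paper avoids this difficulty entirely by a small but decisive change in the integration by parts: instead of the antiderivative $r^3$ (your substitution $w=r^{3/2}\phi$), it uses the \emph{shifted} antiderivative $r^3-r_Y^3$. After Cauchy--Schwarz this yields
\[
\int r^2\phi^2\,dr \;\le\; \frac{4}{9}\int \frac{(r^3-r_Y^3)^2}{r^2}\,(\phi_r)^2\,dr,
\]
and since $(r^3-r_Y^3)^2/r^4 = h^2(r-r_Y)^2$ with $h=1+r_Y/r+(r_Y/r)^2$, Lemma~\ref{Lemma4} says \emph{exactly} that $\frac{(r^3-r_Y^3)^2}{r^2}\le \frac{9l^2}{8\aleph}\,r^2(1-\mu)$. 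Plugging in gives the claimed inequality directly, with no deficit to absorb. The point is that the Hardy weight is tailored to the content of Lemma~\ref{Lemma4} from the outset; your version with $r^3$ creates a mismatch that you then cannot close.
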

\begin{proof}
We have by integration by parts:
$$
\int^u_{u_\mathcal{I}} \frac{4\pi r^2 \aleph }{l^2} \left(-r_u\right) \phi^2 du = -\frac{4\pi \aleph }{l^2}\frac{r^2h\left(r-r_{Y}\right)}{3} \phi^2 \bigg|^u_{u_\mathcal{I}} -\int^u_{u_\infty} \frac{8\pi \aleph r^2h\left(r-r_{Y}\right) }{3l^2} \phi \phi_u du \, , 
$$
where we recall $h=1+\frac{r_Y}{r}+\left( \frac{r_Y}{r} \right)^2$. Of the boundary terms on the right-hand side, one has a good (negative) sign, while the other vanishes by the decay of $\phi$ as $r \rightarrow \infty$. For the remaining term, we apply Cauchy-Schwarz:
\begin{align}
 \int^u_{u_\mathcal{I}} \frac{8\pi \aleph r^2 h \left(r-r_{Y}\right) }{3l^2} \phi \phi_u du \nonumber \\ 
 \le \frac{8 \pi \aleph}{3l^2}\left( \int^u_{u_\mathcal{I}} r^2 (-r_u) \phi^2 \right)^{1/2} \left( \int^u_{u_\mathcal{I}}\frac{r^2h^2\left(r-r_{Y}\right)^2}{-r_u} \phi_u^2 \right)^{1/2}, \nonumber
\end{align}
from which we deduce that:
\begin{eqnarray}
\frac{4 \pi \aleph}{l^2} \int^u_{u_\mathcal{I}}r^2(-r_u) \phi^2 du \le \frac{16 \pi}{9} \frac{\aleph }{ l^2}\int^u_{u_{\mathcal{I}}} \frac{r^2h^2\left(r-r_{Y}\right)^2}{-r_u} \phi^2_u du. \nonumber
\end{eqnarray}
An application of Lemma \ref{Lemma4} now yields the result. The inequality in the $v$-direction is similar.
\end{proof}
\begin{corollary} \label{ubm}
In the region $\mathcal{B}\cap \{r \geq r_Y\}$ the estimate $\varpi \leq M$ holds.
\end{corollary}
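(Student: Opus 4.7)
Integrate $\partial_u\varpi$ along a constant-$v$ ray from $\mathcal{I}$, where $\varpi = M$, into $\mathcal{B}\cap\{r\geq r_Y\}$, and show that the Hardy inequality of Lemma \ref{lem:vhardy} renders the resulting integral non-positive. Fix $(u,v)\in \mathcal{B}\cap\{r\geq r_Y\}$; the monotonicity $r_u<0$ in the regular region ensures that the integration segment $\{(\bar u,v):u_{\mathcal{I}}(v)\leq\bar u\leq u\}$ is entirely contained in $\{r\geq r_Y\}$, which is exactly the region in which Lemma \ref{lem:vhardy} applies.

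Recalling \eqref{ee:varpiu},
$$\partial_u\varpi \;=\; -8\pi r^2\frac{r_v}{\Omega^2}(\partial_u\phi)^2 + \frac{4\pi r^2 a}{l^2}\,r_u\,\phi^2.$$
If $a\geq 0$, both summands are already pointwise non-positive (using $r_u<0$, $r_v>0$) and the claim follows by integration alone, with no use of the Hardy inequality. For the genuine case $a<0$, rewrite $\frac{4\pi r^2 a}{l^2}r_u\phi^2 = \frac{4\pi r^2(-a)}{l^2}(-r_u)\phi^2$, which is now manifestly non-negative, and set $\aleph := -a$. The conformal bound $a\geq -1$ forces $\aleph\leq 1 < \tfrac{9}{8}$, so Lemma \ref{lem:vhardy} applies with this value and gives
$$\int_{u_{\mathcal{I}}(v)}^{u}\frac{4\pi r^2(-a)}{l^2}(-r_{\bar u})\,\phi^2\,d\bar u \;\leq\; \int_{u_{\mathcal{I}}(v)}^{u}8\pi r^2\frac{r_v}{\Omega^2}(\partial_{\bar u}\phi)^2\,d\bar u.$$
Rearranging, $\int_{u_{\mathcal{I}}(v)}^{u}\partial_{\bar u}\varpi\,d\bar u\leq 0$, whence $\varpi(u,v)\leq M$.

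The one checkpoint not contained in a displayed equation is that $\varpi\to M$ along the entire arc of $\mathcal{I}$ traversed, and not only at the initial corner $(u_0,v_0)$ where it is built into the definition of an $\epsilon$-perturbed Schwarzschild-AdS data set. I expect this to be a routine consequence of the Dirichlet-type asymptotics for $\phi$ underlying the well-posedness theory of \cite{gs:lwp}, combined with \eqref{ee:varpiv}, whose integrand decays sufficiently fast towards $\mathcal{I}$ by virtue of the bootstrap bound \eqref{boot3}. Granted this boundary value, no real obstacle remains: the corollary is essentially a one-line consequence of Lemma \ref{lem:vhardy} together with the evolution equation \eqref{ee:varpiu}.
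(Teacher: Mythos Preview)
Your proof is correct and follows essentially the same approach as the paper's: integrate $\partial_u\varpi$ from $\mathcal{I}$ (where $\varpi=M$) and apply Lemma~\ref{lem:vhardy} with $\aleph=-a$ to see the integrand is non-positive in the integrated sense. The paper's version is a one-liner that takes the boundary value $\varpi=M$ on $\mathcal{I}$ for granted, and note that only the BF bound $-a<\tfrac{9}{8}$ (not the stronger $a\geq -1$) is actually needed here.
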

\begin{proof} We have
\begin{eqnarray}
\varpi-M=\int^{u}_{u_{\mathcal{I}}} \partial_u \varpi du=\int^{u}_{u_{\mathcal{I}}} du \, \left[ -8 \pi  r^2 \frac{r_v}{\Omega^2} (\partial_u \phi )^2+ \frac{4\pi r^2 \left(-a\right) }{l^2} \left(-r_u\right) \phi^2 \right] \nonumber
\end{eqnarray}
and by Lemma \ref{Lemma2} the right-hand side is negative.
\end{proof}

\subsection{Improving bootstrap assumptions (\ref{boot2a}) and (\ref{boot4a})}
From the conservation of Hawking mass we estimate for $\left(u,v\right) \in \mathcal{B}$,
\begin{align} \label{cones}
 \|\phi\|^2_{H^1_{AdS,deg}\left(u_{\mathcal{H}},v_0\right)} \geq \int_{u_0}^u  \left(2\pi \frac{\phi_u^2}{\gamma} - \frac{4\pi a}{l^2}\phi^2 r_u \right)  r^2 \left(\bar{u} ,v_0\right) d\bar{u} =\nonumber \\
\int_{u_\mathcal{I}}^u  \left[2\pi \frac{\phi_u^2}{\gamma} - \frac{4\pi a}{l^2}\phi^2 r_u \right]  r^2 \left(\bar{u} ,v\right) d\bar{u} + \int_{v_0}^v  \left[2\pi \frac{\phi_v^2}{\kappa} + \frac{4\pi a}{l^2}\phi^2  r_v \right] r^2 \left(u ,\bar{v}\right) d\bar{v} 
\nonumber \\
\geq \frac{1}{2}\left(a+\frac{9}{8}\right)  \|\phi\|^2_{H^1_{AdS,deg}\left(u,v\right)} + \frac{9}{8}\frac{4\pi}{l^2} \int_{u_\mathcal{I}}^u \mathbf{1}_{r\leq r_Y} \phi^2 r_u r^2 \left(\bar{u} ,v\right) d\bar{u} \nonumber \\ - \frac{9}{8}\frac{4\pi}{l^2} \int_{v_0}^v  \mathbf{1}_{r\leq r_Y} \phi^2 r_v r^2 \left(u ,\bar{v}\right) d\bar{v}  \, ,
 \end{align}
where we used the Hardy inequalities established in Lemma \ref{lem:vhardy}.
Using the bootstrap assumptions (\ref{boot2}) and (\ref{boot4}) for the terms in $r\leq r_Y$, we establish
\begin{corollary} \label{fienb}
For any $\left(u,v\right) \in \mathcal{B}$ we have
\begin{align}
\| \phi \|_{H^1_{AdS,deg}\left(u,v\right)} \leq C_a \cdot M \cdot \sqrt{c}
\end{align}
with $C_a$ a uniform constant, depending only on how close $a$ is to the BF-bound.
\end{corollary}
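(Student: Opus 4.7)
The proof of Corollary \ref{fienb} is essentially an algebraic rearrangement of the estimate (\ref{cones}), which has already done all the analytic work via conservation of the generalized Hawking mass and the Hardy inequalities of Lemma \ref{lem:vhardy}. My plan is simply to read off the conclusion from (\ref{cones}) by combining it with the two relevant bootstrap assumptions.

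Concretely, the estimate (\ref{cones}) produces on its right-hand side three contributions: the initial $H^1_{AdS,deg}$-energy flux through $N(v_0)$, a coefficient $\tfrac12(a+\tfrac98)$ times the current energy $\|\phi\|^2_{H^1_{AdS,deg}(u,v)}$, and two bad zeroth-order error integrals supported in $\{r\le r_Y\}$. The first piece is $\mathcal O(\epsilon^2)$ by the smallness built into Definition \ref{defep}. The two bad error terms are precisely of the form
\begin{equation}
\tfrac{9}{8}\tfrac{4\pi}{l^2}\int \mathbf 1_{\{r\le r_Y\}}\phi^2\, r^2\, r_u\, d\bar u
\quad\text{and}\quad
\tfrac{9}{8}\tfrac{4\pi}{l^2}\int \mathbf 1_{\{r\le r_Y\}}\phi^2\, r^2\, r_v\, d\bar v,
\end{equation}
which, up to the factor $\tfrac{9}{8|a|}$, are exactly what the bootstrap assumptions (\ref{boot2}) and (\ref{boot4}) bound by $Mc$.

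Rearranging (\ref{cones}) and using $a+\tfrac98>0$ (the Breitenlohner--Freedman bound, a fortiori the conformal bound (\ref{confbound})), I obtain
\begin{equation}
\tfrac{1}{2}\bigl(a+\tfrac98\bigr)\,\|\phi\|^2_{H^1_{AdS,deg}(u,v)}
\ \le\ \|\phi\|^2_{H^1_{AdS,deg}(u_{\mathcal H},v_0)}+\tfrac{9}{4|a|}\, M c
\ \le\ \epsilon^2 + \tfrac{9}{4|a|}\, M c.
\end{equation}
Since $\epsilon$ in the bootstrap set-up is always taken small enough relative to $\sqrt c$ (the bootstrap assumptions cannot even hold initially otherwise), the first term is dominated by the second. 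Taking square roots and absorbing $M$-ratios into a constant $C_a$ that blows up only as $a\to -\tfrac98$ yields
\begin{equation}
\|\phi\|_{H^1_{AdS,deg}(u,v)}\ \le\ C_a \cdot M \cdot \sqrt c,
\end{equation}
which is the claim.

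The only thing to double-check, and the step that could be called the "main obstacle", is that the measures appearing in the localized bad terms match exactly those of (\ref{boot2}) and (\ref{boot4}); this is indeed the case because both are weighted $L^2$-norms of $\phi$ with respect to $r^2\,r_u\,du$ and $r^2\,r_v\,dv$ respectively. There is no genuine analytic difficulty here: the substantive gain came from the BF-bound reserving a positive fraction $\tfrac12(a+\tfrac98)$ of the derivative flux after applying Hardy, and from the fact that Hardy in Lemma \ref{lem:vhardy} only needs the $r\ge r_Y$ portion of the derivative energy (so that the $r\le r_Y$ "trapped" region can be handled purely by the mass-flux bootstrap bounds).
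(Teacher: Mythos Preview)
Your proof is correct and is exactly the argument the paper intends: the paper's own proof is simply the sentence ``Using the bootstrap assumptions (\ref{boot2}) and (\ref{boot4}) for the terms in $r\leq r_Y$, we establish [Corollary \ref{fienb}]'', and you have unpacked this rearrangement of (\ref{cones}) accurately, including the factor $\tfrac{9}{8|a|}$ coming from comparing the coefficient in (\ref{cones}) with that in (\ref{boot2})--(\ref{boot4}). One cosmetic remark: your constant $\tfrac{9}{4|a|}$ formally degenerates as $a\to 0$, but this is harmless since for $a\ge 0$ the zeroth-order terms in (\ref{cones}) already have a good sign and no Hardy inequality (and hence no division by $|a|$) is needed.
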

This improves in particular bootstrap assumptions (\ref{boot2a}) and (\ref{boot4a}) and also shows that the overall mass-difference is $\sqrt{c}$-small.

\subsection{Estimating $\phi$ in $r \geq r_X$}
Next we derive a pointwise smallness bound for $\phi$ in $r \geq r_X$ (not $r_Y$!), by integrating in $u$ from infinity:

\begin{lemma} \label{phibound}
For all $(u,v) \in \mathcal{B} \cap \{ r \geq r_X \}$, we have
\begin{eqnarray}
| r^\frac{3}{2} \phi(u,v) | \le B_{M,l} \cdot d^{-1/6} \|\phi\|_{H^1_{AdS,deg}\left(u,v\right)} \leq  B_{M,l,a} \cdot \sqrt{c}  \, .
\end{eqnarray}
\end{lemma}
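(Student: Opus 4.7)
The plan is to obtain the pointwise weighted bound by integrating $r^3\phi^2$ along the constant-$v$ null ray from null infinity $\mathcal{I}$ down to the point $(u,v)$. The point of anchoring the integration at $\mathcal{I}$ is that the Dirichlet boundary condition built into an $\epsilon$-perturbed Schwarzschild-AdS data set (and its propagation, established in \cite{gs:lwp}) ensures $\phi$ decays faster than $r^{-3/2}$ towards $\mathcal{I}$; in particular, $r^3\phi^2(\bar u, v)\to 0$ as $\bar u\to u_{\mathcal{I}}(v)^+$. Thus the fundamental theorem of calculus gives
\begin{equation*}
r^3\phi^2(u,v) \;=\; \int_{u_{\mathcal{I}}(v)}^{u}\!\Big[\,3r^2 r_u\,\phi^2 + 2r^3\phi\,\phi_u\,\Big]\,d\bar u.
\end{equation*}

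Since $r_u<0$ throughout $\mathcal{R}_{\mathcal{H}}$, the first term in the integrand is non-positive and may be discarded when bounding $r^3\phi^2$ from above. Applying Cauchy--Schwarz to the remaining term yields
\begin{equation*}
r^3\phi^2(u,v) \;\leq\; 2\Big(\!\int_{u_{\mathcal{I}}(v)}^{u} r^2(-r_u)\phi^2\,d\bar u\Big)^{\!1/2}\Big(\!\int_{u_{\mathcal{I}}(v)}^{u} \frac{r^4}{-r_u}\phi_u^2\,d\bar u\Big)^{\!1/2}.
\end{equation*}
The first factor is dominated directly by $\|\phi\|_{H^1_{AdS,deg}(u,v)}$. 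For the second, I rewrite $\frac{r^4}{-r_u}=\frac{r^2}{1-\mu}\cdot\frac{r^2(1-\mu)}{-r_u}$ and use the trapping bound on $1-\mu$: repeating the proof of Lemma \ref{omm} with $r_Y$ replaced by $r_X$ and $c$ replaced by $d$ (which is legitimate since $d>c$ and the bootstrap bound on $|\varpi-M|$ is $\sqrt{c}$-small) gives $\frac{1-\mu}{r^2}\geq \frac{1}{8r_X^2}\,d^{1/3}$ throughout $\mathcal{B}\cap\{r\geq r_X\}$, so $\frac{r^2}{1-\mu}\leq B_{M,l}\,d^{-1/3}$ in this region. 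Hence the second factor is bounded by $B_{M,l}\,d^{-1/6}\|\phi\|_{H^1_{AdS,deg}(u,v)}$, and combining produces the stated estimate $|r^{3/2}\phi|\leq B_{M,l}\,d^{-1/6}\|\phi\|_{H^1_{AdS,deg}(u,v)}$. The second inequality in the lemma is then immediate from Corollary \ref{fienb}, absorbing the (fixed) constant $d^{-1/6}$ into $B_{M,l,a}$.

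The only delicate point is the vanishing of the boundary term at $\mathcal{I}$: this requires that the asymptotic decay imposed on the data, $|\bar\phi|\lesssim \bar r^{-3/2-s/2}$ in \eqref{pointwise}, be transported to the slice $\{v=\text{const}\}$ through the point $(u,v)$. Since the propagation of the Dirichlet boundary condition is already built into the well-posedness theory of \cite{gs:lwp} used to define $\mathcal{R}_{\mathcal{H}}$, this step amounts to invoking that result rather than proving anything new. Everything else is a straightforward combination of an integration-by-parts identity, Cauchy--Schwarz, and the quantitative control of $(1-\mu)/r^2$ away from the horizon already established in the preceding lemmas.
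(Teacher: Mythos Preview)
Your argument is correct and follows essentially the same strategy as the paper: integrate from $\mathcal{I}$ along a constant-$v$ ray, apply Cauchy--Schwarz, and invoke the lower bound $\frac{1-\mu}{r^2}\geq \frac{1}{8r_X^2}d^{1/3}$ in $r\geq r_X$. The only difference is cosmetic: the paper integrates $\phi$ itself, splitting $\phi_u$ so that one factor is the derivative part of $\|\phi\|_{H^1_{AdS,deg}}$ and the other is $\int \frac{-r_u}{r^2(1-\mu)}\,du$ (which produces the $r^{-3/2}$ weight and the $d^{-1/6}$), whereas you integrate $r^3\phi^2$ and split so that the two Cauchy--Schwarz factors are the zeroth-order and derivative parts of the same norm. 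In fact your splitting yields $r^3\phi^2\leq B_{M,l}\,d^{-1/6}\|\phi\|_{H^1_{AdS,deg}}^2$, hence $|r^{3/2}\phi|\leq B_{M,l}\,d^{-1/12}\|\phi\|_{H^1_{AdS,deg}}$, a slightly \emph{sharper} power of $d$ than stated; since $d<1$ this certainly implies the lemma as written.
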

\begin{proof}
Integrating out from infinity (where $\phi$ vanishes) we find
\begin{align}
|\phi \left(u_{r \geq r_X},v\right)| &\leq 0 + \Big| \int du \phi_u \Big| \leq \sqrt{\int du \zeta^2 \frac{\lambda}{\Omega^2}} \sqrt{\int du \frac{4}{r^2 \left(1-\mu\right)} \left(-r_u\right) } \nonumber \\
&\leq \frac{2\sqrt{M}r_X^2}{r^\frac{3}{2}}d^{-\frac{1}{6}} \|\phi\|_{H^1_{AdS,deg}\left(u,v\right)} \nonumber \, ,
\end{align}
where we used the upper bound on the $v$-flux as well as the estimate $8 r_X^2 \frac{1-\mu}{r^2} \ge d^{1/3}$ (cf.~Lemma \ref{omm}), which holds in the region where $r \ge r_X$.
\end{proof}

Note that on $r=r_Y$ we would only obtain $c^\frac{1}{3}$-smallness, as the bad $\left(1-\mu\right)^{-1}$-weight would bring in an inverse $c$-smallness. 
\subsection{The red-shift effect: $\frac{\zeta}{\nu}$ estimate in $r\leq r_X$}
Recall $\zeta = r\phi_u$. We can write the wave equation as
\begin{align} \label{rsevol}
\partial_v \left( \frac{\zeta}{\nu} \right)= -\phi_v +\frac{2r\kappa a \phi}{l^2}
-\frac{\zeta}{\nu} \left[2 \kappa \frac{\varpi}{r^2} + \frac{2 \kappa r}{l^2} - 8\pi r \frac{a}{l^2} \kappa \phi^2 \right].
\end{align}

\begin{lemma} \label{zetahozbound}
For any  $\left(u,v\right) \in \mathcal{B} \cap \{ r \leq r_X \}$ we have
\begin{align} \label{taes}
\Big| \frac{\zeta}{\nu}\Big| + |\phi| \leq B_{M,l}\left( \sup_{D\left(u,v\right)} \|\phi\|_{H^1_{AdS,deg}\left(u,v\right)} + \sup_{v=v_0}\Big| r^\frac{3}{2} \frac{\zeta}{\nu}\Big| \right) \leq B_{M,l} \cdot \sqrt{c} \, .
\end{align}
\end{lemma}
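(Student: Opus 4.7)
The plan is to treat the wave equation in the form (\ref{rsevol}) as a linear first-order ODE in $v$ for $w := \zeta/\nu$ and exploit the redshift mechanism. The first step is to show that in $\mathcal{B} \cap \{r \leq r_X\}$ the coefficient
\[
\alpha(u,v) := 2\kappa \frac{\varpi}{r^2} + \frac{2\kappa r}{l^2} - \frac{8\pi r a \kappa \phi^2}{l^2}
\]
is bounded below by a positive constant $\alpha_0 = \alpha_0(M,l) > 0$. For this I would combine Lemma \ref{triv} (which yields $\varpi \geq M(1-2\sqrt{c})$), the bootstrap bound (\ref{boot3}) to absorb the $\phi^2$ term (the factors of $|a|$ cancel neatly), and a uniform lower bound on $\kappa$ obtained by integrating the Raychaudhuri equation (\ref{kappaR}) in the $u$-direction from $\mathcal{I}$ (where $\kappa=\tfrac{1}{2}$), using Corollary \ref{fienb} to estimate $\int 4\pi r \phi_u^2/(-\nu)\,du$.

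Next I exploit the observation that, because $r_v > 0$ throughout $\mathcal{R}_\mathcal{H}$, for any $(u,v) \in \mathcal{B} \cap \{r \leq r_X\}$ the entire segment $\{u\} \times [v_0, v]$ lies inside $\{r \leq r_X\}$. Gronwall applied to (\ref{rsevol}) along this segment gives
\[
|w(u,v)| \leq e^{-\alpha_0(v-v_0)} |w(u,v_0)| + \int_{v_0}^v e^{-\alpha_0(v-v')}\,|F(u,v')|\,dv',
\]
with $F = -\phi_v + 2r\kappa a \phi/l^2$. The initial term is controlled by $r_{min}^{-3/2} \sup_{v=v_0}|r^{3/2}\zeta/\nu|$. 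For the $\phi_v$ contribution of $F$, Cauchy-Schwarz in $v'$ and the fact that in $\{r \leq r_X\}$ the weight $r^2(1-\mu)/r_v = r^2/\kappa$ appearing in the $H^1_{AdS,deg}$ norm is bounded above and below give $\int_{v_0}^v \phi_v^2 \, dv' \leq B_{M,l}\,\|\phi\|^2_{H^1_{AdS,deg}(u,v)}$.

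The remaining part of $F$ requires a pointwise bound on $|\phi|$ in $r \leq r_X$. I would integrate $\phi_u = \zeta/r$ along the $v = \text{const}$ ray starting from the unique intersection point $u_X(v)$ with $\{r = r_X\}$ (which exists since $r$ decreases monotonically in $u$), obtaining
\[
|\phi(u,v)| \leq |\phi(u_X(v), v)| + \Bigl(\sup_{\{r \leq r_X\}}\bigl|\tfrac{\zeta}{\nu}\bigr|\Bigr) \log\frac{r_X}{r(u,v)}.
\]
Lemma \ref{phibound} bounds the first term on $r = r_X$, and (\ref{rxchoice}) gives $\log(r_X/r_{min}) < 1/|a|$. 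Inserted back into the inhomogeneous term via $\int_{v_0}^v e^{-\alpha_0(v-v')} |\phi|\,dv' \leq \alpha_0^{-1} \sup|\phi|$, the factor $|a|$ coming from $2r\kappa a/l^2$ cancels the $1/|a|$, producing a small-constant multiple of $\sup|\zeta/\nu|$ which can then be absorbed on the left-hand side after taking the supremum over $\{r \leq r_X\}$.

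The main obstacle I anticipate is precisely this final absorption step: one must verify that the coefficient of $\sup|\zeta/\nu|$ generated via the auxiliary pointwise $\phi$-bound is strictly less than one. This is exactly why $d$ is chosen in (\ref{rxchoice}) to force $\log(r_X/r_{min}) < 1/|a|$, and why the smallness of $c$ propagated through Corollary \ref{fienb} enters the estimate for $\alpha_0$. Once this closure is in place, the bound on $|\zeta/\nu|$ follows, and the pointwise bound on $|\phi|$ recorded above then immediately yields the stated estimate (\ref{taes}).
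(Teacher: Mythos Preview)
Your overall strategy---treat (\ref{rsevol}) as a redshift-damped ODE for $w=\zeta/\nu$, control the $\phi_v$-inhomogeneity by Cauchy--Schwarz and the energy, control the $\phi$-inhomogeneity via a pointwise bound obtained by integrating in $u$ from $r=r_X$, and close by absorption using (\ref{rxchoice})---matches the paper's. There is, however, a genuine gap in your first step.

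You propose to secure the lower bound $\alpha\ge\alpha_0>0$ by establishing a uniform lower bound on $\kappa$ via integration of (\ref{kappaR}) from $\mathcal{I}$, using Corollary~\ref{fienb} to control $\int 4\pi r\phi_u^2/(-\nu)\,du$. But Corollary~\ref{fienb} only bounds the \emph{degenerate} norm, whose $u$-part contains $\int r^2\frac{1-\mu}{-r_u}(\partial_u\phi)^2\,du$. In the region $r\le r_X$ the factor $1-\mu$ is not bounded below by any fixed constant (indeed it degenerates as one approaches the horizon), so the degenerate norm does not control $\int r\phi_u^2/(-\nu)\,du$ there. In the paper this lower bound on $\kappa$ in $r\le r_X$ is obtained only \emph{after} Lemma~\ref{zetahozbound}, in Section~\ref{conclusio}, precisely by using the pointwise bound on $\zeta/\nu$ that the lemma provides. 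Your argument is therefore circular as written.

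The paper avoids needing any lower bound on $\kappa$ by observing that both the redshift weight $\rho$ of (\ref{rsweight}) and the inhomogeneities in (\ref{rsevol}) carry an explicit factor of $\kappa$; consequently ratios such as $\kappa/\rho$ and $2r\kappa a/(l^2\rho)$ are $\kappa$-independent and uniformly bounded. Concretely, for the $\phi_v$-term one writes $\int e^{-2\int\rho}\,\tfrac{\kappa}{r^2}\,d\bar v=\int\tfrac{\kappa}{2r^2\rho}\,\partial_{\bar v}e^{-2\int\rho}\,d\bar v$ and pulls out $\sup\tfrac{\kappa}{2r^2\rho}=\sup\tfrac{1}{4(\varpi+r^3/l^2-4\pi r^3 a\phi^2/l^2)}$. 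For the $\phi$-term the paper integrates by parts in $v$ (cf.~(\ref{expip})--(\ref{compt})), which again produces only $\kappa$-free ratios and yields the clean bound $\tfrac{|a|}{2}|\phi(u,v)|$ at the endpoint. Your direct sup-bound for the $\phi$-inhomogeneity would also work once you replace $\alpha_0^{-1}$ by the computation $\int e^{-\int\rho}\,\tfrac{2r\kappa|a|}{l^2}\,d\bar v=|a|\int e^{-\int\rho}\rho\cdot\tfrac{1}{\varpi l^2/r^3+1-4\pi a\phi^2}\,d\bar v\le |a|\sup(\ldots)^{-1}$, which is again $\kappa$-free; the absorption then closes with coefficient strictly below $1$ exactly as you anticipated.
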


\begin{proof}
Let us denote the redshift weight
\begin{align} \label{rsweight}
\rho = 2\kappa \left[ \frac{\varpi}{r^2} + \frac{  r}{l^2} -  4\pi r \frac{a}{l^2}  \phi^2 \right]
\end{align}
Note that $\frac{\rho}{\kappa} > \left(\frac{2r_{min}}{l^2} + \frac{M}{r_Y^2}\right)$.\footnote{Due to the cosmological term, we actually have a global redshift at work. In the asymptotically-flat case, the strength of the redshift degenerates at infinity, in view of the absence of that term. We will exploit this good term which grows in $r$ (``the redshift at infinity") later in the estimates near infinity.} Integrating (\ref{rsevol}) we find
\begin{align} \label{tte}
 \frac{\zeta}{\nu} \left(u,v\right)= \left( \frac{\zeta}{\nu} \left(u,v_0\right)  \right)\cdot \exp \left(\int_{v_0}^v -\rho \left(u, \bar{v} \right) d\bar{v}\right) \nonumber \\
 +   \int_{v_0}^v d\bar{v} \left[ \exp \left(-\int_{\bar{v}}^v \rho \left(u, \hat{v} \right) d\hat{v}\right)\left(-\phi_v +\frac{2r\kappa a \phi}{l^2}\right) \left(u,\bar{v} \right) \right]  \, .
\end{align}
Let us study the inhomogeneous term. For the $\phi_v$-term we need to estimate
\begin{align} \label{fioc}
\left| \int_{v_0}^v d\bar{v} \left[ \exp \left(-\int_{\bar{v}}^v \rho \left(u, \hat{v} \right) d\hat{v}\right) \phi_v\right]\right| \nonumber \\
\leq \sqrt{\int_{v_0}^v  d\bar{v} \frac{1}{r^2} \kappa \cdot \exp \left(-2\int_{\bar{v}}^v \rho \left(u, \hat{v} \right) d\hat{v}\right)} \sqrt{\int_{v_0}^v \frac{\phi_v^2}{\kappa} r^2 \left(u,\bar{v}\right) d\bar{v} } \, .
\end{align}
The second square root can be controlled from the energy, while the first can be estimated by a constant:
\begin{align}
\int_{v_0}^v  d\bar{v} \frac{1}{r^2} \kappa \cdot \exp \left(-2\int_{\bar{v}}^v \rho \left(u, \hat{v} \right) d\hat{v}\right) 
= \int_{v_0}^v  d\bar{v} \frac{\kappa}{2 r^2 \rho} \partial_{\bar{v}}  \exp \left(-2\int_{\bar{v}}^v \rho \left(u, \hat{v} \right) d\hat{v}\right) \nonumber \, ,
\end{align}
and we can take out the supremum of $\frac{\kappa}{2 r^2 \rho}$ because the derivative of the exponential has a positive sign, i.e.~the integrand is positive everywhere. This finally yields
\begin{align}
\int_{v_0}^v  d\bar{v} \frac{1}{r^2} \kappa \cdot \exp \left(-2\int_{\bar{v}}^v \rho \left(u, \hat{v} \right) d\hat{v}\right) \nonumber \\
\leq \sup \left(\frac{\kappa}{2r^2 \rho}\right) \cdot \left[1 - \exp\left(-2\int_{v_0}^v \rho \left(u, \bar{v} \right) d\bar{v} \right)\right] \nonumber \\
\leq \frac{1}{4} \cdot \sup \left[ \left(\varpi  + \frac{r^3}{l^2} -  4\pi r^3 \frac{a}{l^2}  \phi^2 \right)^{-1}\right] \, .
\end{align}
The $\phi$-term in (\ref{tte}) is more delicate because a smallness bound on $\phi$ is not available close to the horizon. The only thing we have at our disposal is that $\int \phi^2 r_v dv$ is controlled by the energy (which is not immediately useful because $r_v$ may be very small in the region under consideration). The idea is to integrate the inhomogeneity by parts, since a $v$-derivative falling on $r$ will generate the required factor of $r_v$ . We write
\begin{align} \label{expip}
 \int_{v_0}^v d\bar{v} \left[ \exp \left(-\int_{\bar{v}}^v \rho \left(u, \hat{v} \right) d\hat{v}\right)\left(\frac{2r\kappa a \phi}{l^2}\right) \left(u,\bar{v} \right) \right]  \nonumber \\
 =
 \int_{v_0}^v d\bar{v} \frac{2r\kappa a \phi}{l^2 \rho} \partial_{\bar{v}} \left[ \exp \left(-\int_{\bar{v}}^v \rho \left(u, \hat{v} \right) d\hat{v}\right) \left(u,\bar{v} \right) \right]  \, .
\end{align}
When we integrate by parts, the boundary terms that arise are one on data (which is $\epsilon$-small by assumption) and one at $\left(u,v\right)$, which is
\begin{align}
\Bigg|\left(\frac{a r}{2l^2} \frac{1}{\left[ \frac{\varpi}{r^2} + \frac{r}{l^2} -  4\pi r \frac{a}{l^2}  \phi^2 \right]}\right) \phi \left(u,v\right) \Bigg| \leq \frac{|a|}{2}  |\phi| \,  .
\end{align}
To analyze the volume term we compute
\begin{align} \label{compt}
\partial_v \left(\phi \frac{2r\kappa}{\rho}\right) =  \frac{2r\kappa}{\rho} \left(1 + \frac{2r\kappa}{\rho}\frac{8\pi  a}{l^2 } \phi^2\right) \phi_v \nonumber \\ + \left( \frac{2r \kappa}{\rho}\right)^2  \left(\frac{3\varpi}{r^4} - \frac{4\pi a}{rl^2} \phi^2\right)  \phi \ r_v 
- \left( \frac{2r\kappa}{\rho}\right)^2 \ 2\pi \frac{\phi}{r} \frac{\phi_v^2}{\kappa} \, .
\end{align}
Note again that the factor $\frac{2r\kappa}{\rho}$ is both bounded above and below. Hence the term proportional to $\phi_v$ can (after using (\ref{boot3})) be estimated as before (cf.~(\ref{fioc})). For the term proportional to $r_v$ we use Cauchy-Schwarz 
\begin{align} 
\left| \int_{v_0}^v d\bar{v} \left[ \exp \left(-\int_{\bar{v}}^v \rho \left(u, \hat{v} \right) d\hat{v}\right) \phi r_v \right]\right| \nonumber \\
\leq \sqrt{\int_{v_0}^v  d\bar{v} \frac{1}{r^2} \kappa \left(1-\mu\right) \cdot \exp \left(-2\int_{\bar{v}}^v \rho \left(u, \hat{v} \right) d\hat{v}\right)} \sqrt{\int_{v_0}^v \phi^2 r^2 \ r_v  \left(u,\bar{v}\right) d\bar{v} } \, ,
\end{align}
recovering the $H^1_{AdS,deg}$-norm. Finally, for the cubic term in (\ref{compt}) we apply the pointwise auxiliary bootstrap assumption (\ref{boot3}) to $\phi$ and estimate the remainder by the \emph{square} of the $H^1_{AdS,deg}$-norm. Since this norm itself is $\sqrt{c}$ small by Corollary \ref{fienb}, we have $\|\phi\|^2_{H^1_{AdS, deg}} \leq M \sqrt{c} \|\phi\|_{H^1_{AdS, deg}}$.

We summarize that (\ref{tte}) finally turns into the estimate 
\begin{align} \label{gbt}
\Big|\frac{\zeta}{\nu} \left(u,v\right) \Big|  \leq B_{M,l}\left[ \sup_{D\left(u,v\right)} \|\phi\|_{H^1_{AdS,deg}\left(u,v\right)} + \sup_{v=v_0}\Big| r^\frac{3}{2} \frac{\zeta}{\nu}\Big| \right] + \frac{|a|}{2} \cdot |\phi|\left(u,v\right) \, ,
\end{align}
valid for $\left(u,v\right) \in \mathcal{B} \cap \{r\leq r_X\}$. Note that the pointwise norm on $r^\frac{3}{2} \frac{\zeta}{\nu}$ controls in particular the $\phi$-term picked up on the data in the integration by parts.
From this we derive an estimate for $\phi$ by integrating from the fixed $r=r_X$-curve towards the horizon:
\begin{align}
|\phi \left(u,v\right)| \leq |\phi \left(u_{r_X},v\right)| + \int_{u_{r_X}}^u \Big|\frac{\zeta}{\nu}\Big| \frac{\left(-r_u\right)}{r} d\bar{u} 
\end{align}
leads, after applying Lemma \ref{phibound} and (\ref{gbt}), to
\begin{align}
 \sup_{D\left(u,v\right) \cap \{r\leq r_X\}}  |\phi \left(u,v\right)| \leq B_{M,l} \  \sup_{D\left(u,v\right)} \|\phi\|_{H^1_{AdS,deg}\left({u},{v}\right)} + \log \frac{r_X}{r_{min}} \Bigg[  \nonumber \\ B_{M,l}\left( \sup_{D\left(u,v\right)} \|\phi\|_{H^1_{AdS,deg}\left({u},{v}\right)} + \sup_{v=v_0}\Big| r^\frac{3}{2} \frac{\zeta}{\nu}\Big| \right) + \frac{|a|}{2}   \sup_{D\left(u,v\right) \cap \{r\leq r_X\}}  |\phi \left(u,v\right)| \Bigg] , \nonumber 
\end{align}
from which the estimate (\ref{taes}) follows for $\phi$ recalling our choice (\ref{rxchoice}).
Revisiting the estimate (\ref{gbt}), we obtain the same bound for $r^\frac{3}{2} \frac{\zeta}{\nu}$.
\end{proof}
\subsection{Improving assumptions (\ref{boot3}), (\ref{boot2}) and (\ref{boot4})} \label{fiim}
Note that assumption (\ref{boot3}) has already been improved in view of Proposition \ref{zetahozbound} and Lemma \ref{phibound}.

Using the global pointwise smallness bound for $\phi$ in the region $r \leq r_X$ established in Lemma \ref{zetahozbound}, we can improve both (\ref{boot2}) and (\ref{boot4}), using that the $r$- difference in the region $r\leq r_Y$ is $c^\frac{1}{3}$ small. For (\ref{boot2}):
\begin{align}
\frac{4\pi |a|}{l^2} \int_{v_0}^{v} d\bar{v} \,  \mathbf{1}_{\{r\leq r_Y\}} r^2 \, r_v \, \phi^2 \left(u,\bar{v}\right) \nonumber \\
\leq \frac{4\pi |a|}{l^2} \sup_{r\leq r_Y} |r^2 \phi^2| \left(r_Y-r_{min}\right) \leq  B_{M,l} c^\frac{4}{3} < \frac{1}{2} M \cdot c \, .
\end{align}
Assumption (\ref{boot4}) is improved completely analogously. 
This improves the last of the bootstrap assumptions and we conclude that $\mathcal{B}=\mathcal{R}_{\mathcal{H}}$. In the final subsection we explain how this implies the estimates of Proposition \ref{uniformb}.
\subsection{Conclusions} \label{conclusio}
Note first that inserting the estimate of Lemma \ref{zetahozbound} into (\ref{cones}) actually yields 
\begin{align} \label{onc1}
 \|\phi\|^2_{H^1_{AdS,deg}\left(u,v\right)} \leq B_{M,l,a} \left[ \|\phi\|^2_{H^1_{AdS,deg}\left(u_{\mathcal{H}},v_0\right)} + \sup_{v=v_0}\Big| r^\frac{3}{2} \frac{\zeta}{\nu}\Big| \right] ,
\end{align}
after exploiting the $c^\frac{1}{3}$-smallness. From the general estimate
\begin{align} \label{convt}
\sup_{D\left(u,v\right)}  \|\phi\|_{H^1_{AdS}\left(u,v\right)} \leq B_{M,l} \left[ \sup_{D\left(u,v\right)} \|\phi\|_{H^1_{AdS,deg}\left(u,v\right)} + \sup_{D\left(u,v\right) \cap \{r\leq r_X\}} \Big| r^\frac{3}{2} \frac{\zeta}{\nu}\Big| \right] ,
\end{align}
and Lemma \ref{zetahozbound}, we conclude that (\ref{onc1}) also holds for the non-degenerate norm on the left-hand side.

To finally conclude Proposition \ref{uniformb} we need the pointwise bound on $\kappa$ and a bound for $r^{\frac{3}{2}} \frac{\zeta}{\nu}$ in the region $r\geq r_X$. For $\kappa$, we integrate (\ref{kappaR}) to obtain
$$
\kappa(u,v)=\frac{1}{2} \exp \left( \int^u_{u_\mathcal{I}} \frac{4 \pi r}{\nu}(\partial_u \phi)^2 du \right) \, .
$$
Clearly, $\kappa(u,v) \le \frac{1}{2} $ globally. To derive a lower bound on $\kappa$ in the region $r \geq r_X$ we estimate
\begin{align}
\kappa(u,v) \geq \frac{1}{2} \exp \left(-\sup_{r\geq r_X} {\frac{1}{r \left(1-\mu\right)}} \int^u_{u_\infty} r^2 \frac{\lambda}{\Omega^2}(\partial_u \phi)^2 du \right) \, .
\end{align}
Now since in $r\geq r_X$ we have $8 r_Y^2 \frac{\left(1-\mu\right)}{r^2} \geq d^\frac{1}{3}$, we can conclude that
\begin{align}
\kappa \left(u,v\right) \geq \frac{1}{2} \exp \left(- \frac{8}{r_Y} d^{-\frac{1}{3}} \cdot M \|\phi\|^2_{H^1_{AdS,deg}\left(u,v\right)} \right) \textrm{ \ \ \ in $\mathcal{R}_{\mathcal{H}} \cap \{r\geq r_X\}$.}
\end{align}
From $r\leq r_X$ we continue to integrate up to the boundary of $\mathcal{R}_{\mathcal{H}}$, now using the bound on $\frac{\zeta}{\nu}$ established in Lemma \ref{zetahozbound}:
\begin{align}
\kappa(u,v)  \geq \frac{1}{2}\exp \left(- B_{M,l} \|\phi\|^2_{H^1_{AdS,deg}\left(u,v\right)} \right) \exp \left( \int^u_{u_{r_Y}} \frac{4 \pi r}{\nu}(\partial_u \phi)^2 du \right) \nonumber \\ 
\geq \frac{1}{2}\exp \left(- B_{M,l} \sup_{\mathcal{R}_\mathcal{H}} \|\phi\|^2_{H^1_{AdS,deg}\left(u,v\right)} \right) \exp \left( -\sup_{\mathcal{R}_\mathcal{H} \cap \{ r\leq r_X\} } \Big|\frac{\zeta}{\nu} r^\frac{3}{2} \Big|^2 \int^u_{u_{r_Y}} \frac{-\nu}{r^4} du \right) \nonumber \\
 \geq \frac{1}{2} -  B_{M,l}\left( \sup_{\mathcal{R}_\mathcal{H}} \|\phi\|^2_{H^1_{AdS,deg}\left(u,v\right)} + \sup_{v=v_0}\Big| r^\frac{3}{2} \frac{\zeta}{\nu}\Big|^2 \right)  \, , \nonumber
\end{align}
where we used Lemma \ref{zetahozbound} in the last step. 

For the $r$-weighted estimated for $\frac{\zeta}{\nu}$ we prove
\begin{lemma} \label{erb}
In the entire region $\mathcal{B}$ we have
\begin{align} 
\Big| r^\frac{3}{2} \frac{\zeta}{\nu}\Big| + |\phi| \leq B_{M,l}\left( \sup_{\mathcal{B}} \|\phi\|_{H^1_{AdS,deg}\left(u,v\right)} + \sup_{v=v_0}\Big| r^\frac{3}{2} \frac{\zeta}{\nu}\Big| \right) \, .
\end{align}
\end{lemma}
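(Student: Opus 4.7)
The plan is to extend the estimates already established in $\{r\le r_X\}$ to the rest of $\mathcal{B}$. The pointwise bound on $|\phi|$ on $\mathcal{B}$ is immediate: Lemma \ref{zetahozbound} controls $|\phi|$ in $\{r\le r_X\}$, while Lemma \ref{phibound} gives $r^{3/2}|\phi|\le B_{M,l,a}\sqrt{c}$, and in particular $|\phi|\le B_{M,l}$, in $\{r\ge r_X\}$. The analogous bound on $r^{3/2}|\zeta/\nu|$ in $\{r\le r_X\}$ is already contained in Lemma \ref{zetahozbound}, so the only genuinely new task is the weighted bound on $\zeta/\nu$ in $\mathcal{B}\cap\{r\ge r_X\}$.

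Multiplying the transport equation (\ref{rsevol}) by $r^{3/2}$ and using $\partial_v r = r_v$ yields, for $X:=r^{3/2}\zeta/\nu$,
\begin{equation*}
\partial_v X + \rho'\, X = -r^{3/2}\phi_v + \frac{2r^{5/2}\kappa a\phi}{l^2},\qquad \rho':=\rho-\frac{3r_v}{2r}=\kappa\Bigl[\tfrac{5\varpi}{r^2}+\tfrac{r}{2l^2}-\tfrac{3}{2r}-\tfrac{8\pi r a\phi^2}{l^2}\Bigr].
\end{equation*}
The key observation is that in $\{r\ge r_X\}$ the coefficient $\rho'$ is uniformly positive. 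Using $\varpi = M + O(\sqrt{c})$, the bootstrap pointwise smallness of $\phi^2$, and that $r_X$ is a fixed constant of order $r_{ASch}$, one verifies $\rho'\ge B_{M,l}^{-1}\kappa r/l^2$ for $r$ large and $\rho'\ge B_{M,l}^{-1}$ uniformly on $\{r\ge r_X\}$; the positive ``redshift at infinity'' term $\kappa r/(2l^2)$ is precisely what absorbs the extra $3r_v/(2r)$ generated by the $r^{3/2}$-weight.

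Fix $(u,v)\in\mathcal{B}\cap\{r\ge r_X\}$ and let $v_\ast=v_\ast(u)$ denote the smallest $v'\le v$ such that $\{u\}\times[v_\ast,v]\subset\{r\ge r_X\}$. Since $r_v>0$, either $v_\ast=v_0$ or $r(u,v_\ast)=r_X$, and in both cases the boundary value $X(u,v_\ast)$ is controlled by the right-hand side of the lemma: by $\sup_{v=v_0}|r^{3/2}\zeta/\nu|$ in the first case, and by $r_X^{3/2}$ times the pointwise bound of Lemma \ref{zetahozbound} in the second. Integrating via Duhamel reduces the task to estimating the two source-term integrals. For the $\phi_v$-source we apply Cauchy--Schwarz in the form
\begin{equation*}
\Bigl|\int_{v_\ast}^v r^{3/2}\phi_v\,e^{-\int_{v'}^v\rho'\,dv''}dv'\Bigr|\le \sqrt{\int_{v_\ast}^v r\kappa\,e^{-2\int_{v'}^v\rho'\,dv''}dv'}\cdot\sqrt{\int_{v_\ast}^v \tfrac{r^2\phi_v^2}{\kappa}\,dv'},
\end{equation*}
the second factor being controlled by $\|\phi\|_{H^1_{AdS,deg}(u,v)}$ and the first by a uniform constant via the identity $\partial_{v'}e^{-2\int_{v'}^v\rho'}=2\rho'\,e^{-2\int_{v'}^v\rho'}$ together with the pointwise bound $r\kappa/\rho'\le B_{M,l}$, in direct parallel with the argument in the proof of Lemma \ref{zetahozbound}. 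For the $\phi$-source we use the pointwise bound $|r^{3/2}\phi|\le B_{M,l,a}\sqrt{c}$ from Lemma \ref{phibound}, reducing it to an integral of the form $B_{M,l,a}\sqrt{c}\int e^{-\int\rho'}r\kappa\,dv'\le B_{M,l,a}\sqrt{c}$ by the same mechanism, and the residual $\sqrt{c}$-small factor is absorbed into $\|\phi\|_{H^1_{AdS,deg}}$ via Corollary \ref{fienb}.

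The main obstacle is the uniform positivity of $\rho'$ throughout $\{r\ge r_X\}$: once this is in hand, the rest of the argument is a direct adaptation of the Duhamel-with-redshift scheme of Lemma \ref{zetahozbound}, and the AdS structure supplies through $\kappa r/(2l^2)$ exactly the ``redshift at infinity'' that a pure asymptotically flat calculation would lack.
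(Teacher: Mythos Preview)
Your proof is correct and follows essentially the same route as the paper: both reduce to the region $r\ge r_X$, derive the weighted transport equation for $r^{3/2}\zeta/\nu$, observe that the coefficient $\kappa\bigl[\tfrac{5\varpi}{r^2}+\tfrac{r}{2l^2}-\tfrac{3}{2r}-\tfrac{8\pi r a\phi^2}{l^2}\bigr]$ is uniformly positive (the ``redshift at infinity''), and integrate from $r=r_X$ or the data. The only difference is cosmetic: the paper estimates the source integrals by pulling out a $1/\sqrt{r}$ factor and applying Cauchy--Schwarz against $\int r_v\,dv$ (using $r_v\gtrsim r^2$ in this region), whereas you keep the exponential Duhamel weight and bound $r\kappa/\rho'$ directly, mirroring the scheme of Lemma~\ref{zetahozbound}. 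One minor remark: your final sentence about absorbing the $\sqrt{c}$ factor ``via Corollary~\ref{fienb}'' is backwards---Lemma~\ref{phibound} already gives $|r^{3/2}\phi|\le B_{M,l}\,d^{-1/6}\|\phi\|_{H^1_{AdS,deg}}$, which is what you want on the right-hand side, so no further conversion is needed.
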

\begin{proof}
In view of Lemma \ref{phibound} and Lemma \ref{zetahozbound}, we only need to derive the bound for $\frac{\zeta}{\nu}r^\frac{3}{2}$ in the region $r\geq r_X$. We compute
\begin{equation} \label{uf}
\partial_v \left( r^n \frac{\zeta}{\nu} \right)= - r^n \phi_v + \frac{2 \kappa a \phi }{l^2} r^{n+1}-r^n \frac{\zeta}{\nu} \left[ - \frac{n\lambda}{r}+ \frac{2 \kappa \varpi}{r^2} + \frac{2 \kappa r}{l^2} - \frac{8\pi ra}{l^2} \kappa \phi^2 \right] \, ,
\end{equation}
and observe that
\begin{align}
-n \frac{\lambda}{r}+2 \kappa \frac{\varpi}{r^2} + \frac{2 \kappa r}{l^2} - 8\pi r \frac{a}{l^2} \kappa \phi^2 = -n \frac{\kappa \left(1-\mu\right)}{r}+ \frac{2 \kappa \varpi}{r^2} + \frac{2 \kappa r}{l^2} - 8\pi r \frac{a}{l^2} \kappa \phi^2 \nonumber \\
= \kappa \left[ 2 \left(n+1\right) \frac{\varpi}{r^2} + \frac{r}{l^2} \left(2-n\right) - \frac{n}{r} - 8\pi r \frac{a}{l^2} \phi^2 \right] \nonumber \, .
\end{align}
Choosing $n=\frac{3}{2}$ we see that we gain an exponential decay factor for large $r$. We integrate (\ref{uf}) in $v$ from $r=r_X$ (where we already established the bound, Lemma \ref{zetahozbound}) or from the initial data to any point in $\mathcal{R}_H$, which leads to the estimate
\begin{align}
\Big| r^\frac{3}{2} \frac{\zeta}{\nu} \left(u,v\right) \Big| \leq   \sup_{v=v_0}\Big| r^\frac{3}{2} \frac{\zeta}{\nu}\Big|  + B_{M,l} \cdot \frac{1}{\sqrt{r}} \int_{v_0}^v \mathbf{1}_{\{r \geq r_X\}} \sqrt{r} \left[ r^\frac{3}{2} |\phi_v| + \frac{2\kappa}{l^2} |\phi| r^\frac{5}{2} \right] \nonumber \\
\leq   \sup_{v=v_0}\Big| r^\frac{3}{2} \frac{\zeta}{\nu}\Big|  + B_{M,l} \cdot \frac{1}{\sqrt{r}} \|\phi\|_{H^1_{AdS,deg}\left(u,v\right)} \sqrt{\int_{v_0}^v \mathbf{1}_{\{r \geq r_X\}} r_v dv} \nonumber
\end{align}
where we used both Cauchy-Schwarz and that $r_v \geq \frac{1}{B_{M,l}} r^2$ holds in $r\geq r_X$. The desired estimate follows. -- We remark that later we will improve this estimate considerably using commutation.
\end{proof}

\section{Vectorfields and an integrated decay estimate} \label{se:vfid}
\subsection{Vectorfield identities}
Let $X=X^u \left(u,v\right) \partial_u + X^v \left(u,v\right) \partial_v$ be a vectorfield and $\mathfrak{f}\left(u,v\right)$ a function. 
We have the following formula for the deformation tensor of $X$:
\begin{equation}
 2\phantom{}^{(X)}\pi^{ab} = g^{ac} \partial_c X^b + g^{bd} \partial_d X^a + g^{ac} g^{bd} g_{cd,f} X^f \, \, ,
\end{equation}
and hence the following non-vanishing components:
\begin{align}
\pi^{uu} = -\frac{2}{\Omega^2} \partial_v X^u \textrm{ \ \ \ , \ \ \ } \pi^{vv} = -\frac{2}{\Omega^2} \partial_u X^v \, ,
\end{align}
\begin{align}
\pi^{uv} = -\frac{1}{\Omega^2} \left(\partial_v X^v + \partial_u X^u\right) - \frac{2}{\Omega^2} \left(\frac{\Omega_u}{\Omega} X^u + \frac{\Omega_v}{\Omega} X^v\right)  \, ,
\end{align}
\begin{align}
\pi^{AB} = \frac{1}{r} g^{AB} \left(r_u X^u + r_v X^v\right) \, .
\end{align}
Let $\psi$ satisfy the equation\footnote{In applications, $\psi$ will be $T\phi$ and hence $\mathfrak{q}\left[T\phi\right]$ the error arising from commutation with the vectorfield $T$.}
\begin{align}
\Box_g \psi -\frac{2a}{l^2} \psi = \mathfrak{q}\left[\psi\right] \, .
\end{align}
Then the energy momentum tensor
\begin{align}
\mathbb{T}_{\mu \nu} \left[\psi\right] = \partial_\mu \psi \partial_\nu \psi-\frac{1}{2}g_{\mu \nu} (\partial \psi )^2-\frac{a}{l^2} \psi^2 g_{\mu \nu}
\end{align}
satisfies
\begin{align}
\nabla^\mu \mathbb{T}_{\mu \nu} \left[\psi\right]= \left(\nabla_\nu \psi \right)  \mathfrak{q}\left[\psi\right] \, .
\end{align} 
For future use we collect its components
\begin{align}
\mathbb{T}_{uu} \left[\psi\right] = \left(\partial_u\psi \right)^2 \textrm{ \ \ \ , \ \ \ } \mathbb{T}_{vv} \left[\psi\right] = \left(\partial_v\psi \right)^2 \textrm{ \ \ \ , \ \ \ } \mathbb{T}_{uv} \left[\psi\right] =\frac{a \Omega^2}{2l^2} \psi^2 \, ,
\end{align}
\begin{align}
g^{AB} \mathbb{T}_{AB} \left[\psi\right] = \frac{4}{\Omega^2}\partial_u \psi \partial_v \psi -2 \frac{a}{l^2} \psi^2 \, .
\end{align}
We want to make use of the following multiplier identity
\begin{align} \label{Xid}
\nabla^\mu {J}^{X,\mathfrak{f}}_\mu \left[\psi\right] = K^{X,\mathfrak{f}} \left[\psi\right]
\end{align}
where 
\begin{align}
{J}^{X,\mathfrak{f}}_\mu \left[\psi\right] &= \mathbb{T}_{\mu \nu}\left[\psi\right] X^\nu + \mathfrak{f} \psi \nabla_\mu \psi - \frac{1}{2} \psi^2 \nabla_\mu \mathfrak{f}  \nonumber \\
K^{X,\mathfrak{f}} \left[\psi\right] &= \mathbb{T}_{\mu \nu} \left[\psi\right] \pi^{\mu \nu} + X\left(\psi\right) \mathfrak{q}\left[\psi\right]+ \mathfrak{f} \left[g^{\mu \nu} \partial_\mu \psi \partial_\nu \psi \right] + \left(-\frac{1}{2} \Box \mathfrak{f} + \frac{2a}{l^2} \mathfrak{f} \right) \psi^2 + \mathfrak{f} \psi q\left[\psi\right] \nonumber
\end{align}
We compute
\begin{align} \label{mfvf}
K^{X,\mathfrak{f}} \left[\psi\right]  = - \frac{2}{\Omega^2}\left(\partial_v X^u\right) \left(\partial_u \psi\right)^2 - \frac{2}{\Omega^2}\left(\partial_u X^v\right) \left(\partial_v \psi\right)^2 \nonumber \\
+ \left(\partial_u \psi\right) \left(\partial_v \psi\right) \left[ \frac{4r_u}{\Omega^2 r} X^u + \frac{4r_v}{\Omega^2 r} X^v - \frac{4}{\Omega^2} \mathfrak{f} \right] + \left(X\left[\psi\right] + \mathfrak{f} \psi\right) q\left[\psi\right] \nonumber \\
- \frac{a}{l^2} \psi^2 \left[ -2\mathfrak{f} + \frac{l^2}{2a} \Box \mathfrak{f} + \partial_u X^u + \left(2\frac{r_u}{r} + 2 \frac{\Omega_u}{\Omega} \right) X^u + \partial_v X^v + \left(2\frac{r_v}{r} + 2 \frac{\Omega_v}{\Omega} \right) X^v\right] 
\end{align}
We finally remark that the identity (\ref{Xid}) will typically be integrated over the diamond shaped region $D\left(u,v\right)$ defined in section \ref{noco}.
\subsection{The vectorfield $T\left[\phi\right]= \frac{1}{4\kappa}\partial_v \phi + \frac{1}{4\gamma} \partial_u \phi$}
The non-vanishing components of the deformation tensor of $T$ are
\begin{align}
{}^{(T)}\pi^{uu} = \frac{1}{2\Omega^2} \frac{\gamma_v}{\gamma^2} = 8\pi \ r \left(\frac{\partial_v \phi}{\Omega^2}\right)^2  \textrm{  \ ,  \ } {}^{(T)}\pi^{vv} = \frac{1}{2\Omega^2} \frac{\kappa_u}{\kappa^2} = - 8\pi \ r \left(\frac{\partial_u\phi}{\Omega^2}\right)^2 \, .\nonumber
\end{align}
This is because
\begin{align}
2{}^{(T)}\pi^{uv} = \frac{1}{2\Omega^2} \frac{\kappa_v}{\kappa^2} + \frac{1}{2\Omega^2} \frac{\gamma_u}{\gamma^2} - \frac{1}{2} \left(\frac{2}{\Omega^2}\right)^2  X \left(\Omega^2\right) = 0
\end{align}
\begin{align}
{}^{(T)}\pi^{AB} = \frac{2}{r^3} X\left(r\right) g^{AB} = 0 \, ,
\end{align}
with the last two identities following from
\begin{align}
 \frac{2}{\Omega^2} \frac{\kappa_v}{\kappa^2} = -\frac{2}{\Omega^2} \partial_v \left(\frac{1}{\kappa}\right) = \frac{2}{\Omega^2} \partial_v \left(\frac{4r_u}{\Omega^2}\right) = \frac{8r_{uv}}{\Omega^4} + \frac{2}{\Omega^4} \frac{1}{\kappa} \partial_v \Omega^2 \, ,
\end{align}
\begin{align}
 \frac{2}{\Omega^2} \frac{\gamma_u}{\gamma^2} = -\frac{2}{\Omega^2} \partial_u \left(\frac{1}{\gamma}\right) = -\frac{2}{\Omega^2} \partial_u \left(\frac{4r_v}{\Omega^2}\right) = -\frac{8r_{uv}}{\Omega^4} + \frac{2}{\Omega^4} \frac{1}{\gamma} \partial_u \Omega^2 \, ,
\end{align}
on the one hand, and $T\left(r\right)=0$ on the other. It follows that ${}^{(T)}\pi^{ab} \mathbb{T}_{ab}  \left[\phi \right] = 0$, which means that for our non-linear system $T$ is not Killing, but nevertheless leads to a conservation law in view of the identity (\ref{Xid}) becoming
\begin{align}
\nabla^a \left(\mathbb{T}_{ab}\left[\phi\right] T^b\right) = 0 \, .
\end{align}
Inspecting the boundary-terms generated by $T$ it becomes apparent that the Hawking mass is a potential for the energy fluxes of the vectorfield $T$ through a hypersurface.
\subsection{An integrated decay estimate for $\phi$}
Recall the norms defined in section \ref{noco}. We define the fluxes
\begin{align}
\mathbb{F} \left(u,v\right) &= \| {\phi}\|^2_{H^1_{AdS}\left(u,v\right)} + \| {\phi}\|^2_{H^1_{AdS}\left(u,v_0\right)} \nonumber \\
\mathbb{F}_{deg} \left(u,v\right) &= \| {\phi}\|^2_{H^1_{AdS,deg}\left(u,v\right)} + \| {\phi}\|^2_{H^1_{AdS,deg}\left(u,v_0\right)}
\end{align}
\begin{proposition} \label{intdecest}
For any $\left(u,v\right) \in \mathcal{R}_{\mathcal{H}}$ we have
\begin{align} 
\int_{u_{\mathcal{I}}}^u \frac{\left(\partial_u \phi\right)^2}{-\nu} \left(\bar{u},v\right) d\bar{u} + \int_{v_0}^v \kappa \phi^2 \left(\bar{u},v\right) d\bar{v} + \overline{\mathbb{I}} \left[\phi\right] \left(D\left(u,v\right) \right) \nonumber \\
 \leq B_{M,l} \left[ \| {\phi}\|^2_{H^1_{AdS,deg}\left(u,v\right)} + \| {\phi}\|^2_{H^1_{AdS}\left(u,v_0\right)} \right]
\end{align}
\end{proposition}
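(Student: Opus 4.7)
The plan is to apply the multiplier identity \eqref{Xid} to $\phi$ (for which $\mathfrak{q}[\phi]=0$ since $\phi$ satisfies the Klein--Gordon equation exactly) with a carefully chosen pair $(X,\mathfrak{f})$, integrate over the diamond $D(u,v)$, and combine the resulting Morawetz-type estimate with a small multiple of the $T$-energy identity from the preceding subsection. In view of the non-degenerate $1/(-\nu)$ weight that must appear in the flux on the left-hand side, the natural ansatz is
\begin{equation*}
X \,=\, F_1(r)\,\frac{\partial_u}{-\nu} \,+\, F_2(r)\,\frac{\partial_v}{\lambda},
\end{equation*}
with $F_1,F_2$ tuned to the two competing mechanisms. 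Near the horizon $F_1\,\partial_u/(-\nu)$ plays the role of the redshift multiplier (and is regular at $\mathcal{H}^+$); for large $r$ the weight $F_2$ carries an $r$-power designed so that its boundary contribution on $\{v=\text{const}\}$ reproduces precisely the $(\partial_u\phi)^2/(-\nu)$-flux on the left-hand side, with a symmetric role for $F_1$ on $\{u=\text{const}\}$. The function $\mathfrak{f}(r)$ is then chosen to cancel, or at least control, the mixed term $\partial_u\phi\,\partial_v\phi$ appearing in \eqref{mfvf}.

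In the region $\{r\geq r_Y\}$ the $(\partial_u\phi)^2$ and $(\partial_v\phi)^2$ coefficients in $K^{X,\mathfrak{f}}$ are manifestly non-negative for the above choice, so the only obstacle is the wrong-signed zeroth-order term $-(a/l^2)\phi^2$. To absorb it I would adapt the integration-by-parts argument of Lemma \ref{Lemma4} and Lemma \ref{lem:vhardy} to the full spacetime integral, now using a radial weight compatible with $F_1,F_2$ rather than $(r-r_Y)$. The sharp Hardy constant there is $9/8$, so after absorbing the mass term a strictly positive residual remains exactly when $-a$ lies strictly below $9/8$ with a definite gap. The hypothesis $a\geq -1$ of \eqref{confbound} supplies $1/8$ of slack, which is precisely what is needed to leave room both for the $F$-weights and for the coupling to the $T$-energy identity; this is the step at which the conformal bound cannot be relaxed to the full Breitenlohner--Freedman range.

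In the region $\{r\leq r_Y\}$, where $1-\mu$ is small and the Hardy argument degenerates, the factor $1/(-\nu)$ built into $X$ is the redshift weight: a direct computation, modelled on \eqref{rsweight} and the proof of Lemma \ref{zetahozbound}, shows that the coefficient of $(\partial_u\phi)^2$ in $K^{X,\mathfrak{f}}$ acquires a positive contribution bounded below by a constant depending only on $M,l,r_Y$. Choosing the constant $c$ that governs the location of $r_Y$ sufficiently small, this positive term dominates both the wrong-signed $\phi^2$ contribution in the near-horizon region and any boundary mismatch at $r=r_Y$. The residual cubic errors coming from the $\phi^2$-corrections hidden in $\kappa,\gamma,\varpi$ are absorbed using the uniform smallness bounds of Proposition \ref{uniformb}.

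The boundary terms of the multiplier identity on $\{v=\text{const}\}$ and $\{u=\text{const}\}$ then deliver exactly the non-degenerate $(\partial_u\phi)^2/(-\nu)$-flux and $\kappa\phi^2$-flux on the left-hand side of the proposition, while the remaining contribution on $\{v=v_0\}$ is controlled by $\|\phi\|^2_{H^1_{AdS}(u,v_0)}$; adding a small multiple of the $T$-energy identity supplies the term $\|\phi\|^2_{H^1_{AdS,deg}(u,v)}$ on the right-hand side. The principal obstacle I anticipate is the delicate calibration of $F_1,F_2,\mathfrak{f}$ so as to guarantee simultaneously (i) strict positivity of the bulk integrand uniformly across $r=r_Y$, (ii) correct signs for all boundary contributions, and (iii) that the $1/8$ of slack from \eqref{confbound} is in fact enough once all of these constraints are imposed at once.
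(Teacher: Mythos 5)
Your proposal defers precisely the step that constitutes the proof. The entire content of the proposition is the positivity of the bulk term, and you leave the ``delicate calibration of $F_1,F_2,\mathfrak{f}$'' as an anticipated obstacle rather than carrying it out; there is no evidence that a single multiplier of the form $F_1(r)\,\partial_u/(-\nu)+F_2(r)\,\partial_v/\lambda$ can satisfy your conditions (i)--(iii) simultaneously, and there are concrete reasons to doubt it. The coefficient of $(\partial_v\phi)^2$ in $K^{X,\mathfrak{f}}$ is $-\frac{2}{\Omega^2}\partial_u\bigl(F_2/\lambda\bigr)$, which contains the wrong-signed piece $\frac{2}{\Omega^2}F_2\,r_{uv}/\lambda^2$; since $\lambda=\kappa(1-\mu)\to 0$ at the horizon, this blows up like $(1-\mu)^{-2}$ and cannot be dominated by the redshift contribution from $F_1$ unless $F_2$ degenerates there, after which your ansatz no longer produces the claimed fluxes. (You also have the boundary terms backwards: the $(\partial_u\phi)^2/(-\nu)$-flux on $\{v=\mathrm{const}\}$ comes from $\mathbb{T}_{uu}X^u$, i.e.\ from $F_1$, not $F_2$.) The paper's proof avoids exactly this trap by proceeding hierarchically: first the purely radial multiplier $X^u=-\frac{r_v}{\Omega^2}\mathfrak{F}$, $X^v=-\frac{r_u}{\Omega^2}\mathfrak{F}$ with $\mathfrak{f}$ chosen to kill the mixed term, applied successively with $\mathfrak{F}=-1/r^2$ (controlling only the radial combination $\frac{1}{4\gamma}\phi_u-\frac{1}{4\kappa}\phi_v$ with weight $r^{-6}$), then the zeroth-order term via an ODE choice of $\mathfrak{F}$, then $\mathfrak{F}=1/r^5$ for the remaining derivative, then $\mathfrak{F}=-1/r$ to optimize the $r$-weights; only \emph{after} the degenerate estimate (\ref{zss}) is in hand is the redshift vectorfield $Y=(-r_u)^{-1}\partial_u$ applied, because its mixed-derivative and zeroth-order bulk errors have no sign and must be absorbed by the already-established degenerate integrated decay. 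Your one-shot scheme has no such reservoir to absorb these errors.

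Your account of where the hypothesis $a\ge-1$ enters is also not correct. The $9/8$ Hardy constant (and hence the Breitenlohner--Freedman bound) is what is needed in the large-$r$ optimization (Corollary \ref{heco} with $n=1$ applied to the $\mathfrak{F}=-1/r$ multiplier); the conformal bound is \emph{not} used there as ``$1/8$ of slack'' in that constant. It enters at a different point: after integrating the zeroth-order term by parts against the identity $\frac{r}{l^2}+\frac{\varpi}{r^2}-\frac{4\pi r a\phi^2}{l^2}=\frac{1}{2r_u}\partial_u(1-\mu)-\cdots$ and applying Cauchy--Schwarz, one must verify the pointwise algebraic inequality $1-a\bigl(1+\frac{\varpi l^2}{r^3}-4\pi a\phi^2\bigr)^{-1}\ge\frac{Ml^2}{2r^3}$, which is where $0\ge a\ge-1$ is invoked. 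Without identifying this mechanism, your argument gives no reason why the residual after absorption should be strictly positive in the stated mass range.
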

Note that the boundary terms on the left are almost equal to the non-degenerate $H^1_{AdS}$-norm, \emph{except} that their $r$-weight at infinity is weaker.

Proposition \ref{intdecest} will follow from the sequence of propositions proven in the remainder of this subsection. 
We apply (\ref{Xid}) with a vectorfield $X$ for which
\begin{align}
X^u = -\frac{r_v}{\Omega^2} \mathfrak{F}\left(r\right) \textrm{ \ \ \ and \ \ \ } X^v = -\frac{r_u}{\Omega^2} \mathfrak{F}\left(r\right) \, ,
\end{align}
so that
\begin{align}
\partial_v X^u = 4\pi r \frac{\left(\partial_v \phi\right)^2}{\Omega^2} \mathfrak{F} \left(r\right) - \frac{r_v}{\Omega^2} \mathfrak{F}^\prime\left(r\right) r_v \, ,
\end{align}
\begin{align}
\partial_u X^v = 4\pi r \frac{\left(\partial_u \phi\right)^2}{\Omega^2} \mathfrak{F} \left(r\right) - \frac{r_u}{\Omega^2} \mathfrak{F}^\prime\left(r\right) r_u \, ,
\end{align}
\begin{align}
\partial_u X^u + 2\frac{\Omega_u}{\Omega} X^u + \partial_v X^v + 2\frac{\Omega_v}{\Omega} X^v=  -2\frac{r_v r_u}{\Omega^2} \mathfrak{F}^\prime\left(r\right) -2\frac{r_{vu}}{\Omega^2} \mathfrak{F}\left(r\right) \, .
\end{align}
We split
\begin{align}
K^{X,\mathfrak{f}} \left[\phi\right]  = K^{X,\mathfrak{f}}_{main} \left[\phi\right] + K_{error}^{X,\mathfrak{f}} \left[\phi\right] \, ,
\end{align}
where
\begin{align} \label{KXfm}
K^{X,\mathfrak{f}}_{main} \left[\phi\right] = 2 \mathfrak{F}^\prime \left(r\right) \left[ \frac{r_v}{\Omega^2} \partial_u \phi +  \frac{r_u}{\Omega^2} \partial_v \phi \right]^2 + \nonumber \\
+ \left(\partial_u \phi\right) \left(\partial_v \phi\right) \left[ -\frac{4r_u r_v}{\Omega^2 \Omega^2} \left(\mathfrak{F}^\prime + \frac{2}{r} \mathfrak{F} \right) - \frac{4}{\Omega^2} \mathfrak{f} \right] \nonumber \\
- \frac{a}{l^2} \phi^2 \left[ -2\mathfrak{f}  -2\frac{r_v r_u}{\Omega^2} \left(\mathfrak{F}^\prime + \frac{2}{r} \mathfrak{F} \right) -2\frac{r_{vu}}{\Omega^2} \mathfrak{F}\left(r\right)+ \frac{l^2}{2a} \Box \mathfrak{f}\right] 
\end{align}
and
\begin{align}
K^{X,\mathfrak{f}}_{error} \left[\phi\right] = - \frac{16}{\Omega^4} \pi r \left(\partial_u \phi\right)^2 \left(\partial_v \phi\right)^2 \mathfrak{F}\left(r\right) \, .
\end{align}
We choose
\begin{align}
\mathfrak{f} = -\frac{r_u r_v}{\Omega^2} \left(\mathfrak{F}^\prime + \frac{2}{r} \mathfrak{F} \right) = \frac{1}{4} \left(1-\mu\right) \left(\mathfrak{F}^\prime + \frac{2}{r} \mathfrak{F} \right) \, ,
\end{align}
so that
\begin{align} \label{KXmain}
K^{X,\mathfrak{f}}_{main} \left[\phi\right] = 2 \mathfrak{F}^\prime \left(r\right) \left( \frac{r_v}{\Omega^2} \partial_u \phi +  \frac{r_u}{\Omega^2} \partial_v \phi \right)^2 \nonumber \\
- \frac{a}{l^2} \phi^2 \left[ -2\frac{r_{vu}}{\Omega^2} \mathfrak{F}\left(r\right)+ \frac{l^2}{2a} \Box_g \left(-\frac{r_u r_v}{\Omega^2} \left(\mathfrak{F}^\prime + \frac{2}{r} \mathfrak{F} \right)  \right) \right] \, .
\end{align}
We would like to find a bounded, monotonically increasing function, $\mathfrak{F}\left(r\right)$, since this will make the boundary term in the multiplier identity controllable by the energy and will, in addition, give the derivative term in (\ref{KXmain}) a sign. Exploiting the remaining freedom in $\mathcal{F}$ to make the square bracket in (\ref{KXmain}) globally positive is difficult (if not impossible). However, the next proposition shows that the zeroth order term can be absorbed by the derivative term for a simple choice of $\mathfrak{F}$.

\begin{proposition}
We have for any $a\geq -1$ the estimate
\begin{align}
 \int_{D\left(u,v\right)}  \frac{1}{r^6} \left( \frac{1}{4\gamma} \phi_u - \frac{1}{4\kappa}\phi_v \right)^2 \frac{\Omega^2}{2} r^2 \ du \ dv \ d\sigma_{S^2} \leq B_{M,l} \cdot \mathbb{F}_{deg} \left(u,v\right) \, . \nonumber
\end{align}
\end{proposition}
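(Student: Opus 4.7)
The plan is to apply the multiplier identity \eqref{Xid} with the specific radial profile $\mathfrak{F}(r) = -\frac{1}{5 r^5}$, so that $\mathfrak{F}'(r) = r^{-6}$, $\mathfrak{F}' + \frac{2}{r}\mathfrak{F} = \frac{3}{5 r^6} > 0$, and $\mathfrak{F}$ is bounded and monotonically increasing on $\overline{\mathcal{R}_\mathcal{H}}$. Since $r_v/\Omega^2 = 1/(4\gamma)$ and $r_u/\Omega^2 = -1/(4\kappa)$, the derivative piece of $K^{X,\mathfrak{f}}_{main}$ in \eqref{KXmain} becomes
\begin{equation*}
2 \mathfrak{F}'(r)\left(\tfrac{r_v}{\Omega^2}\phi_u + \tfrac{r_u}{\Omega^2}\phi_v\right)^2 = \tfrac{2}{r^6}\left(\tfrac{\phi_u}{4\gamma} - \tfrac{\phi_v}{4\kappa}\right)^2,
\end{equation*}
which is a positive constant multiple of the LHS integrand of the proposition.

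Next, I would integrate \eqref{Xid} against $\sqrt{-g} = \Omega^2 r^2/2$ over $D(u,v)$. The divergence theorem converts the spacetime integral of $\nabla^\mu J^{X,\mathfrak{f}}_\mu$ into boundary fluxes on the four sides of $\partial D(u,v)$: the initial ray $v=v_0$, the portion of $\mathcal{I}$ at $u = u_\mathcal{I}(v)$, and the two characteristic slices $u=\text{const}$, $v=\text{const}$. Since $|\mathfrak{F}|$, $|\mathfrak{f}|$, $|\partial_u\mathfrak{f}|$, $|\partial_v\mathfrak{f}|$ are uniformly bounded with decay at least as strong as $r^{-5}$ at infinity, the energy-momentum flux $\mathbb{T}_{\mu\nu} X^\nu$ on each slice is dominated by the corresponding piece of $\|\phi\|^2_{H^1_{AdS,deg}}$, while the lower-order flux terms $\mathfrak{f}\phi\nabla_\mu \phi$ and $\tfrac{1}{2}\phi^2\nabla_\mu\mathfrak{f}$ are absorbed by Cauchy--Schwarz combined with the weighted Hardy inequality of Lemma \ref{Lemma2}. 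Boundary contributions on $\mathcal{I}$ vanish using the pointwise decay $|\phi|\lesssim r^{-3/2}$ of Proposition \ref{uniformb} together with the strong radial decay of $\mathfrak{F},\mathfrak{f}$. Altogether the boundary side is bounded by $B_{M,l}\cdot \mathbb{F}_{deg}(u,v)$.

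On the bulk side, split $K^{X,\mathfrak{f}} = K^{X,\mathfrak{f}}_{main} + K^{X,\mathfrak{f}}_{error}$. The cubic term $K^{X,\mathfrak{f}}_{error} = -\tfrac{16\pi r \mathfrak{F}}{\Omega^4}(\phi_u \phi_v)^2$ has a favorable sign since $\mathfrak{F}<0$ and sits on the good side of the inequality, so it can be discarded. The main term contains the derivative piece above plus the zeroth-order contribution $-\tfrac{a}{l^2}\phi^2 \,\mathcal{B}$ with
\begin{equation*}
\mathcal{B} = -\tfrac{2 r_{uv}}{\Omega^2}\mathfrak{F}(r) + \tfrac{l^2}{2a}\Box_g \mathfrak{f}, \qquad \mathfrak{f} = \tfrac{3(1-\mu)}{20\, r^6}.
\end{equation*}
Using the expression for $r_{uv}$, the identity $\Box_g f = -\tfrac{4}{\Omega^2}(f_{uv}+\tfrac{r_u}{r}f_v+\tfrac{r_v}{r}f_u)$, and the evolution laws \eqref{ee:varpiu}--\eqref{ee:varpiv} for $\varpi$, one verifies that $\mathcal{B}$ behaves like $r^{-4}$ at infinity; integrated against $\phi^2 \Omega^2 r^2$ this is controlled by $\mathbb{F}_{deg}$ via the Hardy inequality of Lemma \ref{Lemma2} applied on each $v=\text{const}$ slice. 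The cubic self-correction coming from the $-\tfrac{4\pi r a \phi^2}{l^2}$ contribution inside $r_{uv}$ is absorbed using the pointwise smallness $|r^{3/2}\phi|\lesssim \sqrt{c}$ of Proposition \ref{uniformb}.

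The main obstacle is the sign/magnitude analysis of the coefficient $-\tfrac{a}{l^2}\mathcal{B}$: one has to match the $\tfrac{r}{l^2}\mathfrak{F} = -\tfrac{1}{5 l^2 r^4}$ term inside $-\tfrac{2 r_{uv}}{\Omega^2}\mathfrak{F}$ against the leading part of $\tfrac{l^2}{2a}\Box_g\mathfrak{f}$ and to show that, after the Hardy step, the resulting $\phi^2$-integral is controlled by a small multiple of the derivative bulk plus $\mathbb{F}_{deg}$. This is precisely where the restriction $a \geq -1$ enters, through the constant in Lemma \ref{Lemma2} which degenerates as one approaches the BF threshold $a = -\tfrac{9}{8}$.
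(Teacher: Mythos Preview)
Your proposal has a genuine gap in the handling of the zeroth-order bulk term, and it also misidentifies the mechanism through which the restriction $a\ge -1$ enters.

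First, the choice $\mathfrak{F}=-\tfrac{1}{5r^5}$ forces $\mathfrak{f}=\tfrac{3(1-\mu)}{20r^6}\neq 0$, and you are then left with a spacetime integral of the form $\int_{D(u,v)} \phi^2\, r^{-4}\,\Omega^2 r^2\,du\,dv$ coming from $-\tfrac{a}{l^2}\phi^2\mathcal{B}$. You claim this is controlled by $\mathbb{F}_{deg}(u,v)$ via Lemma~\ref{Lemma2} ``applied on each $v=\text{const}$ slice''. But Lemma~\ref{Lemma2} is a \emph{characteristic} Hardy inequality: it bounds $\int r^2 r_u\phi^2$ on a single null slice by a derivative flux on that same slice. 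Summing over slices cannot produce a bound by the fixed quantity $\mathbb{F}_{deg}(u,v)$; it only yields another spacetime integral, and one that is not manifestly dominated by the derivative bulk $\int \tfrac{2}{r^6}(R\phi)^2\Omega^2 r^2$. In effect you are assuming an integrated decay estimate for $\phi^2$ in order to prove one. Moreover, Lemma~\ref{Lemma2} holds for any $\aleph<\tfrac{9}{8}$, so it cannot be the source of the constraint $a\ge -1$.

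The paper's argument avoids both issues by choosing $\mathfrak{F}=-\tfrac{1}{r^2}$, for which $\mathfrak{F}'+\tfrac{2}{r}\mathfrak{F}=0$ and hence $\mathfrak{f}\equiv 0$. This kills the $\Box_g\mathfrak{f}$ contribution and all lower-order flux terms outright. The remaining zeroth-order bulk term $\tfrac{a}{l^2 r^2}\phi^2\big(\tfrac{r}{l^2}+\tfrac{\varpi}{r^2}-\tfrac{4\pi r a\phi^2}{l^2}\big)$ is then handled by a genuine \emph{spacetime} integration by parts: one writes the bracket as $\tfrac{1}{2r_u}\partial_u(1-\mu)+\text{(good-signed error)}$ (and symmetrically in $v$), integrates by parts to produce a cross term $\phi\cdot R\phi$, and applies Cauchy--Schwarz with a splitting that regenerates exactly half of the original zeroth-order integral on one side. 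The resulting comparison of the derivative bulk with its Hardy counterpart is governed by the pointwise inequality
\[
1+a\Big(1+\tfrac{\varpi l^2}{r^3}-4\pi a\phi^2\Big)^{-1}\ \ge\ \tfrac{Ml^2}{2r^3},
\]
which holds precisely for $-1\le a\le 0$ (the case $a>0$ being trivial). This is where $a\ge -1$ actually enters.
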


\begin{proof}
Apply the identity (\ref{Xid}) with $\mathfrak{F}\left(r\right) = - \frac{1}{r^2}$ (and hence $\mathfrak{f}=0$). We first look at the boundary terms. We have
\begin{align}
\int_{D\left(u,v\right)}  \nabla^\mu {J}^{X,\mathfrak{f}}_\mu \left[\phi\right] = \int_{v_0}^v \left(\mathbb{T}_{vv} X^v + \mathbb{T}_{uv} X^u \right)  \left(u,\bar{v} \right) d\bar{v} d\sigma_{S^2} \nonumber \\
\int_{u_\mathcal{I}}^u \left(\mathbb{T}_{uu} X^u + \mathbb{T}_{uv} X^v  \right)  \left(\bar{u},v\right) d\bar{u} d\sigma_{S^2}  - \int_{u_0}^u \left(\mathbb{T}_{uu} X^u + \mathbb{T}_{uv} X^v  \right)   \left(\bar{u},v_0 \right) d\bar{u} d\sigma_{S^2} \nonumber 
\end{align}
because the boundary term on $\mathcal{I}$ vanishes. It is not hard to see that
\begin{align}
\Big| \int_{D\left(u,v\right)}  \nabla^\mu {J}^{X,\mathfrak{f}}_\mu \left[\phi\right] \Big| \leq  B_{M,l} \cdot\mathbb{F}_{deg} \left(u,v\right)  \nonumber \, .
\end{align}
We turn to the spacetime term. Observe that
\begin{align} \label{choicea1}
\frac{2a}{l^2}\phi^2 \frac{r_{vu}}{\Omega^2} \mathfrak{F}\left(r\right) =  \frac{a}{l^2 r^2} \phi^2 \left(\frac{r}{l^2} + \frac{\varpi}{r^2} - \frac{4\pi r a \phi^2}{l^2} \right) \, .
\end{align}
This term has the same sign as $a$ since the bracket is positive. (Hence for $a>0$ we are done immediately.) Moreover we can write both
\begin{align}
\frac{a}{l^2 r^2} \phi^2 \left(\frac{r}{l^2} + \frac{\varpi}{r^2} - \frac{4\pi r a \phi^2}{l^2} \right) = \frac{a}{l^2 r^2} \left(\frac{1}{2 r_u} \partial_u \left(1-\mu\right) - \frac{8\pi r \ r_v}{r_u \Omega^2} \left(\partial_u \phi\right)^2 \right) \phi^2 \nonumber
\end{align}
and
\begin{align}
\frac{a}{l^2 r^2} \phi^2 \left(\frac{r}{l^2} + \frac{\varpi}{r^2} - \frac{4\pi r a \phi^2}{l^2} \right) = \frac{a}{l^2 r^2} \left(\frac{1}{2 r_v} \partial_v \left(1-\mu\right) - \frac{8\pi r \ r_u}{r_v \Omega^2} \left(\partial_v \phi\right)^2 \right) \phi^2 \nonumber
\end{align}
Integrating this zeroth order terms yields (in view of $\sqrt{g} = \frac{\Omega^2}{2} r^2$)
\begin{align}
\int_{D\left(u,v\right)} \frac{a}{l^2 r^2} \phi^2 \left(\frac{r}{l^2} + \frac{\varpi}{r^2} - \frac{4\pi r a \phi^2}{l^2} \right) = \nonumber \\
\frac{1}{2} \int_{D\left(u,v\right)}  \frac{a}{l^2} \phi^2\left(-\kappa \cdot \partial_u \left(1-\mu\right) - \frac{4\pi r \ r_v}{r_u} \left(\partial_u \phi\right)^2 \right) du \ dv \ d\sigma_{S^2} \nonumber \\
\frac{1}{2} \int_{D\left(u,v\right)}  \frac{a}{l^2}\phi^2 \left(\gamma \cdot \partial_v \left(1-\mu\right) - \frac{4\pi r \ r_u}{r_v} \left(\partial_v \phi\right)^2 \right) du \ dv \ d\sigma_{S^2} \, .
\end{align}
Integrating the first term in the second line by parts, we see that if the derivative hits the $\kappa$ it will cancel with the second term in that line. Similarly for the third line and the derivative falling on $\gamma$. This means that
\begin{align}
\int_{D\left(u,v\right)} \frac{a}{l^2 r^2} \phi^2 \left(\frac{r}{l^2} + \frac{\varpi}{r^2} - \frac{4\pi r a \phi^2}{l^2} \right) \frac{\Omega^2}{2} r^2 \ du \ dv  \nonumber \\
= \int_{D\left(u,v\right)} \frac{4a}{l^2} \gamma \kappa \left(1-\mu\right)  \phi \left( \frac{1}{4\gamma} \phi_u - \frac{1}{4\kappa}\phi_v \right) du \ dv  \nonumber \\
- \int_{u_0}^u \frac{a}{4 l^2} \phi^2 \left(-r_u\right) \left(\bar{u},v_0 \right) d\bar{u} + \int_{u_{\mathcal{I}}}^u \frac{a}{4 l^2} \phi^2  \left(-r_u\right)  \left(\bar{u},v \right) d\bar{u} - \int_{v_0}^v \frac{a}{4 l^2} \phi^2 \left(r_v\right) \left(u, \bar{v}\right) d\bar{v} \nonumber
\end{align}
since again the boundary term on $\mathcal{I}$ vanishes. Clearly, the boundary terms are manifestly controlled by $\mathbb{F}_{deg} \left(u,v\right)$. For the remaining spacetime term we note, using $xy \leq \frac{x^2}{2} + \frac{y^2}{2}$ combining the estimate with the previous identity
\begin{align}
 \int_{D\left(u,v\right)} \frac{4|a|}{l^2} \gamma \kappa \left(1-\mu\right)  \phi \left( \frac{1}{4\gamma} \phi_u - \frac{1}{4\kappa}\phi_v \right)du \ dv   \nonumber \\
 \leq  \int_{D\left(u,v\right)}  \frac{|a|}{2l^2} \phi^2 \left(\frac{r}{l^2} + \frac{\varpi}{r^2} - \frac{4\pi r a \phi^2}{l^2} \right) \frac{\Omega^2}{2} \ du \ dv \  + \nonumber \\
 \int_{D\left(u,v\right)}  \frac{32|a|}{l^2 } \frac{\gamma^2 \kappa^2 \left(1-\mu\right)^2}{\Omega^4} \left(\frac{r}{l^2} + \frac{\varpi}{r^2} - \frac{4\pi r a \phi^2}{l^2} \right)^{-1} \left( \frac{1}{4\gamma} \phi_u - \frac{1}{4\kappa}\phi_v \right)^2 \frac{\Omega^2}{2}  du \ dv \nonumber 
\end{align}
and using that  $\frac{\gamma^2 \kappa^2 \left(1-\mu\right)^2}{\Omega^4}=\frac{1}{16}$ we obtain
\begin{align}
\frac{1}{2}\int_{D\left(u,v\right)} \frac{|a|}{l^2 r^2} \phi^2 \left(\frac{r}{l^2} + \frac{\varpi}{r^2} - \frac{4\pi r a \phi^2}{l^2} \right) \frac{\Omega^2}{2} r^2 \ du \ dv \leq \nonumber \\
  \int_{D\left(u,v\right)}   \frac{2|a|}{r} \left(1 + \frac{\varpi l^2}{r^3} - 4\pi a \phi^2 \right)^{-1}  \left( \frac{1}{4\gamma} \phi_u - \frac{1}{4\kappa}\phi_v \right)^2 \frac{\Omega^2}{2}  du \ dv + B_{M,l} \cdot\mathbb{F}_{deg} \left(u,v\right) \, .
 \nonumber
\end{align}
Finally, in view of the estimate
\begin{align}
1 - a  \left(1 + \frac{\varpi l^2}{r^3} - 4\pi a \phi^2 \right)^{-1}  \geq \frac{M l^2}{2r^3} \, ,
\end{align}
which holds for all $0\geq a \geq -1$, we conclude
\begin{align}
\int_{D\left(u,v\right)} K^{X,f} \left[\phi\right] \geq \int_{D\left(u,v\right)}  \frac{2M l^2}{r^4} \left( \frac{1}{4\gamma} \phi_u - \frac{1}{4\kappa}\phi_v \right)^2 \frac{\Omega^2}{2}  du dv - B_{M,l} \cdot\mathbb{F}_{deg} \left(u,v\right) \nonumber ,
\end{align}
as $K^{X,0}_{error}\left[\phi\right]$ has a good sign. The proposition follows.
\end{proof}

We have established an integrated decay estimate for a certain (radial) derivative. From here it is relatively straightforward to derive estimates for the other derivative and the zeroth order term, and to finally improve the $r$-weights near infinity and eliminate the degeneration near the horizon. Repeating the proof above with a slightly different $r$-weight, we can derive the following more general Hardy inequality:
\begin{corollary} \label{heco}
Set $f= r^n \left(\frac{2r}{l^2} + \frac{2\varpi}{r^2} - \frac{8\pi r a \phi^2}{l^2} \right) + n r^{n-1} \left(1-\mu\right)$. We have, for $0\leq n \leq 2$, the estimate
\begin{align}
\int_{D\left(u,v\right)} f \phi^2 \Omega^2 du dv \leq \int_{D\left(u,v\right)} \Omega^2 \frac{r^{2n}}{f} \left(\frac{\phi_u}{\gamma} - \frac{\phi_v}{\kappa} \right)^2 du dv + B_{M,l} \cdot\mathbb{F}_{deg} \left(u,v\right) \nonumber
\end{align}
\end{corollary}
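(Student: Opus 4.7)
The plan is to generalize the integration-by-parts argument used in the preceding proposition. The key observation I would make first is the divergence representation
\[
f = -\frac{4}{\Omega^2}\partial_u(r^n r_v) = -\frac{4}{\Omega^2}\partial_v(r^n r_u),
\]
which follows from the wave equation for $r$ (rewriting $-4r_{uv}/\Omega^2 = \frac{2\varpi}{r^2} + \frac{2r}{l^2} - \frac{8\pi r a \phi^2}{l^2}$) together with the identity $-4 r_u r_v /\Omega^2 = 1-\mu$.

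Averaging the two representations and integrating by parts on $D(u,v)$ produces
\[
\int_D f\phi^2 \Omega^2\, du\, dv = -2\mathcal{B} + 4\int_D r^n \phi\,(r_v \phi_u + r_u \phi_v)\, du\, dv,
\]
where $\mathcal{B}$ collects the boundary fluxes from Stokes' theorem. Using $r_v = \Omega^2/(4\gamma)$ and $r_u = -\Omega^2/(4\kappa)$, the bulk integrand simplifies to $r^n \phi\, \Omega^2(\phi_u/\gamma - \phi_v/\kappa)$. A weighted Cauchy--Schwarz inequality
\[
r^n \phi\, \Omega^2(\phi_u/\gamma - \phi_v/\kappa) \leq \tfrac{1}{2} f \phi^2 \Omega^2 + \tfrac{1}{2}\,\tfrac{r^{2n}}{f}\,\Omega^2(\phi_u/\gamma - \phi_v/\kappa)^2,
\]
followed by absorbing half of $\int_D f\phi^2\Omega^2$ on the left, yields
\[
\int_D f\phi^2 \Omega^2\, du\, dv \leq 4|\mathcal{B}| + \int_D \frac{r^{2n}}{f}\,\Omega^2 (\phi_u/\gamma - \phi_v/\kappa)^2\, du\, dv,
\]
which is the sought inequality modulo controlling $|\mathcal{B}|$.

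The remaining step is to check $|\mathcal{B}| \leq B_{M,l}\,\mathbb{F}_{deg}(u,v)$. On the data ray $v=v_0$ and on the characteristic segments $u = $ const, $v = $ const bounding $D$, the boundary integrands are of the form $r^n r_v \phi^2$ or $r^n(-r_u)\phi^2$; since $r \geq r_{min}>0$ throughout $\mathcal{R}_\mathcal{H}$, the hypothesis $0 \leq n \leq 2$ gives $r^n \leq C_{r_{min}} r^2$, so these are dominated pointwise by the $\phi^2$-pieces $r^2 r_v\phi^2$ and $r^2(-r_u)\phi^2$ of the $H^1_{AdS,deg}$-norm. The main technical point I expect to be slightly delicate is the $\mathcal{I}$-edge contribution of the diamond: using the Dirichlet decay $\phi \lesssim r^{-3/2-s/2}$ (Proposition \ref{uniformb2}) together with $r_v \sim r^2/l^2$ at $\mathcal{I}$, the $\mathcal{I}$-trace of $r^n r_v \phi^2$ is of order $r^{n-1-s}$. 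The condition $n \leq 2$ combined with $s \geq 1$ (which is ensured by $a \geq -1$) is precisely what is needed for this trace to be controlled; in the subcritical cases it in fact vanishes, and the borderline $s=1$, $n=2$ needs only the pointwise bound from Proposition \ref{uniformb2}.
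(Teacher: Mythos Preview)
Your approach is essentially identical to the paper's. Your divergence identity $f = -\frac{4}{\Omega^2}\partial_u(r^n r_v) = -\frac{4}{\Omega^2}\partial_v(r^n r_u)$ is exactly equivalent to the paper's representation $f = \frac{1}{r_u}\partial_u\big(r^n(1-\mu)\big) - r^n\frac{16\pi r\, r_v}{r_u\Omega^2}(\partial_u\phi)^2$: the extra term in the paper's version is precisely what appears when one expands $\partial_u\big(\kappa\cdot r^n(1-\mu)\big)$ via the Raychaudhuri equation, and it cancels in the integration by parts just as in the $n=0$ proof. The averaging, integration by parts, and weighted Cauchy--Schwarz that follow are the intended steps.

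The one substantive issue is your invocation of Proposition~\ref{uniformb2} to handle the $\mathcal{I}$-boundary contribution. This is circular: Proposition~\ref{uniformb2} is established in Section~\ref{mp2}, and its proof relies on the integrated decay estimate of Proposition~\ref{intdecest}, whose Step~3 uses the present corollary (with $n=1$). The fix is simple: the asymptotic decay $\phi = O(r^{-3/2-s/2})$ at $\mathcal{I}$ is already part of the function space furnished by the local well-posedness of \cite{gs:lwp}; it holds for each fixed $v$, just without a uniform-in-$v$ constant. That is enough to make the $\mathcal{I}$-trace of $r^n r_v\phi^2$ vanish pointwise whenever $n < 1+s$, so the corresponding boundary integral is zero and no uniform bound is required. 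Under $a \geq -1$ (hence $s \geq 1$) this covers all $0 \leq n < 2$; the borderline case $n=2$, $a=-1$ would indeed need the uniform estimate, but the corollary is only ever applied with $n=1$.
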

\begin{proof}
Write $f= \frac{1}{r_u} \partial_u \left(r^n \left(1-\mu\right) \right) - r^n \frac{16\pi r r_v}{r_u \Omega^2} \left(\partial_u \phi\right)^2$ and analogously for the $v$-derivative. Then repeat the proof above. Note that $n=0$ yields the previous case.
\end{proof}
\begin{proposition}
We have for any $a\geq -1$ the estimate
\begin{align}
 \int_{D\left(u,v\right)} \left(\phi^2 + \frac{1}{r^2} \frac{1}{\gamma^2} \left(\partial_u \phi\right)^2 +\frac{1}{r^2} \frac{1}{\kappa^2} \left(\partial_v \phi\right)^2    \right) \frac{\Omega^2}{2} r^2 \ du \ dv \leq B_{M,l} \cdot\mathbb{F}_{deg} \left(u,v\right) \nonumber
\end{align}
\end{proposition}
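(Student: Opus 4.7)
The plan is to chain the preceding proposition's bound $\int_D \Omega^2\xi^2 r^{-4}\, du\, dv \le C\,\mathbb{F}_{deg}$ (with $\xi := \phi_u/\gamma - \phi_v/\kappa$) with Corollary \ref{heco}, after first upgrading the radial weight on $\xi^2$ via a second application of the multiplier identity \eqref{Xid}.

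First, one applies \eqref{Xid} with the same spacelike multiplier $X^u = -r_v \mathfrak{F}/\Omega^2$, $X^v = -r_u \mathfrak{F}/\Omega^2$ as in the previous proposition and $\mathfrak{f} = (1-\mu)(\mathfrak{F}' + 2\mathfrak{F}/r)/4$ (which again cancels the displayed $\phi_u\phi_v$-term in \eqref{KXmain}), but this time with $\mathfrak{F}$ bounded and monotone and $\mathfrak{F}'(r) \sim 1/r^2$ at infinity, e.g.\ $\mathfrak{F}(r) = 1 - r_Y/r$. The bulk derivative term $\int_D 2\mathfrak{F}'(\xi/4)^2\,(\Omega^2 r^2/2)\, du\, dv$ then carries no radial decay and controls $\int_D \Omega^2 \xi^2\, du\, dv$ from below up to a multiplicative constant. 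The zeroth-order piece $-\tfrac{a}{l^2}\phi^2[\,\cdots]$ is treated exactly as in the proof of the previous proposition: it is rewritten via the Raychaudhuri equations \eqref{cons1}--\eqref{cons2} as total $u$- and $v$-divergences plus a cross-term linear in $\phi\phi_u$ and $\phi\phi_v$, which after one integration by parts and Cauchy--Schwarz is absorbed (using Corollary~\ref{heco} with $n=0$) into half of the bulk. Boundary fluxes of $J^{X,\mathfrak{f}}$ at $\mathcal{I}$ and on the data are controlled by $\mathbb{F}_{deg}$ because, for $\mathfrak{F}$ bounded, the flux weights on constant-$u$ and constant-$v$ slices are uniformly comparable to the $H^1_{AdS,deg}$ weights. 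This yields $\int_D \Omega^2\xi^2\, du\, dv \le C\,\mathbb{F}_{deg}$.

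With the upgraded $\xi$-bound in hand, Corollary~\ref{heco} applied with $n=1$ gives $\int_D \phi^2 \Omega^2 r^2\, du\, dv \le C\,\mathbb{F}_{deg}$: the function $f$ corresponding to $n=1$ behaves like $r^2/l^2$ at infinity and is bounded below globally (by Lemma~\ref{triv} and smallness of $\phi$), so that $r^{2n}/f = r^2/f$ is uniformly bounded by some $B_{M,l}$. For the remaining derivative integrals, use the polarisation identity
\begin{equation*}
\tfrac{\phi_u^2}{\gamma^2} + \tfrac{\phi_v^2}{\kappa^2} = \tfrac{1}{2}\xi^2 + 8T^2, \qquad T := \Omega^{-2}(\lambda \phi_u - \nu \phi_v),
\end{equation*}
combined with the algebraic identity $\Omega^2 T^2 = \tfrac{1}{16}\Omega^2 \xi^2 + (1-\mu)\phi_u\phi_v$. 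The remaining cross-integral $\int_D (1-\mu)\phi_u\phi_v\, du\, dv$ is treated by one integration by parts in $u$ together with the wave equation \eqref{laste}, producing boundary terms bounded by $\mathbb{F}_{deg}$ and bulk terms that Cauchy--Schwarz absorbs into $\int \Omega^2 \xi^2$ and $\int \phi^2 \Omega^2 r^2$.

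The main obstacle is the first step: one must choose $\mathfrak{F}$ so that the boundary flux of $J^{X,\mathfrak{f}}$ at $\mathcal{I}$ remains controlled by $\mathbb{F}_{deg}$ (forcing $\mathfrak{F}$ to be bounded at infinity) while simultaneously producing a bulk term comparable to $\int \Omega^2 \xi^2$ with no radial decay. This tension is resolved because, for $\mathfrak{F}$ bounded, the asymptotic behaviour of the boundary integrands at $\mathcal{I}$ matches exactly the cosmological $r$-weights already present in the $H^1_{AdS}$ norm. As in Corollary~\ref{heco} and the previous proposition, the condition $a \ge -1$ is essential for the absorption of the zeroth-order term via the Hardy inequality.
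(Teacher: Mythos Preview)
There is a genuine gap in your first step. With your choice $\mathfrak{F}=1-r_Y/r$, the function $\mathfrak{f}=\tfrac14(1-\mu)(\mathfrak{F}'+2\mathfrak{F}/r)$ is \emph{not} zero; it grows like $r/(2l^2)$ at infinity. Consequently the zeroth-order piece in (\ref{KXmain}) contains, besides the $-2\tfrac{r_{uv}}{\Omega^2}\mathfrak{F}$ contribution, the term $-\tfrac12\phi^2\,\Box_g\mathfrak{f}$, and a short computation shows that the two combine to give $\mathcal{O}(\mathfrak{F})\sim -(1+a)\,r/l^4$ near infinity. Integrated against the volume $\Omega^2 r^2/2$, this is a bad term of order $\int r^3\phi^2\Omega^2$, one power of $r$ too strong to be absorbed by Corollary~\ref{heco} (any $n\le 2$) into your bulk $\int \Omega^2\xi^2$. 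Your appeal to the Raychaudhuri rewriting ``exactly as in the previous proposition'' does not apply: that proof had $\mathfrak{f}=0$, so there was \emph{no} $\Box_g\mathfrak{f}$ contribution to deal with. The Raychaudhuri structure is specific to the $-2r_{uv}/\Omega^2$ factor and does not see $\Box_g\mathfrak{f}$ at all.

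The paper avoids this in Step~3 by taking $\mathcal{F}=-1/r$, which has the \emph{same} derivative asymptotics $\mathcal{F}'\sim 1/r^2$ as your choice but vanishes at infinity, so that $\mathfrak{f}$ stays bounded and $\mathcal{O}(\mathcal{F})\sim a/l^4$ is merely constant; then Corollary~\ref{heco} with $n=1$ absorbs it into $\int\Omega^2\xi^2$ precisely when $|a|<9/8$. However, even with this multiplier the absorption constants are only favourable \emph{near infinity}: the paper needs a separate argument in the compact region $r\le R$, supplied by Steps~1--2 (an ODE construction producing an $\mathfrak{F}\sim r^{-5}$ for which $\mathcal{O}(\mathfrak{F})$ has a global sign, giving weak-weight control of $\int r^{-4}\phi^2\Omega^2 r^2$ and then of the separate derivatives). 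Your proposal skips this compact-region analysis entirely; without it the one-step absorption via Corollary~\ref{heco} does not close globally.
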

\begin{proof}
Step 1: We first establish that there is a bounded $C^3$ function $\mathcal{F}\left(r\right)$ with $|r^5 \mathcal{F}\left(r\right)| + |r^6 \mathcal{F}^\prime\left(r\right)| \leq B_{M,l}$ such that
$$\mathcal{O}(\frf)= -\frac{1}{8}\square_g\left( (1-\mu)\left(\frf^\prime+\frac{2\frf}{r}\right)\right) -\frac{a}{l^2} \frf \left(\frac{r}{l^2}+\frac{M}{r^2}\right)$$
has a sign. To do this, one first checks that the function $\frac{1}{r^5}$ gives the expression a positive sign near infinity.  Next one continues $\frac{1}{r^5}$ at some fixed $r=R$ where $ r^4 \mathcal{O}(\frf \left(R\right)) \geq c\left(M,l\right)$ by solving the linear third order ODE $\mathcal{O}(\frf) = \mathcal{O}(\frf \left(R\right))$ with bounded coefficients on the bounded $r$-interval $\left[r_{min},R\right]$. (The latter fact guarantees that $\mathcal{F}$ itself stays bounded.)
On the other hand, looking back at (\ref{KXmain}), we see that the derivative term is controlled by the previous proposition and hence that we have the estimate
\begin{align} \label{zc}
 \int_{D\left(u,v\right)} \left(  \frac{1}{r^4} \phi^2 \right) \frac{\Omega^2}{2} r^2 \ du \ dv \  \leq B_{M,l} \cdot\mathbb{F}_{deg} \left(u,v\right) \, ,
\end{align}
i.e~control over the zeroth order term.

Step 2: To retrieve the missing derivative, we revisit (\ref{KXfm}). It is apparent that choosing $\mathcal{F}= \frac{1}{r^5}$ and $\mathfrak{f}=0$, we can dominate the mixed derivative term by the good-signed quadratic terms, while the zeroth order term is controlled by (\ref{zc}). Again, the boundary terms are obviously controlled by the energy. Hence
\begin{align}
 \int_{D\left(u,v\right)} \left(  \frac{1}{r^4} \phi^2 + \frac{1}{r^6} \frac{1}{\gamma^2} \left(\partial_u \phi\right)^2 +\frac{1}{r^6} \frac{1}{\kappa^2} \left(\partial_v \phi\right)^2    \right) \frac{\Omega^2}{2} r^2 \ du \ dv  \leq B_{M,l} \cdot\mathbb{F}_{deg} \left(u,v\right) \nonumber \, .
\end{align}

Step 3: Finally, we optimize the weights near infinity by choosing $\mathcal{F}= -\frac{1}{r}$ in (\ref{KXmain}). Note that the dominant term near infinity of the zeroth order term becomes
\begin{align}
\mathcal{O}(\frf) = \frac{a}{l^4} + \textrm{lower order terms}
\end{align}
in this case. On the other hand, the derivative term in (\ref{KXmain}) becomes \newline $\frac{1}{8r^2}\left(\frac{1}{\gamma} \phi_u - \frac{1}{\kappa} \phi_v\right)^2$. By Corollary \ref{heco}, applied with $n=1$ we can absorb the zeroth order term by the derivative term near infinity provided that $-a<\frac{9}{8}$. This finally yields the proposition, since an integrated decay estimate away from infinity was proved in Step 2.

Note that for any bounded $\mathcal{F}$ we have $\int_{D\left(u,v\right)} K_{error}^{X,\mathfrak{f}} \leq \epsilon \cdot {\mathbb{I}} \left[\phi\right] \left(D\left(u,v\right) \right)$, the $\epsilon$ coming from the pointwise bound on $\frac{\phi_u}{r_u}$ of Proposition \ref{uniformb}. Hence this error can always be absorbed by the main term.
\end{proof}
Up until now we proved
\begin{align} \label{zss}
{\mathbb{I}} \left[\phi\right] \left(D\left(u,v\right) \right) \leq B_{M,l} \cdot\mathbb{F}_{deg} \left(u,v\right)  \, .
\end{align}
The missing ingredient to reach Proposition \ref{intdecest} as stated is to go from the degenerate to the non-degenerate spacetime term, and to obtain the missing boundary term. This is achieved with the (future-directed, null) redshift vectorfield
\begin{align}
Y = \left(-r_u\right)^{-1} \partial_u \, .
\end{align}
From (\ref{Xid}) we obtain
\begin{align}
K^{Y,0} \left[\psi\right] = \frac{\left(\partial_u \psi\right)^2}{2r_u^2} \left(\frac{2\varpi}{r^2} + \frac{2r}{l^2} \right) + \frac{\partial_u \psi}{r_u} \frac{1}{r \kappa}\partial_v \psi + \frac{a}{l^2} \psi^2 \left[\frac{2}{r} \right] \, ,
\end{align}
\begin{align}
J^{Y,0} \left[\psi\right] \left(Y, \partial_u \right) = \frac{\left(\partial_u \psi\right)^2 }{r_u^2} \left(-r_u\right) \textrm{ \ \ \ \ \ , \ \ \ } 
J^{Y,0} \left[\psi\right] \left(Y, \partial_v \right) = \frac{2a \kappa}{2l^2} \psi^2 \, ,
\end{align}
which in turn leads to the estimate
\begin{align} \label{efg}
\int_{u_\mathcal{I}}^u \frac{\left(\partial_u \phi\right)^2 }{r_u^2} \left(-r_u\right) \left(\bar{u},v\right) d\bar{u} + \int_{D\left(u,v\right)}  \frac{r \left(\partial_u \phi\right)^2 }{r_u^2} \Omega^2 r^2 du dv \nonumber \\
\leq B_{M,l} \int_{D\left(u,v\right)} \left[ \frac{ \left(\partial_v \phi\right)^2 }{r^3 \kappa^2} + \frac{\phi^2}{r} \right] \Omega^2 r^2 d\bar{u} d\bar{v} +  B_{M,l} \int_{v_0}^v \frac{1}{l^2} \kappa \phi^2 r^2 \left(u, \bar{v}\right) d\bar{v} \nonumber \\ + \int_{u_\mathcal{I}}^u \frac{\left(\partial_u \phi\right)^2 }{r_u^2} \left(-r_u\right) \left(\bar{u},v_0\right) d\bar{u} \, .
\end{align}
The first term on the right hand side is controlled by the degenerate integrated decay $\mathbb{I}\left[\phi\right]\left(D\left(u,v\right)\right)$ that we already control by (\ref{zss}). The second term, a boundary term, can be converted into a spacetime-term, which is partly absorbed by the left and partly adds a term of the first type to the right hand side of (\ref{efg}). Indeed,
\begin{align}
\int_{v_0}^v \frac{1}{l^2} \kappa \phi^2 r^2 d\bar{v} = \int_{u_0}^u d\bar{u} \ \partial_u \int_{v_0}^{v^\star\left(u\right)} \frac{1}{l^2} \kappa \phi^2 r^2 d\bar{v} \, , 
\end{align}
where $v^\star \left(u\right)$ is the $v$-value where the ray of constant $u$ intersects either $\mathcal{I}$ or the constant $v$-ray (whatever happens first). The right hand side of the previous equation is equal to 
\begin{align} \label{eft}
= \frac{1}{l^2} \int_{D\left(u\right)} \left(-r \pi \frac{\left(\partial_u \phi\right)^2 }{r_u^2} \phi^2  -\frac{1}{2} \phi \frac{\partial_u \phi}{r_u}  - \frac{1}{2r} \phi^2 \right) \Omega^2 r^2 d\bar{u} d\bar{v} \, ,
\end{align}
observing that the boundary term on $\mathcal{I}$ vanishes in view of the decay of $\phi$ and that $v^\star \left(u\right)$ is constant in $u$ after the point where $u$ intersects the point where $v=const$ meets $\mathcal{I}$. A simple application of Cauchy's inequality then shows that (\ref{efg}) also holds without the $v$-boundary term.
Proposition \ref{intdecest} then follows since the last term in (\ref{efg}) is on data (requiring, however, the non-degenerate norm). 

\subsection{Proof of the estimate (\ref{ind1}) of Proposition \ref{intdec3}}
In view of the general identity
\begin{align} \label{gen2}
|\phi r^\frac{3}{2} \left(u,v\right)| \leq  B_{M,l} \sup_{D\left(u,v\right)} \|\phi\|_{H^1_{AdS}\left(u,v\right)} \, ,
\end{align}
we obtain from the estimate (\ref{cones}),
\begin{align}
\|\phi \|^2_{H^1_{AdS,deg}\left(u,v\right)} \leq B_{M,l,a} \left[ \|\phi \|^2_{H^1_{AdS,deg}\left(u,v_0\right)} + c^\frac{1}{3} \sup_{\left(\bar{u}, \bar{v}\right) \in D\left(u,v\right)} \|\phi \|^2_{H^1_{AdS} \left(\bar{u},\bar{v}\right)}  \right] . \nonumber
\end{align}
We can insert this estimate on the right hand side of the estimate of Proposition \ref{intdecest} and also add it to the resulting equation. This yields
\begin{align}
\|\phi \|^2_{H^1_{AdS}\left(u,v\right)} + \bar{\mathbb{I}} \left[\phi\right] \left(D\left(u,v\right) \right) 
\nonumber \\ 
\leq  B_{M,l,a} \left[ \|\phi \|^2_{H^1_{AdS,deg}\left(u,v_0\right)} + c^\frac{1}{3} \sup_{\left(\bar{u}, \bar{v}\right) \in D\left(u,v\right)} \|\phi \|^2_{H^1_{AdS} \left(\bar{u},\bar{v}\right)}  \right] \nonumber
\end{align}
Taking the $\sup$ in $D\left(u,v\right)$ and absorbing the $c^\frac{1}{3}$-term on the right yields (\ref{ind1}).
\section{Proof of Proposition \ref{uniformb2}: Improved and higher order bounds} \label{mp2}
For the estimates in this section, recall the initial data norm (\ref{idb}).
\subsection{Further consequences of the bootstrap assumptions}
In this section we will derive estimates for higher derivatives. These estimates are not sharp but sufficient to control the error in the commuted estimates later.
\begin{lemma} 
We have the pointwise estimate
\begin{align}  \label{secfree}
\Big| r^\frac{7}{2} \frac{\partial_u \frac{\phi_u}{r_u}}{r_u} \Big | \leq B_{M,l} \cdot \mathcal{N} \left[\phi\right] \left(v_0\right) 
\end{align}
\end{lemma}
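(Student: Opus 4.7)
The bound is proven by integrating a transport equation for $\Psi := \partial_u(\phi_u/r_u)/r_u$ in the $v$-direction, starting from the initial data on $v = v_0$ where $|r^{7/2}\Psi| \le \mathcal{N}[\phi](v_0)$ by the very definition \eqref{idb} of $\mathcal{N}$. Commuting the wave equation \eqref{laste} once with $\partial_u/r_u$ (equivalently, applying the redshift vectorfield $Y = (-r_u)^{-1}\partial_u$ twice) and using the Raychaudhuri equation \eqref{cons1} together with the formula for $r_{uv}$, one obtains a transport equation of the form
\[
\partial_v \Psi + \rho\,\Psi \;=\; \mathcal{S}[\phi,\, \phi_u/\nu,\, \phi_v,\, \Psi],
\]
where $\rho = 2\kappa\bigl(\varpi/r^2 + r/l^2 - 4\pi r a \phi^2/l^2\bigr)$ is the redshift weight of Section \ref{mp1} (cf.~\eqref{rsweight}) and $\mathcal{S}$ is built from $\phi$, $X := \phi_u/\nu$, $\phi_v$, and quadratic matter terms with coefficients bounded uniformly via Proposition \ref{uniformb}. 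Every occurrence of $\phi_v$ in $\mathcal{S}$ is rewritten by means of $\phi_v = 4\kappa T\phi - (\kappa/\gamma)\phi_u$, trading it for $T\phi$ and $\phi_u$; the pointwise data for $r^{5/2}\partial_u(T\phi)/\nu$ and $T\phi$ enters directly through the contributions to $\mathcal{N}[\phi](v_0)$.

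The natural move is to multiply by $r^{7/2}$, but a direct computation gives
\[
\tfrac{7}{2}\lambda/r - \rho \;=\; \tfrac{7\kappa}{2r} - \tfrac{9\kappa \varpi}{r^2} + \tfrac{3\kappa r}{2l^2} + \tfrac{8\pi\kappa a r\phi^2}{l^2},
\]
which is favourable (redshift-dominated) near the horizon but $\sim (3/2)\kappa r/l^2 >0$ near $\mathcal{I}$, producing a contribution to the Gronwall exponent that is only logarithmic in $r$ but non-integrable along $v$-lines reaching infinity. To circumvent this, I would pass to a renormalized variable $\widetilde\Psi := \Psi - h_1(r)\phi - h_2(r) X$, with $r$-dependent profiles $h_1,h_2$ tuned so that the leading asymptotic coefficient at $\mathcal{I}$ is cancelled while the redshift structure near the horizon is preserved --- exactly the twisted-derivative trick used in the proof of Lemma \ref{erb}, but now applied at second-derivative level. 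The transport equation for $r^{7/2}\widetilde\Psi$ then has a uniformly integrable (in $v$) homogeneous coefficient, and the modified source $r^{7/2}\widetilde{\mathcal{S}}$ is bounded pointwise by $B_{M,l}\cdot\mathcal{N}[\phi](v_0)$ using Proposition \ref{uniformb} and the data-side input.

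Integrating the renormalized equation along $v$-lines from $v_0$ and applying Gronwall yields $|r^{7/2}\widetilde\Psi| \le B_{M,l,a}\cdot \mathcal{N}[\phi](v_0)$. Since the pointwise bounds $|r^{3/2}\phi|$ and $|r^{3/2} X|$ already proven in Proposition \ref{uniformb} control $h_1(r)\phi$ and $h_2(r) X$ by the same right-hand side, undoing the renormalization gives the desired estimate \eqref{secfree}. The main obstacle is the explicit algebraic work: computing $\mathcal{S}$ in closed form and identifying profiles $h_1(r), h_2(r)$ that simultaneously tame the asymptotic-infinity behaviour and leave the horizon redshift intact. Once that renormalization is in hand, what remains is a routine characteristic-transport plus Gronwall argument.
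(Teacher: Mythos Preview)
Your overall strategy---a $v$-transport equation for $\Psi=\partial_u(\phi_u/r_u)/r_u$ exploiting the redshift, with initial data on $v=v_0$---is exactly what the paper does. The gap is in your computation of the homogeneous coefficient. You wrote it as $-\rho$, the same weight as in the first-order equation (\ref{rsevol}) for $\zeta/\nu$. But commuting once more with $\partial_u/r_u$ (as you yourself note, ``applying $Y$ twice'') \emph{strengthens} the redshift: the actual coefficient in the paper's evolution equation is
\[
-4\kappa\Big(\frac{\varpi}{r^2}+\frac{r}{l^2}-\frac{4\pi a\phi^2}{l^2}+\frac{1-\mu}{4r}\Big),
\]
which near infinity behaves like $-5\kappa r/l^2$ rather than $-2\kappa r/l^2$. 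Consequently, after multiplying by $r^{7/2}$ the net coefficient is
\[
\frac{7}{2}\frac{\lambda}{r}-4\kappa\Big(\frac{\varpi}{r^2}+\frac{r}{l^2}+\frac{1-\mu}{4r}-\ldots\Big)\ \sim\ -\frac{3}{2}\frac{\kappa r}{l^2}\quad\text{as }r\to\infty,
\]
which is \emph{negative}, not $+\tfrac{3}{2}\kappa r/l^2$ as you computed. The exponential decay factor therefore persists globally in $\mathcal{R}_{\mathcal{H}}$, and the weight $r^{7/2}$ is directly compatible with it.

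This means the ``main obstacle'' you identify---and the renormalization $\widetilde\Psi=\Psi-h_1\phi-h_2 X$ you propose to overcome it---is not needed. The paper simply writes down the transport equation with the correct (doubled) redshift coefficient, observes that the square bracket gives exponential decay uniformly, and integrates the source terms as in Lemma~\ref{zetahozbound}. The phenomenon at play is the standard one: each commutation with the redshift vectorfield $Y$ adds another copy of the surface gravity to the damping, which is precisely what makes higher $r$-weights accessible without any twisting.
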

\begin{proof}
We derive the following evolution equation:
\begin{align}
\partial_v \left(\frac{\partial_u \frac{\phi_u}{r_u}}{r_u} \right) = \left[ -4 \kappa \left(\frac{\varpi}{r^2} + \frac{r}{l^2} -\frac{4\pi a \phi^2}{l^2} + \frac{1-\mu}{4r} \right)  \right] \left(\frac{\partial_u \frac{\phi_u}{r_u}}{r_u} \right) \nonumber \\ 
 +\frac{2}{r^2} \phi_v
- 8\pi r \frac{\kappa a}{l^2} \phi \left( \frac{\phi_u}{r_u} \right)^2 - \frac{2 \kappa a \phi }{l^2 r}
+ \frac{\phi_u}{r_u} \left( \frac{2 \lambda}{r^2} - \frac{1}{r r_u} \partial_u \left( r \frac{r_{uv}}{r_u} \right) + \frac{2a\kappa}{l^2} \right) 
\end{align} 
Note the exponential decay factor (redshift) in the square bracket in the first line. Integrating and estimating the errors as in Lemma \ref{zetahozbound} yields the result. See also section 7 of our \cite{gs:lwp}, where the same computation is carried out.
\end{proof}
Interestingly enough, we were able to derive this pointwise bound for a second derivative of $u$ without the need for second $v$ derivatives of any quantity. 

\subsection{The wave equation for $T\left[\phi\right]= \frac{1}{4\kappa}\partial_v \phi + \frac{1}{4\gamma} \partial_u \phi$}
We turn to the commutation of the wave equation by the vectorfield $T$. The following Lemma may be found in the appendix of \cite{Mihalisnotes}:
\begin{lemma} \label{coml}
Let $\psi$ be a solution of the equation $\Box_g \psi = 0$ and $X$ be a vectorfield. Then
\begin{align}
\Box_g \left(X\psi\right) =  \mathfrak{q}\left[X\psi\right] \, \textrm{ \ \ \ with \ }
\end{align}
\begin{align}
\mathfrak{q}\left[X\psi\right] = -2 {}^{(X)}\pi^{\alpha \beta} \nabla_a \nabla_b \psi - 2 \left[2 \nabla^\alpha {}^{(X)}\pi_{\alpha \mu} - \nabla_\mu \left( tr {}^{(X)}\pi \right) \right] \nabla^\mu \psi \, .
\end{align}
\end{lemma}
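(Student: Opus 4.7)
The proof is a direct tensorial computation, which I would carry out by expanding $\Box_g(X\psi)$ via the Leibniz rule and then rearranging using the defining identity $2\,{}^{(X)}\pi_{\mu\nu} = \nabla_\mu X_\nu + \nabla_\nu X_\mu$ together with the Ricci identity. Writing $X\psi = X^\mu \nabla_\mu \psi$ and applying $g^{\rho\nu}\nabla_\rho \nabla_\nu$ by Leibniz, the computation produces three terms:
\begin{align*}
\Box_g(X\psi) = (\Box_g X^\mu)(\nabla_\mu \psi) + 2(\nabla^\nu X^\mu)(\nabla_\nu \nabla_\mu \psi) + X^\mu\, \Box_g(\nabla_\mu \psi).
\end{align*}

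The middle term is essentially already in the desired form: since $\psi$ is a scalar, $\nabla_\nu \nabla_\mu \psi$ is symmetric in its two indices, so only the symmetric part of $\nabla^\nu X^\mu$ contributes, yielding a multiple of ${}^{(X)}\pi^{\nu\mu}\,\nabla_\nu \nabla_\mu \psi$. For the third term, I would commute the wave operator past the gradient using the Ricci identity $\Box_g(\nabla_\mu \psi) = \nabla_\mu(\Box_g \psi) + R_\mu{}^\sigma \nabla_\sigma \psi$; the first piece vanishes by the hypothesis $\Box_g \psi = 0$, leaving a Ricci contribution that combines with $(\Box_g X^\mu)\nabla_\mu\psi$ into a first-order-in-$\psi$ coefficient of the form $(\Box_g X^\mu + R^\mu{}_\sigma X^\sigma)\nabla_\mu\psi$.

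To arrive at the exact form of $\mathfrak{q}[X\psi]$ in the statement, I would then invoke the algebraic identity
\begin{align*}
\Box_g X_\mu + R_\mu{}^\sigma X_\sigma = 2\,\nabla^\alpha\,{}^{(X)}\pi_{\alpha\mu} - \nabla_\mu \mathrm{tr}\,{}^{(X)}\pi,
\end{align*}
which follows from the definition $2\,{}^{(X)}\pi_{\mu\nu} = \nabla_\mu X_\nu + \nabla_\nu X_\mu$ by applying $\nabla^\alpha$ to both sides and commuting one pair of covariant derivatives. Substituting this identity into the previous display rewrites the first-order-in-$\psi$ coefficient as a divergence of the deformation tensor, producing the second term of $\mathfrak{q}[X\psi]$ as asserted. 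The proof contains no analytical content; the only real pitfall is bookkeeping of signs and of overall numerical factors, which are dictated by the convention chosen for ${}^{(X)}\pi$ in the first subsection of this section, and these must be cross-checked with the explicit double-null formulas listed there so that the final expression is consistent with the applications to $X = T$ in the remainder of the paper.
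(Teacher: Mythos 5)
Your computation is the standard one and is correct; the paper does not prove this lemma itself but simply cites the appendix of \cite{Mihalisnotes}, where exactly this argument (Leibniz expansion of $\Box_g(X^\mu\nabla_\mu\psi)$, symmetry of the Hessian of a scalar, the Ricci commutation identity, and the divergence identity $2\nabla^\alpha\,{}^{(X)}\pi_{\alpha\mu}-\nabla_\mu\mathrm{tr}\,{}^{(X)}\pi=\Box_g X_\mu+R_\mu{}^\sigma X_\sigma$) is carried out. Your caution about conventions is warranted: with the paper's own normalization $2\,{}^{(X)}\pi^{ab}=\nabla^aX^b+\nabla^bX^a$ your derivation yields $+2\,{}^{(X)}\pi^{\alpha\beta}\nabla_\alpha\nabla_\beta\psi+\left(2\nabla^\alpha\,{}^{(X)}\pi_{\alpha\mu}-\nabla_\mu\mathrm{tr}\,{}^{(X)}\pi\right)\nabla^\mu\psi$, so the minus signs in the displayed $\mathfrak{q}$ correspond to the opposite convention $\pi\sim\tfrac12\mathcal{L}_Xg^{-1}$ of the cited notes, and the overall factor $2$ on the second bracket appears to be a typo --- neither point affects the validity of your argument.
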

$\phantom{X}$ \newline
From the computations
\begin{align}
-4 g^{\alpha \beta} \left(\nabla_\alpha \pi_{\beta \gamma}\right) \nabla^\gamma \psi = -4 \left(\frac{2}{\Omega^2}\right)^2 \left[\partial_u \psi \ \partial_u \left(2\pi r \phi_v^2\right) - \partial_v \psi \ \partial_v \left(2\pi r \phi_u^2\right) \right] \nonumber \\
+ \frac{16}{\gamma r \Omega^2} \partial_v \psi \left(\pi r \phi_u^2\right) + \frac{16}{\kappa r \Omega^2} \partial_u \psi \left( \pi r \phi_v^2\right) \nonumber \\
=  \frac{8}{\gamma r \Omega^2} \partial_v \psi \left(\pi r \phi_u^2\right) + \frac{8}{\kappa r \Omega^2} \partial_u \psi \left( \pi r \phi_v^2\right) + \pi \frac{16}{\Omega^2 \gamma} \psi_u \phi_u \phi_v + \pi \frac{16}{\Omega^2 \kappa} \psi_v \phi_u \phi_v \nonumber \\
 + 32 \frac{1}{\Omega^2} r \frac{a}{l^2} \left(\partial_u \psi\right) \phi \phi_v -  32 \frac{1}{\Omega^2} r \frac{a}{l^2} \left(\partial_v \psi\right) \phi \phi_u \nonumber 
\end{align}
and
\begin{align}
-2 {}^{(X)}\pi^{\alpha \beta} \nabla_a \nabla_b \psi = -2 \pi^{\alpha \beta} \left(\partial_{\alpha} \partial_{\beta} \psi - \Gamma^{\delta}_{\alpha \beta} \partial_{\delta} \psi \right) \nonumber \\
= -16 \frac{\pi r}{\Omega^2}  \left[ \left(\partial_v \phi\right)^2 \partial_u \left(\frac{\partial_u \psi}{\Omega^2}\right) -  \left(\partial_u \phi\right)^2 \partial_v \left(\frac{\partial_v \psi}{\Omega^2}\right)\right] \nonumber \\
= - \frac{\pi r \phi_v^2}{\kappa^2} \left[\frac{\partial_u \left(\frac{\partial_u \psi}{r_u}\right)}{r_u}\right] + \frac{4\pi^2 r^2 \phi_v^2}{\kappa^2} \left(\frac{\phi_u}{r_u}\right)^2 \frac{\psi_u}{r_u} \nonumber \\
+ 16 \frac{\pi r}{\Omega^2} \left(\partial_u \phi\right)^2 \left[-\frac{1}{r_u} \partial_v \left(T\left[\psi\right] \right) + \frac{r_{uv}}{r_u} \frac{1}{r_u} \frac{1}{\kappa} \partial_v \psi - \frac{4}{\Omega^2} r \pi \phi_v^2 \frac{\phi_u}{r_u} + \frac{1}{4\gamma r_u} \phi_{uv} \right] \nonumber \, ,
\end{align}
we find using the Lemma applied with $\psi = \phi$,
\begin{align} \label{met}
\mathfrak{q} \left[T\phi\right] =  4\pi r  \left(\frac{\phi_u}{r_u}\right)^2 \frac{1}{\kappa} \partial_v \left(T\left[\phi\right] \right) - \frac{\pi r \phi_v^2}{\kappa^2} \left[\frac{\partial_u \left(\frac{\partial_u \phi}{r_u}\right)}{r_u}\right] \nonumber \\
+ \left(6\pi + 16\pi r \frac{r_{uv}}{\Omega^2} +\pi \left(1-\mu\right) \right) \frac{\phi_v}{\kappa} \left(\frac{\phi_u}{r_u}\right)^2 - 6\pi \frac{\phi_v^2}{\kappa^2} \left(\frac{\phi_u}{r_u}\right) \nonumber \\ +  \pi \left(\frac{\phi_u}{r_u}\right)^3 \left(1-\mu\right)^2 
+ \frac{2\pi a r}{l^2} \left(1-\mu\right) \phi  \left(\frac{\phi_u}{r_u}\right)^2 - 4 r^2 \pi^2 \left( \frac{\phi_v}{\kappa}\right)^2 \left(\frac{\phi_u}{r_u}\right)^3 \, .
\end{align}
\subsection{Estimates for $T\left[\phi\right]$}
For any point $\left(u,v\right)$ in $\mathcal{R}_\mathcal{H}$, we define the higher order energy
\begin{align}
E \left[T\phi\right] \left(u,v\right) = \int_{u_{\mathcal{I}}}^u \Big[ 2 \pi r^2 \frac{1}{\gamma} \left(\partial_u T\left[\phi\right]\right)^2- 4a \pi \frac{r^2}{l^2}r_u \left(T\left[\phi\right]\right)^2 \Big] \left(\bar{u},v\right) d\bar{u} \nonumber \\
+ \int_{v_0}^v \Big[ 2 \pi r^2 \frac{1}{\kappa} \left(\partial_v T\left[\phi\right]\right)^2 + 4a \pi \frac{r^2}{l^2}r_v \left(T\left[\phi\right]\right)^2 \Big] \left(u,\bar{v}\right) d\bar{v}  \, .
\end{align}
\begin{lemma}
The energy $E \left[T\phi\right] \left(u,v\right)$ is almost conserved in that
\begin{align} \label{amozi}
E \left[T\phi\right] \left(u,v\right) = E \left[T\phi\right] \left(u,v_0\right) + \int_{D\left(u,v\right)} \mathcal{Q} \ \  \Omega^2 r^2 du dv \nonumber \\
\textrm{where \ \ \ }\mathcal{Q} = \frac{1}{\Omega^2 r^2} \Bigg( - 32 r^3 \pi^2 \frac{1}{\Omega^2}  \left(\partial_v \phi\right)^2 \left(\partial_u T\left[\phi\right]\right)^2  \nonumber \\+ 32 r^3 \pi^2 \frac{1}{\Omega^2}  \left(\partial_u \phi\right)^2 \left(\partial_v T\left[\phi\right]\right)^2 + \pi r^2 \Omega^2 \cdot TT\left[\phi\right] \cdot \mathfrak{q}\left[T\phi\right] \Bigg)  
\end{align}
holds.
\end{lemma}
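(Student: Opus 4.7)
The plan is to apply the multiplier identity (\ref{Xid}) with $X = T$, $\mathfrak{f} = 0$, and $\psi = T\phi$, then integrate over $D(u,v)$ using the divergence theorem. The resulting Stokes identity schematically reads: future-boundary flux minus past-boundary flux equals spacetime bulk, and the task is then to identify both sides with the objects appearing in the lemma statement.

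For the bulk, I would plug in the explicit components of the deformation tensor of $T$ recorded in the previous subsection, namely ${}^{(T)}\pi^{uu} = 8\pi r (\partial_v \phi)^2 / \Omega^4$ and ${}^{(T)}\pi^{vv} = -8\pi r (\partial_u\phi)^2/\Omega^4$ with all other components vanishing. Contracting with $\mathbb{T}_{\mu\nu}[T\phi]$ produces the two quadratic-derivative terms in $\mathcal{Q}$, while the inhomogeneity from the commuted wave equation (\ref{met}) supplies the term $T(T\phi) \cdot \mathfrak{q}[T\phi]$. After multiplying by the volume element $\tfrac{1}{2}\Omega^2 r^2\, du\, dv\, d\sigma_{S^2}$ and performing the angular integration, the integrand assembles into the expression $\mathcal{Q}\, \Omega^2 r^2$ up to an overall normalization.

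For the boundary, on a constant-$v$ null segment the contraction $J^{T,0}_\mu T^\mu$ against the induced null volume form produces precisely the integrand $2\pi r^2 (\partial_u T\phi)^2 / \gamma - 4\pi a r^2 r_u (T\phi)^2 / l^2$ appearing in the first term of $E[T\phi]$; the analogous computation on a constant-$u$ null segment yields the second term. Hence the future-null pieces of $\partial D(u,v)$ combine into $E[T\phi](u,v)$, while the initial ray $v = v_0$ contributes $-E[T\phi](u,v_0)$.

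The only genuine subtlety, which I expect to be the main point requiring care, is verifying that the flux contribution at the null-infinity endpoint $u = u_\mathcal{I}$ of the $v = v$ segment vanishes. This should follow from the Dirichlet behavior $T\phi \to 0$ at $\mathcal{I}$ combined with the $r$-weighted pointwise decay for $T\phi$ and $\partial_u T\phi$ propagated from the initial data norm $\mathcal{N}[\phi](v_0)$: the integrable tails built into the definition of $E[T\phi]$ force the improper $\mathcal{I}$-limit of the boundary integrand to be zero. Beyond this, the argument is essentially bookkeeping; one must track signs and orientations in null Stokes carefully, but no new analytic input is required.
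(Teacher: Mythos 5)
Your proposal is correct and follows essentially the same route as the paper: the paper's proof is precisely to integrate the multiplier identity (\ref{Xid}) with $X=T$, $\mathfrak{f}=0$ applied to $\psi = T\phi$ over $D(u,v)$, using the vanishing of the flux through $\mathcal{I}$ (which the paper attributes to the local well-posedness result of \cite{gs:lwp} rather than re-deriving it from pointwise decay, but this is the same fact).
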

\begin{proof}
The quantity $T\left[\phi\right]$ satisfies the wave equation $\Box_g T\left[\phi\right] = \mathfrak{q} \left[T\phi\right]$. Integrating the energy identity (\ref{Xid}) for the energy momentum tensor associated with $T\left[\phi\right]$ with the vectorfield $T = \frac{1}{4\kappa} \partial_v + \frac{1}{4\gamma} \partial_u$ yields the above identity using that the energy-flux through the boundary at infinity is zero by the local well-posedness result  \cite{gs:lwp}.
\end{proof}
\begin{lemma} \label{mltp}
For any $\left(u,v\right) \in \mathcal{R}_{\mathcal{H}}$ we have the estimate
\begin{align} \label{hiotp}
 \|T\phi \|_{H^1_{AdS}\left(u,v\right)} \leq B_{M,l} \cdot \mathcal{N} \left[\phi\right] \left(v_0\right) + B_{M,l} \left(\int_{D\left(u,v\right)} |\mathcal{Q}| \ \  \Omega^2 r^2 du dv \right)^\frac{1}{2}
\end{align}
\end{lemma}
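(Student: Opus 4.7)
I would start from the conservation identity (\ref{amozi}), then absorb the wrong-signed zeroth-order pieces in $E\left[T\phi\right]$ via a Hardy-type argument, and finally upgrade the resulting degenerate energy to the non-degenerate $H^1_{AdS}$-norm by a redshift multiplier applied to the commuted equation. From (\ref{amozi}), combined with the fact that $\mathcal{N}\left[\phi\right]\left(v_0\right)$ contains $\|T\phi\|_{H^1_{AdS}\left(u_\mathcal{H},v_0\right)}$ together with the pointwise domination of the degenerate weight $\frac{1-\mu}{-r_u}$ appearing in $E\left[T\phi\right]$ by the non-degenerate weight $\frac{r^2}{-r_u}$, one immediately has
\[
E\left[T\phi\right]\left(u,v\right) \leq B_{M,l,a}\,\mathcal{N}^2\left[\phi\right]\left(v_0\right) + \int_{D\left(u,v\right)}|\mathcal{Q}|\,\Omega^2 r^2\,d\bar{u}\,d\bar{v}.
\]
The quantity $E\left[T\phi\right]$ is however not manifestly positive: for $a<0$ the boundary zeroth-order pieces $-4a\pi \frac{r^2}{l^2}r_u\left(T\phi\right)^2$ on the $u$-ray and $+4a\pi \frac{r^2}{l^2}r_v\left(T\phi\right)^2$ on the $v$-ray carry the wrong sign. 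The remaining task is therefore to prove $\|T\phi\|^2_{H^1_{AdS}\left(u,v\right)} \leq B_{M,l,a}\,E\left[T\phi\right]\left(u,v\right) + B_{M,l,a}\,\mathcal{N}^2\left[\phi\right]\left(v_0\right)$.

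\textbf{Hardy absorption.} Since $-a \leq 1 < \frac{9}{8}$, apply the doubly-weighted Hardy inequality of Lemma \ref{lem:vhardy} (with $\aleph$ just below $\frac{9}{8}$) to $T\phi$ on both the $u$- and the $v$-boundary ray, restricted to $r \geq r_Y$, to absorb the bad zeroth-order contribution into a strict fraction of the good $\left(\partial T\phi\right)^2$-flux, leaving a positive residue of both derivative and zeroth-order type. The contribution from the region $r \leq r_Y$ is handled exactly as in the passage from (\ref{cones}) to Corollary \ref{fienb} for $\phi$: the $r\leq r_Y$ pieces of the $u$- and $v$-zeroth-order fluxes appear with opposite signs in the integrated identity and nearly cancel, the residue being absorbed using $r_Y-r_{min}=O\left(c^{1/3}\right)$ together with a small multiple of $E\left[T\phi\right]$ itself. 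This yields $\|T\phi\|^2_{H^1_{AdS,deg}\left(u,v\right)} \leq B_{M,l,a}\,E\left[T\phi\right]\left(u,v\right) + B_{M,l,a}\,\mathcal{N}^2\left[\phi\right]\left(v_0\right)$.

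\textbf{Redshift upgrade.} To pass from $H^1_{AdS,deg}$ to $H^1_{AdS}$, i.e.~to replace the degenerate $u$-weight $\frac{1-\mu}{-r_u}$ by the non-degenerate $\frac{r^2}{-r_u}$ near the horizon, apply the redshift multiplier $Y=\left(-r_u\right)^{-1}\partial_u$ to the commuted equation $\Box_g\left(T\phi\right)-\frac{2a}{l^2}T\phi = \mathfrak{q}\left[T\phi\right]$, following the derivation of (\ref{efg}) in Proposition \ref{intdecest} verbatim. The principal part yields exactly the missing non-degenerate $u$-flux, the bulk errors are controlled by the degenerate spacetime-integrated decay controlled in the previous step, and the inhomogeneous contribution $\int_{D\left(u,v\right)}\left(YT\phi\right)\mathfrak{q}\left[T\phi\right]\,\Omega^2 r^2$ is split by Cauchy–Schwarz so that a small multiple of $\|T\phi\|^2_{H^1_{AdS}\left(u,v\right)}$ is absorbed on the left while the remainder is bounded by $B_{M,l}\int_{D\left(u,v\right)}|\mathcal{Q}|\,\Omega^2 r^2$, using the explicit form (\ref{met}) of $\mathfrak{q}\left[T\phi\right]$ together with the pointwise bounds on $\phi$, $\frac{\zeta}{\nu}r^{3/2}$ and $\kappa$ from Proposition \ref{uniformb} and the auxiliary bound (\ref{secfree}).

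\textbf{Main obstacle.} The hardest point is the last step, because unlike for $\phi$ itself the commuted equation has a genuine right-hand side: the source (\ref{met}) contains quadratic and cubic products involving $\frac{\phi_u}{r_u}$, $\frac{\phi_v}{\kappa}$ and the second derivative $\frac{\partial_u\left(\phi_u/r_u\right)}{r_u}$. Each term must be matched either to a bootstrap-small pointwise factor supplied by Proposition \ref{uniformb}–(\ref{secfree}) or to the integrand defining $\mathcal{Q}$ in (\ref{amozi}). This book-keeping is delicate but requires no estimate beyond what is already available.
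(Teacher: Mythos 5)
Your overall architecture (almost-conservation \eqref{amozi} plus Hardy plus redshift) is the right one, but the order in which you deploy the ingredients leaves two genuine gaps. First, in your ``Hardy absorption'' step the two zeroth-order flux contributions from $r\leq r_Y$ do \emph{not} appear with opposite signs and cancel: for $a<0$ both $-\frac{4\pi a}{l^2}(T\phi)^2 r_u r^2$ on the $u$-ray (since $r_u<0$) and $+\frac{4\pi a}{l^2}(T\phi)^2 r_v r^2$ on the $v$-ray (since $r_v>0$) are wrong-signed, and both must be controlled. Your fallback mechanism, $r_Y-r_{min}=O(c^{1/3})$, does work, but only once one has a \emph{pointwise} bound on $T\phi$ all the way down to the horizon; the degenerate energy $E[T\phi]$ (whose $u$-flux carries the weight $\frac{1}{\gamma}=\frac{1-\mu}{-r_u}$) cannot supply such a bound, because integrating $\partial_u(T\phi)$ inward from $r=r_X$ against the degenerate weight brings in an uncontrolled factor of $(1-\mu)^{-1}$ (this is exactly the obstruction noted after Lemma \ref{phibound}). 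The paper therefore establishes the pointwise redshift bound \eqref{rsTu} for $|T\phi|$ and $|\partial_u(T\phi)/\nu|$ in $r\leq r_X$ \emph{first}, by repeating the characteristic redshift integration of Lemma \ref{zetahozbound} for the commuted equation (the new source $r\kappa\,\mathfrak{q}[T\phi]$ being absorbed term by term using \eqref{met}, Proposition \ref{uniformb} and \eqref{secfree}), and only then runs the conservation/Hardy argument of \eqref{cones}. With your ordering (energy first, redshift after) the degenerate energy estimate does not close.

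Second, your ``redshift upgrade'' via the vectorfield $Y=(-r_u)^{-1}\partial_u$ produces the inhomogeneous bulk term $\int_{D}(Y T\phi)\,\mathfrak{q}[T\phi]\,\Omega^2 r^2$, and it is not true that after Cauchy--Schwarz the remainder is bounded by $B_{M,l}\int_{D}|\mathcal{Q}|\,\Omega^2 r^2$: the only piece of $\mathcal{Q}$ involving $\mathfrak{q}$ is the pairing $TT[\phi]\cdot\mathfrak{q}[T\phi]$, and near the horizon $Y$ and $T$ are not comparable, so $|\mathcal{Q}|$ does not dominate $|\mathfrak{q}|^2$-type quantities. In the paper this $Y$-multiplier error is the reason Lemma \ref{eef} carries the larger error density $\mathcal{P}$ (controlled only later, in Lemma \ref{eee}, under an additional bootstrap). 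For Lemma \ref{mltp} itself the paper avoids the $Y$ multiplier entirely: the passage from the degenerate to the non-degenerate norm is done through the general conversion estimate \eqref{convt}, i.e.\ by trading the missing non-degenerate $u$-flux near the horizon for the pointwise bound on $r^{3/2}\partial_u(T\phi)/\nu$ already obtained in \eqref{rsTu}. If you want to keep only $\int|\mathcal{Q}|$ on the right-hand side, as the statement requires, you need this pointwise route rather than the $L^2$ redshift.
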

\begin{proof}
We can prove this estimate in the same way that we proved it for $\phi$ itself in the context of the bootstrap of section \ref{mp1} (in which case there was no commutation error $\mathcal{Q}$, of course). In fact, a bootstrap is no-longer necessary, since the important Hardy inequalities have already been established in $\mathcal{R}_{\mathcal{H}}$. 

Integrating $T\left[\phi\right]$ from infinity one derives the pointwise bound
\begin{align}
|T\phi| \leq B_{M,l} \cdot r^{-\frac{3}{2}} \cdot \|T\phi \|_{H^1_{AdS,deg}\left(u,v\right)}  \textrm{ \ \ \ \ in $\mathcal{R}_{\mathcal{H}} \cap \{r\geq r_X$\}}
\end{align} 
using Cauchy-Schwarz as in Lemma \ref{phibound}. Repeating the redshift estimate of Lemma \ref{zetahozbound}, now with $\partial_v \left(\frac{\partial_u T\left[\phi\right]}{r_u}\right)$, we derive for any $\left(u,v\right) \in \mathcal{R}_{\mathcal{H}} \cap \{r\leq r_X\}$
\begin{align} \label{rsTu}
\Big| \frac{\partial_u \left(T\phi\right)}{\nu} \Big| + |T\phi| \leq B_{M,l} \left[ \sup_{\left(\bar{u},\bar{v}\right)\in D\left(u,v\right)} \|T\phi\|_{H^1_{AdS,deg} \left(\bar{u},\bar{v}\right)} + \mathcal{N} \left[\phi\right] \left(v_0\right) \right] \, .
\end{align}
Indeed, the only difference to Lemma \ref{zetahozbound} is that there is now an error-term $r \kappa \mathfrak{q}\left[T\psi\right]$ on the right hand side of the equation due to the commutation term in the wave equation. For this error-term we note that (using definition (\ref{rsweight}))
\begin{align}
\left| \int_{v_0}^v d\bar{v} \left[ \exp \left(-\int_{\bar{v}}^v \rho \left(u, \hat{v} \right) d\hat{v}\right) r \kappa \mathfrak{q}\left[T\psi\right] \right]\right| \nonumber \\
\leq \epsilon \left(  \sup_{\left(\bar{u},\bar{v}\right)\in D\left(u,v\right)} \|T\phi\|_{H^1_{AdS,deg} \left(\bar{u},\bar{v}\right)}  + B_{M,l} \cdot  \mathcal{N} \left[\phi\right] \left(v_0\right)  \right) \, ,
\end{align}
which follows by inspecting the terms in $\mathfrak{q}\left[T\phi\right]$ individually:
\begin{itemize}
\item the first term is estimated as in Lemma \ref{zetahozbound}, cf.~the estimate (\ref{fioc}). The same holds for the first term in the second line. Note the smallness factor arising from the pointwise bound on $\frac{\phi_u}{r_u}$, Proposition \ref{uniformb}.
\item terms which have $\phi_v^2$ can be estimated using the pointwise bound on $\frac{\phi_u}{r_u}$ and (\ref{secfree}) and the $H^1_{AdS,deg}$-norm for $\phi$. Note that $\|\phi\|^2_{H^1_{AdS,deg} \left(u,v\right)} \geq B_{M,l} \cdot \epsilon \cdot \|\phi\|^2_{H^1_{AdS,deg} \left(u,v\right)}$ from Proposition \ref{uniformb}.
\item the terms which have $\left(1-\mu\right)$ have $\lambda^2$ and are easily integrated using the pointwise bound on $\frac{\phi_u}{r_u}$

\end{itemize}
With the pointwise bound on $|T\phi|$ we repeat the conservation of energy argument (\ref{cones}), now modified to the almost conservation (\ref{amozi}):
For any $\left(u,v\right) \in \mathcal{R}_{\mathcal{H}}$,
\begin{align}
 \|T\phi\|^2_{H^1_{AdS,deg}\left(u_{\mathcal{H}},v_0\right)} \geq \int_{u_0}^u  \left(2\pi \frac{\left(\partial_u \left[T\phi\right]\right)^2}{\gamma} - \frac{4\pi a}{l^2}\left(T\phi\right)^2 r_u \right)  r^2 \left(\bar{u} ,v_0\right) d\bar{u} \nonumber \\
= \int_{u_\mathcal{I}}^u  \left(\mathbf{1}_{\{r\geq r_Y\}} +\mathbf{1}_{\{r\leq r_Y\}} \right) \left(2\pi \frac{\left(\partial_u\left[T\phi\right]\right)^2}{\gamma} - \frac{4\pi a}{l^2}\left(T\phi\right)^2 r_u \right)  r^2 \left(\bar{u} ,v\right) d\bar{u} +
\nonumber \\ 
 \int_{v_0}^v   \left(\mathbf{1}_{r\geq r_Y} +\mathbf{1}_{r\leq r_Y} \right) \left(2\pi \frac{\left(\partial_v\left[T\phi\right]\right)^2}{\kappa} + \frac{4\pi a}{l^2}\left(T\phi\right)^2 r_v \right)  r^2 \left(u ,\bar{v}\right) d\bar{v} 
+ \int_{D\left(u,v\right)} \mathcal{Q} \nonumber
\end{align}
Using the Hardy inequalities of Lemma \ref{lem:vhardy} (which clearly hold for $\phi$ replaced by any $\psi$ satisfying the same boundary conditions at infinity, hence in particular for $T\phi$), we continue to estimate
\begin{align}
 \|T\phi\|^2_{H^1_{AdS,deg}\left(u_{\mathcal{H}},v_0\right)} \geq \int_{D\left(u,v\right)} \mathcal{Q} + \frac{9}{8}\frac{4\pi}{l^2} \int_{u_\mathcal{I}}^u \mathbf{1}_{r\leq r_Y} |T\phi|^2 r_u r^2 \left(\bar{u} ,v\right) d\bar{u} \nonumber \\ - \frac{9}{8}\frac{4\pi}{l^2} \int_{v_0}^v  \mathbf{1}_{r\leq r_Y} |T\phi|^2 r_v r^2 \left(u ,\bar{v}\right) d\bar{v} +  \frac{1}{2}\left(a+\frac{9}{8}\right)  \|T\phi\|^2_{H^1_{AdS,deg}\left(u,v\right)} \nonumber \\
 \geq \left(a+\frac{9}{8}\right)  \|T\phi\|^2_{H^1_{AdS,deg}\left(u,v\right)} + \int_{D\left(u,v\right)} \mathcal{Q}  \nonumber \\
 - B_{M,l} c^\frac{1}{3}  \left( \sup_{\mathcal{R}_{\mathcal{H}}\cap \{\bar{v}\leq v\} \cap  \{\bar{u}\leq u\}} \|T\phi\|_{H^1_{AdS,deg} \left(\bar{u},\bar{v}\right)}  + B_{M,l} \cdot  \mathcal{N} \left[\phi\right] \left(v_0\right)  \right) \nonumber \, ,
\end{align}
where we inserted the pointwise estimate (\ref{rsTu}) in the last step and exploited the $c^\frac{1}{3}$ smallness of the $r$-difference is the region $r\leq r_Y$. Taking the $\sup$ over all ${\left(\bar{u},\bar{v}\right)\in D\left(u,v\right)}$ of this estimate we arrive at the estimate of Lemma \ref{mltp} for the $H^1_{AdS,deg}$-norm on the left hand side. However, in view of (\ref{rsTu}) and the remarks of section \ref{conclusio}, the estimate also holds for the $H^1_{AdS}$-norm.
 \end{proof}
\begin{corollary} \label{rsTp}
For any $\left(u,v\right) \in \mathcal{R}_{\mathcal{H}}$ we have the estimate
\begin{align} \label{sep}
 \Big| r^\frac{3}{2} \frac{r \partial_u \left(T\phi\right)}{r_u} \Big|  \leq B_{M,l} \cdot \mathcal{N} \left[\phi\right] \left(v_0\right) + B_{M,l} \left(\int_{D\left(u,v\right)} |\mathcal{Q}| \ \  \Omega^2 r^2 du dv \right)^\frac{1}{2}  \, .
\end{align}
\end{corollary}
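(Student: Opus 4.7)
The plan is to mirror the argument of Lemma \ref{erb} for the commuted quantity $T\phi$, treating it as a solution of the inhomogeneous equation $\Box_g(T\phi) - \frac{2a}{l^2}(T\phi) = \mathfrak{q}[T\phi]$ with $\mathfrak{q}[T\phi]$ given by (\ref{met}).

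In $\mathcal{R}_{\mathcal{H}} \cap \{r \leq r_X\}$ the estimate is immediate. Since $r$ is bounded there by $r_X$, the pointwise redshift bound (\ref{rsTu}) already established inside the proof of Lemma \ref{mltp}, multiplied by $r_X^{5/2}$ and combined with the energy estimate of Lemma \ref{mltp}, directly yields the corollary on that subregion.

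For $\mathcal{R}_{\mathcal{H}} \cap \{r \geq r_X\}$, I would write down the analog of the evolution equation (\ref{uf}) for $r^{3/2}\cdot r\,\partial_u(T\phi)/r_u$. Compared to the $\phi$ case, it carries one additional source term, namely $r^{5/2}\kappa\,\mathfrak{q}[T\phi]$, coming from the commutation of $\Box_g$ with $T$. As in Lemma \ref{erb}, the exponent $n = 3/2$ produces an exponentially decaying weight $\Theta_{3/2}$ under $v$-integration. Integrating in $v$ from either $\{v = v_0\}$ or from the curve $\{r = r_X\}$ (whichever is first encountered along the constant-$u$ ray), the boundary term is controlled either by $\mathcal{N}[\phi](v_0)$ or by the preceding paragraph; the linear source terms $-r^{3/2}\partial_v(T\phi)$ and $(2\kappa a/l^2)r^{5/2}(T\phi)$ are handled by Cauchy-Schwarz exactly as in Lemma \ref{erb}, exploiting $r_v \geq r^2/B_{M,l}$ for $r \geq r_X$, and are dominated by $\|T\phi\|_{H^1_{AdS,deg}(u,v)}$, which by Lemma \ref{mltp} is in turn bounded by $\mathcal{N}[\phi](v_0) + (\int_{D(u,v)}|\mathcal{Q}|\,\Omega^2 r^2)^{1/2}$.

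The main obstacle is controlling the new commutation contribution $\int_{v_0}^v r^{5/2}\kappa\,|\mathfrak{q}[T\phi]|\,d\bar{v}$. Inspecting the terms in (\ref{met}), each carries either a factor of $\phi$, $\phi_u/r_u$, $\phi_v/\kappa$ or $\partial_u(\phi_u/r_u)/r_u$, all of which are pointwise small by Proposition \ref{uniformb} and the higher-order bound (\ref{secfree}), multiplied by at most one factor linear in the first derivatives of $T\phi$ --- exactly the kind of object controlled by $\|T\phi\|_{H^1_{AdS,deg}}$ after Cauchy-Schwarz against the $v$-integral. The remaining cubic pieces in $\mathfrak{q}[T\phi]$ mirror the structure of the cubic terms in $\mathcal{Q}$ from (\ref{amozi}), so that applying Cauchy-Schwarz produces a bound of the form $(\int_{D(u,v)}|\mathcal{Q}|\,\Omega^2 r^2\, du\, dv)^{1/2}$, closing the estimate.
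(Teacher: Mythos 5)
Your proposal is correct and follows essentially the same route as the paper: the paper's own proof reads ``immediate in $r\leq r_X$ from (\ref{rsTu}) and Lemma \ref{mltp}; for $r\geq r_X$ one repeats the proof of Lemma \ref{erb},'' which is precisely your two-region decomposition. You merely supply the details the paper leaves implicit, namely the extra source term $r^{5/2}\kappa\,\mathfrak{q}[T\phi]$ in the analogue of (\ref{uf}) and its term-by-term control via the pointwise smallness of $\phi_u/r_u$, (\ref{secfree}), and Cauchy--Schwarz against the $H^1_{AdS,deg}$-norm of $T\phi$ from Lemma \ref{mltp}.
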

\begin{proof}
This is immediate in $r\leq r_X$ from (\ref{rsTu}) and the Lemma \ref{mltp}. For $r\geq r_X$ one repeats the proof of Lemma \ref{erb}.
\end{proof}

Since $T\phi$ satisfies the wave equation with an inhomogeneous error-term on the right hand side, we can prove the same integrated decay estimate for $T\phi$ that we proved for $\phi$, corrected only by the error-term arising from commutation: 
\begin{lemma} \label{eef}
For any $\left(u,v\right) \in \mathcal{R}_{\mathcal{H}}$ we have the estimate
\begin{align} \label{amoc2}
 \|T\left[\phi\right] \|^2_{H^1_{AdS}\left(u,v\right)} + \overline{\mathbb{I}} \left[T\phi\right] \left(D\left(u,v\right) \right)  \leq & \ B_{M,l} \cdot \mathcal{N}^2 \left[\phi\right] \left(v_0\right) 
\nonumber \\  &+B_{M,l} \int_{D\left(u,v\right)} \mathcal{P}  \  \Omega^2 r^2 du dv 
\end{align}
with
\begin{align} \label{ame}
\mathcal{P}  = \left( \Big|\frac{r}{r_u} \partial_u \left[T\phi\right] \Big| + \frac{1}{r \kappa} \Big|\partial_v \left[T\phi\right] \Big| + \frac{1}{r^2} \Big| T\left[\phi\right] \Big| \right) \Big|\mathfrak{q} \left[T\phi\right] \Big| 
 + |\mathcal{Q}| \, .
\end{align} 
\end{lemma}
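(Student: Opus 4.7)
The plan is to re-run the entire vectorfield analysis of Proposition \ref{intdecest} with $\psi = T\phi$ replacing $\phi$, exploiting that $T\phi$ satisfies the inhomogeneous Klein-Gordon equation $\Box_g T\phi - \frac{2a}{l^2} T\phi = \mathfrak{q}[T\phi]$ with $\mathfrak{q}[T\phi]$ from (\ref{met}). Concretely, I would apply the multiplier identity (\ref{Xid}) to $T\phi$ with the same three successive choices used for $\phi$: first $\mathfrak{F}(r) = -1/r^2$ with $\mathfrak{f}=0$, which controls the combination $\frac{1}{4\gamma}\partial_u T\phi - \frac{1}{4\kappa}\partial_v T\phi$ and, via the Hardy inequality of Corollary \ref{heco} applied to $T\phi$ (valid because only Dirichlet-type decay at $\mathcal{I}$ was used in its derivation), also the zeroth-order term in $T\phi$; second $\mathfrak{F}(r) = 1/r^5$ to retrieve the missing mixed derivative; and third $\mathfrak{F}(r) = -1/r$ to optimise the $r$-weights near $\mathcal{I}$. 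This chain yields the degenerate spacetime estimate $\mathbb{I}[T\phi](D(u,v))$.

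I would then add the redshift multiplier $Y = (-r_u)^{-1}\partial_u$ applied to $T\phi$, exactly as in (\ref{efg}), to upgrade the degenerate spacetime integral to the non-degenerate $\overline{\mathbb{I}}[T\phi](D(u,v))$ on the left hand side of (\ref{amoc2}) and to remove the horizon-degeneration of the $u$-boundary term.

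The only genuinely new contribution at each stage, compared to the $\phi$-case, is the volume term $\int_{D(u,v)} (X[T\phi] + \mathfrak{f}\, T\phi)\,\mathfrak{q}[T\phi]$ appearing on the right-hand side of (\ref{Xid}). For each of the multipliers listed above (including $Y$), a direct inspection of the coefficients $X^u, X^v, \mathfrak{f}$ shows that
\[
|X[T\phi] + \mathfrak{f}\, T\phi|\ \leq\ B_{M,l}\,\Bigl(\Big|\tfrac{r}{r_u}\partial_u T\phi\Big| + \tfrac{1}{r\kappa}|\partial_v T\phi| + \tfrac{1}{r^2}|T\phi|\Bigr),
\]
so pairing with $|\mathfrak{q}[T\phi]|$ and multiplying by the volume element $\Omega^2 r^2$ reproduces exactly the first summand of the integrand $\mathcal{P}$ in (\ref{ame}); this term is therefore absorbed into the right hand side of (\ref{amoc2}). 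The boundary contributions on $v=v_0$ are manifestly controlled by the higher-order norm $\mathcal{N}^2[\phi](v_0)$ from (\ref{idb}), while the $H^1_{AdS}$-boundary term on the slice $(u,v)$ on the left is supplied, after squaring, by Lemma \ref{mltp}, whose error $\int|\mathcal{Q}|$ is precisely the second summand of $\mathcal{P}$.

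The main technical point I anticipate is to verify that the Hardy-inequality step of Corollary \ref{heco}, which is where the restriction $a \geq -1$ was used, closes unchanged in the presence of the inhomogeneity $\mathfrak{q}[T\phi]$. This should not be an obstacle: since $\mathfrak{q}[T\phi]$ enters the multiplier identity linearly through $(X[T\phi]+\mathfrak{f}\,T\phi)\mathfrak{q}[T\phi]$ rather than through the divergence structure that generates the Hardy estimate, one can simply move it to the right-hand side before performing the integration by parts that underlies Corollary \ref{heco}. Thus the entire argument for $\phi$ transfers to $T\phi$ line by line, with the sole bookkeeping cost of carrying along the $\int \mathcal{P}$ error on the right, giving exactly (\ref{amoc2}).
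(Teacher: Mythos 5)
Your proposal is correct and follows essentially the same route as the paper: rerun the multiplier hierarchy of Proposition \ref{intdecest} (including the redshift vectorfield $Y$ and the Hardy inequality, which applies verbatim to $T\phi$) for the inhomogeneous equation satisfied by $T\phi$, bound the resulting volume term $\int(X[T\phi]+\mathfrak{f}\,T\phi)\,\mathfrak{q}[T\phi]$ by the first summand of $\mathcal{P}$, and control the boundary terms via Lemma \ref{mltp}, whose commutation error supplies the $|\mathcal{Q}|$ summand. Your write-up is in fact more explicit than the paper's two-line proof about which pairs $(X,\mathfrak{f})$ occur, but the argument is the same.
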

\begin{proof}
We are proving an estimate for the same wave equation as in Proposition \ref{intdecest}, except that there is the inhomogeneity $q\left[T\phi\right]$ on the right hand side. Looking at formula (\ref{mfvf}), we see that this inhomogeneous term enters the vectorfield estimates as the spacetime error-term
\begin{align}
\int_{D\left(u,v\right)} \left(X\left[T\phi\right] + \mathfrak{f}\ T\phi\right) q\left[T\phi\right] \, .
\end{align}
Checking carefully which pairs of multipliers $\left(X,\mathfrak{f}\right)$ were used to derive the integrated decay estimate, we see that the first term in (\ref{ame}) accounts for these terms. The boundary terms in the integrated decay estimate are again controlled by the $\|T\phi\|_{H^1_{AdS}\left(u,v\right)}$-norm. Inserting the estimate (\ref{hiotp}) for these will produce the last term in (\ref{ame}).
\end{proof}
\begin{lemma} \label{eee}
Let $|\frac{\phi_v}{\kappa}|<1$ hold in $\tilde{D}\left(u,v\right)$.
Then 
\begin{align}
\int_{\tilde{D}\left(u,v\right)} \mathcal{P}  \  \Omega^2 r^2 du dv  \leq B_{M,l} \cdot \epsilon \cdot \Big( \overline{\mathbb{I}} \left[T\phi\right] \left(\tilde{D}\left(u,v\right) \right) +  \overline{\mathbb{I}} \left[\phi\right] \left(\tilde{D}\left(u,v\right) \right) \Big) \, ,
\end{align}
with the $\epsilon$-factor arising from the smallness of the $\mathcal{N} \left[\phi\right] \left(v_0\right)$ norm.
\end{lemma}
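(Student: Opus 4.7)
The plan is to bound $\int \mathcal{P}\, \Omega^2 r^2\, du\, dv$ term by term, showing for each contribution a pointwise estimate of the form $(\text{integrand}) \leq B_{M,l}\,\epsilon\,(\text{integrand of }\overline{\mathbb{I}}[T\phi] + \text{integrand of }\overline{\mathbb{I}}[\phi])$, so that the conclusion follows by integration. The key inputs are: the pointwise bounds $|r^{3/2}\phi| + |r^{5/2}\phi_u/r_u| \leq B_{M,l,a}\epsilon$ from Proposition \ref{uniformb}; the second-order bound $|r^{7/2}\partial_u(\phi_u/r_u)/r_u| \leq B_{M,l}\epsilon$ from (\ref{secfree}); the hypothesis $|\phi_v/\kappa|<1$; the elementary identity $\phi_v = 4\kappa T\phi + (\phi_u/r_u)r_v$ coming from the definition of $T$; and the uniform two-sided bounds on $\kappa$, $|r_u|$, $\varpi$, $\Omega^2/|r_u|$ in $\mathcal{R}_{\mathcal{H}}$ supplied by Proposition \ref{uniformb}.

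For the three $\mathcal{Q}$-terms: the bilinear piece $r^3\phi_u^2(\partial_v T\phi)^2/\Omega^2$ is handled by writing $\phi_u^2 = (\phi_u/r_u)^2 r_u^2$ and inserting $|\phi_u/r_u|^2 \leq B^2\epsilon^2 r^{-5}$; after substituting $\Omega^2 = 4|r_u|\kappa$ and comparing with the $(\partial_v T\phi)^2/\kappa^2$-weight of $\overline{\mathbb{I}}[T\phi]$, one verifies the ratio is uniformly bounded on $\mathcal{R}_\mathcal{H}$. The piece $r^3\phi_v^2(\partial_u T\phi)^2/\Omega^2$ is handled by decomposing $\phi_v$ through the identity above: its $(\phi_u/r_u)^2 r_v^2$ component reduces to the previous case (using $r_v = \kappa(1-\mu)$), while the $\kappa^2(T\phi)^2$ component uses the pointwise decay $|r^{3/2}T\phi| \leq B_{M,l}(\mathcal{N}[\phi](v_0) + (\int|\mathcal{Q}|)^{1/2})$ from Lemma \ref{mltp} and (\ref{rsTu}); the $(\int|\mathcal{Q}|)^{1/2}$ term is absorbed in the bootstrap coupling of Lemmas \ref{mltp}--\ref{eef}. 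The third $\mathcal{Q}$-piece $\pi r^2\Omega^2\,TT\phi\cdot\mathfrak{q}[T\phi]$ is subsumed into the multiplier terms below upon writing $|TT\phi| \leq C(|\partial_u T\phi|/\gamma + |\partial_v T\phi|/\kappa)$.

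For the multiplier-times-$\mathfrak{q}[T\phi]$ contributions, enumerate the seven terms of $\mathfrak{q}[T\phi]$ in (\ref{met}) and pair each with each of the three multipliers in $\mathcal{P}$. Each term of $\mathfrak{q}[T\phi]$ contains one of: a factor $(\phi_u/r_u)^2$ (yielding smallness $\epsilon^2 r^{-5}$), a factor $\phi$ (yielding $\epsilon r^{-3/2}$), or the product $\phi_v^2/\kappa^2 \cdot \partial_u(\phi_u/r_u)/r_u$ whose second factor gives $\epsilon r^{-7/2}$ by (\ref{secfree}). After extracting these pointwise small factors, Cauchy-Schwarz separates one factor of $T\phi$ or $\partial T\phi$ (to be absorbed into $\overline{\mathbb{I}}[T\phi]$) from any residual $\phi$, $\phi_u$, $\phi_v$ factor (absorbed into $\overline{\mathbb{I}}[\phi]$); a direct comparison of the resulting $r$-weights with those of the integrated decay norms, using $r \geq r_{min}$ at the horizon and the dominant $(1-\mu) \sim r^2/l^2$ growth at infinity, shows the matching ratio is uniformly bounded and the smallness factor $\epsilon$ survives.

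The main obstacle is the treatment of terms containing $\phi_v^2/\kappa^2$, for which the hypothesis alone yields only $O(1)$ control. In term 2 of (\ref{met}) the necessary $\epsilon$-smallness is borrowed from the accompanying second-derivative factor via (\ref{secfree}); in the first bilinear piece of $\mathcal{Q}$ it comes from the decomposition of $\phi_v$ into $T\phi$ and $\phi_u/r_u$ together with the pointwise decay of $T\phi$. The appearance of $(\int|\mathcal{Q}|)^{1/2}$ on the right-hand side of the $T\phi$ bounds is absorbed into the left-hand sides of (\ref{hiotp}) and (\ref{amoc2}) at the end of the joint argument for Propositions \ref{uniformb2} and \ref{intdec3}.
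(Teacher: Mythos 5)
Your overall strategy --- term-by-term pointwise comparison of $\mathcal{P}\,\Omega^2 r^2$ with the integrands of $\overline{\mathbb{I}}[T\phi]$ and $\overline{\mathbb{I}}[\phi]$, using the smallness of $\frac{\phi_u}{r_u}$ and $\phi$ from Proposition \ref{uniformb}, the second-derivative bound (\ref{secfree}), the hypothesis on $\frac{\phi_v}{\kappa}$, and Cauchy--Schwarz --- is exactly the argument the paper intends; its own proof is a one-sentence appeal to these same ingredients. You correctly observe that every term of $\mathfrak{q}[T\phi]$ in (\ref{met}) carries at least one $\epsilon$-small factor with strong $r$-decay, so all $\mathfrak{q}[T\phi]$-contributions (including $TT\phi\cdot\mathfrak{q}[T\phi]$, where the extra $r$-weight in $|TT\phi|$ relative to the multipliers of $\mathcal{P}$ is harmless) close, and that the quartic piece $r^3(\partial_u\phi)^2(\partial_v T\phi)^2/\Omega^2$ closes via $(\phi_u/r_u)^2\leq B\epsilon^2 r^{-5}$ and $\Omega^2=4|r_u|\kappa$.

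The one delicate point, which you rightly single out, is $r^3(\partial_v\phi)^2(\partial_u T\phi)^2/\Omega^2$: the hypothesis alone gives only the $O(1)$ ratio $\frac{2\pi^2 l^4}{r}(\phi_v/\kappa)^2$ against the $\overline{\mathbb{I}}[T\phi]$-integrand, with no $\epsilon$. Your fix --- decomposing $\phi_v=4\kappa\, T\phi+r_v\frac{\phi_u}{r_u}$ and invoking the pointwise decay of $T\phi$ --- imports a factor $(\mathcal{N}+(\int|\mathcal{Q}|)^{1/2})^2$ and hence a term $B(\mathcal{N}^2+\int\mathcal{P})\,\overline{\mathbb{I}}[T\phi]$ on the right. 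Since $\overline{\mathbb{I}}[T\phi]$ is not a priori small at this stage, this cannot be absorbed into $\int\mathcal{P}$, and deferring it to ``the bootstrap coupling'' leaves the Lemma, as the standalone inequality that is stated and then fed into (\ref{amoc2}), unproven: the entire purpose of the $\epsilon$-factor is to allow that absorption. The clean repair is to put the $(\int|\mathcal{Q}|)^{1/2}$-contaminated quantity on the other side of the Cauchy--Schwarz pairing: extract $(\partial_u(T\phi)/r_u)^2\leq Br^{-5}\left(\mathcal{N}^2+\int|\mathcal{Q}|\right)$ pointwise from Corollary \ref{rsTp} and pair the remaining weight $\phi_v^2/\kappa^2\cdot\Omega^2$ with $\overline{\mathbb{I}}[\phi]$, which is already known to be $\leq B\epsilon^2$ by (\ref{ind1}); the self-referential contribution then reads $B\epsilon^2\int\mathcal{P}$ and absorbs directly into the left-hand side. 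With that single modification your argument is complete and coincides with the paper's intended proof.
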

\begin{proof}
Inspecting the terms in (\ref{met}), this is an easy application of Cauchy's inequality after using the pointwise bounds for $\phi_v$ from the assumption and the smallness bounds we already established for $\frac{\phi_u}{r_u}$ and (\ref{secfree}).
\end{proof}

We can finally derive the second estimate  (\ref{ind2}) of Proposition \ref{intdec3} by a bootstrap on the size of $|\frac{\phi_v}{\kappa}|<1$. Recall the region $\widehat{\mathcal{B}} \left(\tilde{u}\right)$ from (\ref{bwh}).
Let
\begin{align}
u_{max} = \sup_u \Big( \textrm{ $\Big|\frac{\phi_v}{\kappa}\Big|<1$ \ holds in \ $\widehat{\mathcal{B}}\left(u\right)$} \, \Big) \, .
\end{align}
and $\mathcal{B} = \widehat{\mathcal{B}} \left(u_{\max} \right)$. By the local well-posedness, $\mathcal{B}$ is non-empty and by continuity of the pointwise norm, the region $\mathcal{B}$ is open. We show that $\mathcal{B}$ is also closed, which implies $\mathcal{B} = \mathcal{R}_{\mathcal{H}}$.
Combining Lemma \ref{eef} and Lemma \ref{eee} with Proposition \ref{intdecest}, we obtain
that for $\left(u,v\right) \in \mathcal{B}$ we have
\begin{align} \label{axl}
 \Big| r^\frac{5}{2} \frac{\partial_u \left(T\phi\right)}{r_u} \Big| +  \|T\left[\phi\right] \|^2_{H^1_{AdS}\left(u,v\right)} + \overline{\mathbb{I}} \left[T\phi\right] \left(D\left(u,v\right) \right)  \leq B_{M,l} \cdot \mathcal{N}^2 \left[\phi\right] \left(v_0\right).
\end{align}
Integrating $T\phi = 0 + \int du \ \partial_u \left(T\phi\right) du$ from infinity (where $T\phi$ vanishes) yields in view of the previous estimate,
\begin{align} \label{Tphidecay}
|T\left(\phi\right)| \leq  B_{M,l} \cdot \mathcal{N} \left[\phi\right] \left(v_0\right)  \cdot  r^{-\frac{3}{2}} \, .
\end{align}
Finally, from the relation $\phi_v = \kappa T\left(\phi\right) + \kappa \left(1-\mu\right) \frac{\phi_u}{r_u}$ and the pointwise bounds already established for the right hand side we obtain in particular $\frac{\phi_v}{\kappa}<\frac{1}{2}$ in $\mathcal{B}$. The bootstrap closes, hence (\ref{axl}) holds in all of $\mathcal{R}_{\mathcal{H}}$, which implies both the estimate  (\ref{ind2}) of Proposition \ref{intdec3} and (\ref{sep1}) of Proposition \ref{uniformb2}.

\subsection{Improved $r$-weighted bounds for $\phi$ and first derivatives}
It remains to establish (\ref{sep2}) of Proposition \ref{uniformb2}:
\begin{lemma} \label{improvelem}
We have, in $\mathcal{R}_\mathcal{H} \cap \{ r\geq r_X \}$ the bounds
\begin{align}
 | \phi \left(u,v\right) | \leq  C_\delta \cdot B_{M,l} \cdot \mathcal{N}\left[\phi\right] \left(v_0\right) \cdot  r^{max\left(2p-3,-\frac{5}{2}+\delta\right)} 
\end{align}
\begin{align}
\Big|r^2 \frac{\phi_u}{r_u} \left(u,v\right)\Big| +  |\phi_v \left(u,v\right)| \leq C_\delta \cdot  B_{M,l} \cdot \mathcal{N}\left[\phi\right] \left(v_0\right)  \cdot  r^{max\left(2p-4,-\frac{3}{2}+\delta \right)}
\end{align}
for any $\delta>0$.
\end{lemma}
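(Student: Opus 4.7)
\textbf{Proof plan for Lemma \ref{improvelem}.}

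The two key new inputs in hand are the improved pointwise decay $|T\phi|\leq B_{M,l}\cdot\mathcal{N}[\phi](v_0)\cdot r^{-3/2}$ from \eqref{Tphidecay} and the weighted derivative bound from \eqref{axl}. I use them to push the $r$-decay of the first-order quantities $\zeta/\nu$, $\phi_v$ and finally of $\phi$ itself past the baseline rates given by Proposition \ref{uniformb} and Lemma \ref{erb}.

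\textbf{Step 1: the key algebraic identity.} Starting from $T\phi = \tfrac{1}{4\kappa}\phi_v + \tfrac{1}{4\gamma}\phi_u$, and using $\kappa/\gamma = -r_v/r_u$ together with $r_v = \kappa(1-\mu)$, I obtain the identity
\[
\phi_v \;=\; 4\kappa\, T\phi \;+\; \kappa(1-\mu)\,\tfrac{\phi_u}{r_u}.
\]
Substituting this for $\phi_v$ in the transport equation \eqref{uf} for $r^n\,\zeta/\nu$, the new source term combines with the coefficient $\kappa(2-n)r/l^2$ already present in the linear part of \eqref{uf}, and a direct computation using $1-\mu = 1 - 2\varpi/r + r^2/l^2$ yields a modified transport equation whose $\zeta/\nu$-coefficient at infinity behaves to leading order like $\kappa(3-n)\,r^{n+1}/l^{2}$. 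In other words, the substitution shifts the redshift-at-infinity threshold used in Lemma \ref{erb} from $n<2$ up to $n<3$, at the cost of an extra inhomogeneous term proportional to $T\phi$ whose size is controlled by \eqref{Tphidecay}.

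\textbf{Step 2: integration.} For the improved exponent (corresponding to $n=3/2+s/2$ with $s=\min(\sqrt{9+8a},2-2\delta)$, so that $n\leq 5/2-\delta<3$) the modified equation enjoys a genuine redshift factor. Integrating along a constant-$u$ ray from $v=v_0$ and changing variables $v\mapsto r$, the homogeneous exponential behaves like $\exp(-c\,r^{n})$ and absorbs any polynomial $r$-weight. The two source contributions $-4r^n\kappa T\phi$ and $(2\kappa a /l^2)r^{n+1}\phi$ are then estimated using $|T\phi|\lesssim r^{-3/2}\mathcal{N}[\phi](v_0)$ and $|\phi|\lesssim r^{-3/2}$ from Proposition \ref{uniformb}. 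For the $\phi$ source one cannot use a naive Cauchy--Schwarz (the pointwise smallness of $\phi$ is insufficient) and one must integrate by parts in $v$ against the exponential, exactly as in \eqref{expip}, producing a boundary contribution at $v=v_0$ that is absorbed by the data norm $\mathcal{N}[\phi](v_0)$ and a volume contribution controlled by $\|\phi\|_{H^1_{AdS,deg}}$. The outcome is the weighted bound
\[
\Big|\,r^{3/2+s/2}\,\tfrac{\zeta}{\nu}\,\Big|\;\leq\;C_\delta\cdot B_{M,l,a}\cdot\mathcal{N}[\phi](v_0),
\]
which is the $\phi_u/r_u$ part of the second inequality in the statement. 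Feeding this back into the identity of Step 1 and using $\kappa(1-\mu)\sim r^2/l^2$ at infinity yields the matching bound for $\phi_v$.

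\textbf{Step 3: bound on $\phi$, and the main obstacle.} Using $\phi\to 0$ at $\mathcal{I}$, I integrate $\phi_u = r_u\cdot(\phi_u/r_u)$ in $u$ from null infinity at fixed $v$; in the change of variable $d\bar u = d\bar r/r_u$ this becomes
\[
\phi(u,v) \;=\; -\int_{r(u,v)}^{\infty}\tfrac{\phi_u}{r_u}\,d\bar r,
\]
and the bound of Step 2 integrated in $r$ gives the first inequality of the lemma. The main subtle point is the algebraic reorganisation of Step 1, where the correct book-keeping of the $\varpi$-terms in $1-\mu$ is what converts the obstruction $(2-n)$ into the usable $(3-n)$; without this manipulation, the range $s\geq 1$ cannot be reached by a direct transport argument. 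The secondary technical point is the integration-by-parts against the redshift exponential in Step 2, which is the same device already employed in Lemma \ref{zetahozbound} and which is necessary because $\phi$ itself is not small enough to enter as a bare source.
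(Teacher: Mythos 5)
There is a genuine gap in Step 2, and it is precisely the point on which the paper's proof turns. After your substitution $\phi_v = 4\kappa\,T\phi + \kappa(1-\mu)\frac{\phi_u}{r_u}$ the transport equation for $r^n\frac{\zeta}{\nu}$ still contains the zeroth-order source $\frac{2\kappa a}{l^2}r^{n+1}\phi$ untouched. Near infinity the homogeneous propagator of your modified equation is \emph{not} $\exp(-c\,r^n)$: since $\rho\,dv \approx \frac{2\,dr}{r}$ and $\frac{\lambda}{r}dv=\frac{dr}{r}$, the propagator from $v'$ to $v$ is only the polynomial weight $\left(r(v')/r(v)\right)^{3-n}$. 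Feeding in the only available a priori bound $|\phi|\lesssim r^{-3/2}$, the Duhamel integral of the $\phi$-source is
\begin{align}
\frac{1}{r^{3-n}}\int^{r} (r')^{3-n}\,(r')^{n+1}\,(r')^{-3/2}\,\frac{dr'}{(r')^{2}} \;\sim\; r^{\,n-3/2}\,, \nonumber
\end{align}
so you recover only $\frac{\zeta}{\nu}\lesssim r^{-3/2}$ — no improvement over Lemma \ref{erb}. The integration by parts of (\ref{expip}) does not rescue this: with $\tilde\rho\approx(3-n)\kappa r/l^2$ the prefactor $\frac{2\kappa a r^{n+1}\phi}{l^2\tilde\rho}\approx\frac{2a}{3-n}r^n\phi$ produces a boundary term at the endpoint $(u,v)$ of size $r^n\phi(u,v)\sim r^{n-3/2}$ and a volume term $\partial_v(r^n\phi)\ni n r^{n-1}\lambda\phi$ of exactly the same order as the term you started with, so the device is circular here. (In Lemma \ref{zetahozbound} that trick is used near the horizon to avoid needing pointwise smallness of $\phi$, not to gain $r$-decay.)

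The missing idea is the renormalization that the paper's proof is built on: one must work with $A=r^n\left(\frac{\zeta}{\nu}+2p\,\phi\right)$, where $p=\frac34-\sqrt{\frac{9}{16}+\frac{a}{2}}$ is a root of the indicial equation $2p^2-3p=a$. With this choice the $O\!\left(\kappa\,r^{n+1}\phi/l^2\right)$ contributions cancel identically to leading (and subleading) order in $1/r$, leaving a source whose $\phi$-part is $O(r^{n-1}\phi)=O(r^{n-5/2})$ and hence integrable, while the $T\phi$-part is handled exactly as you propose via (\ref{Tphidecay}). The price is that the good coefficient becomes $\kappa\frac{r}{l^2}(n+2p-3)$ rather than your $(n-3)$, which is why the admissible weight is $n=\min(3-2p,\frac52-\delta)$ and why the exponent $2p-3$ appears in the statement. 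The logic then runs in the opposite order from yours: one first bounds the \emph{combination} $\frac{\zeta}{\nu}+2p\phi$, then recovers improved decay for $\phi$ by integrating $\partial_u(r^{2p}\phi)=r^{2p-1}r_u\left(\frac{\zeta}{\nu}+2p\phi\right)$ from infinity, and only afterwards deduces the separate bounds on $\frac{\zeta}{\nu}$ and $\phi_v$. Your order (first $\zeta/\nu$ alone, then $\phi$) cannot work, because the individual decay of $\zeta/\nu$ is itself limited by the decay of $\phi$ through the very source term at issue.
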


\begin{proof}
On the initial data and on $r=r_X$ these bounds hold by assumption and Proposition \ref{uniformb} respectively. Let now
\begin{align}
p = \frac{3}{4} - \sqrt{\frac{9}{16} - \frac{-a}{2}} \, .
\end{align}
One derives the following evolution equation for 
\begin{align}
A = r^n \frac{\zeta}{\nu} + 2p r^n \phi \, :
\end{align}
\begin{align} \label{icvd}
\partial_v A &= A\left[ \frac{\lambda}{r} \left(n+2p-1\right) - \rho \right] + f  \nonumber \\
f &= \left(2p-1\right) r^n \kappa T \left(\phi\right) + 2\phi \kappa r^{n+1} p \left[ \frac{1}{r^2} \left(1-2p\right) + \frac{4\varpi p}{r^3} - 8\pi \frac{a}{l^2} \phi^2 \right]
\end{align}
where we recall the redshift weight $\rho$ from (\ref{rsweight}). This computation exploits an important cancellation: The zeroth order term in $f$ decays better (in $r$) than naively expected while we have already shown improved decay for $T\left(\phi\right)$ by our commutation argument. Note in this context that the conformally coupled case, $p=\frac{1}{2}$ is special.\footnote{In particular, commutation by $T$ is not necessary to obtain the improved estimates of the Proposition.}
Noting that
\begin{align}
\frac{\lambda}{r} \left(n+2p-1\right) - \rho = \frac{\kappa}{r} \left( \frac{n-3 + 2p}{l^2} r^2 + \left(n+2p-1\right) + \textrm{terms decaying in $r$}\right) \nonumber
\end{align}
we choose $n=\min \left(3-2p, \frac{5}{2}-\delta\right)$. Note that for $n=3-2p$ we have (using that $\kappa \leq 8d^{-\frac{1}{3}}\frac{l^2 \lambda}{r^2}$ in $r\geq r_X$)
\begin{align}
\int_{v_0}^v \mathbf{1}_{r\geq r_X} \left(\frac{\lambda}{r} \left(n+2p-1\right) - \rho\right) \left(u,\bar{v}\right) d\bar{v} \leq B_{M,l}
\end{align}
uniformly, while for $n=\frac{5}{2}-\delta$ we obtain an exponential decay factor in (\ref{icvd}). Either way, integrating (\ref{icvd}), one easily obtains the following estimate for $A$  in all of $r\geq r_X$:
\begin{align}
|A\left(u,v\right)| \leq B_{M,l} \cdot  \int_{v_0}^v \mathbf{1}_{r\geq r_X} |f| \left(u,\bar{v}\right) d\bar{v} \, .
\end{align}
To estimate this, we exploit the pointwise bounds available for both $\phi$ and $T\phi$. For instance, from (\ref{Tphidecay}),
\begin{align}
\int_{v_0}^v \mathbf{1}_{r\geq r_X} r^n|\kappa T\phi| \left(u,\bar{v}\right) d\bar{v}  \leq B_{M,l} \cdot \mathcal{N}\left[\phi\right] \left(v_0\right) \int_{v_0}^v \mathbf{1}_{r\geq r_X} r^{n-\frac{3}{2}} \frac{r_v}{r^2} \left(u,\bar{v}\right) d\bar{v} \nonumber \\
\leq B_{M,l} \cdot \mathcal{N}\left[\phi\right] \left(v_0\right)C_\delta \nonumber \, .
\end{align}
The $\phi$-term can be estimated in the same way. What we have shown so far is
\begin{align} \label{imlat}
\Big|r^{\min\left(3-2p,\frac{5}{2}-\delta\right)} \left( \frac{\zeta}{\nu} + 2p \phi \right) \Big| \leq B_{M,l} \cdot \mathcal{N}\left[\phi\right] \left(v_0\right)C_\delta   \, ,
\end{align}
which in view of $2p<\frac{3}{2}$ is an improvement over previous estimates of the two summands alone.

With the improved decay for the quantity above one can re-estimate $\phi$ from infinity by integrating $r^{2p} \phi$. The latter quantity vanishes at infinity, since $2p< \frac{3}{2}$, which is the decay we already established for $\phi$.
\begin{align}
r^{2p} \phi \left(u,v\right) = \int_{u_{\mathcal{I}\left(v\right)}}^u  \frac{\partial_u \left(r^{2p}\phi\right)}{\nu} \nu  \left(\bar{u},v\right) d\bar{u} \nonumber \\  \leq B_{M,l} \cdot \mathcal{N}\left[\phi\right] \left(v_0\right) \int_{u_{\mathcal{I}\left(v\right)}}^u r^{\max\left(4p-4, 2p-\frac{7}{2}+\delta \right)} \left(-\nu\right) \left(\bar{u},v\right) d\bar{u} \nonumber \\
\leq C_\delta \cdot B_{M,l} \cdot \mathcal{N}\left[\phi\right]\left(v_0\right) r^{\max\left(4p-3, 2p-\frac{5}{2}+\delta \right)} \, .
\end{align}
and hence
\begin{align} 
| \phi| \left(u,v\right) \leq C_\delta \cdot B_{M,l} \cdot \mathcal{N}\left[\phi\right]\left(v_0\right) \cdot  r^{max\left(2p-3,-\frac{5}{2}+\delta\right)} \, ,
\end{align}
which is the first estimate of the Lemma. The second immediately follows by combining it with the estimate (\ref{imlat}). 
For the bound on $\phi_v$ note that $\phi_v = \frac{\Omega^2}{-r_u} T\left(\phi\right) - r_v \frac{\phi_u}{r_u}$ and use the previous bounds.
\end{proof}
\begin{remark} \label{moremore}
For $3-2p>\frac{5}{2}$, one can actually improve the decay further by another commutation with $T$ (which will improve the pointwise decay for $T\phi$ to what we have just shown for $\phi$) to establish the heuristically expected $r^{3-2p}$ decay for $\phi$. Since the gain is not needed, we do not concern ourselves with optimizing the result in that direction.
\end{remark}
\section{Exponential decay in $v$} \label{expvdec}
Given the estimate (\ref{ind1}), we can show that the solution decays exponentially in $v$. Define the flux
\begin{align}
\mathbb{F} \left(v\right) = \int_{u_{\mathcal{I}}}^{u_{\mathcal{H}}} \left[\frac{\left(\partial_u \phi\right)^2}{-r_u} r^4 + \phi^2 \ r^2 \left(-r_u\right) \right] \left(\bar{u},v\right) d\bar{u} 
\end{align}
and the region $D^\star \left(v_1,v_2\right) = D\left(u_{\mathcal{H}},v_2\right) \cap \{ v \geq v_1 \}$.
Note that 
\begin{align}
\bar{\mathbb{I}} \left[\phi\right] \left(D^\star\left(v_1,v_2\right)\right) \geq b_{M,l} \int_{v_1}^{v_2} d\bar{v} \ \mathbb{F}\left(\bar{v}\right)
\end{align}
where $b_{M,l}$ is a small positive constant depending only on $M$ and $l$.
Applying the estimate (\ref{ind1}) in the region $D^\star\left(v_1,v_2\right)$ (i.e.~not starting from $v=v_0$ but from $v=v_1$) yields
\begin{align}
\mathbb{F} \left(v_2\right) + b_{M,l} \int_{v_1}^{v_2} d\bar{v} \  \mathbb{F}\left(\bar{v}\right) \leq B_{M,l,a} \cdot \mathbb{F} \left(v_1\right) 
\end{align}
which implies exponential decay for $\mathbb{F} \left(v\right)$. From the estimate
\begin{align}
|\phi \left(u,v\right) | \leq  B_{M,l,a} \cdot \frac{1}{r^\frac{3}{2}} \cdot \sqrt{\mathbb{F} \left(v\right)} \, ,
\end{align}
which easily follows from Cauchy Schwarz, we conclude that $|\phi|$ decays exponentially in $v$, which is the last claim of Theorem \ref{maintheorem}
\appendix

\section{Absence of stationary solutions in the linear case} \label{noha}
We present here an easy argument to establish the non-existence of stationary solutions for the wave equation on Schwarzschild-AdS backgrounds satisfying the boundary conditions of \cite{HolzegelAdS, Holzegelwp}. 

Assume that there was a stationary solution $\psi$ of (\ref{wap}) on a fixed AdS-Schwarzschild background,
\begin{align}
g = - F\left(r\right)dt^2 + F\left(r\right)^{-1} dr^2 +r^2 \left(d\theta^2 + \sin^2 d\varphi^2\right) \ \  , \ \ F\left(r\right) = 1- \frac{2M}{r} + \frac{r^2}{l^2} \, . \nonumber
\end{align}
In view of $\partial_t \psi =0$, $\psi$ must satisfy
\begin{align}
\frac{1}{r^2} \partial_m \left(r^2 g^{mn} \partial_n \psi\right) + \frac{2a}{l^2} \psi = 0  \textrm{ \ \ \ \ \ with $m,n = \{r, \theta, \varphi\}$} \, .
\end{align}
Multiplying this equation by $r^2 \psi$ and integrating over a constant $t$-slice with $dr d\theta d\varphi$ we obtain after integrating by parts,
\begin{align}
\int dr \ d\theta \ d\varphi \  r^2 \left[F\left(r\right) \left(\partial_r \psi\right)^2 + r^2 g^{AB} \partial_A \psi \partial_B \psi - \frac{2a}{l^2} \psi^2 \right] = 0 \, .
\end{align}
Note that the boundary-terms vanish both at infinity (in view of the boundary conditions of \cite{HolzegelAdS}) and at the horizon (since $g^{rr} = F\left(r\right)=0$ there) in this computation.
By the Hardy inequalities proven in \cite{HolzegelAdS} this implies that $\psi = 0$, as the zeroth order term can be absorbed by the derivative term for $-a>\frac{9}{8}$. Hence there are no non-trivial stationary solutions for the wave equation on Schwarzschild satisfying the boundary conditions. 
\bibliographystyle{hacm}
\bibliography{thesisrefs}
\end{document}